\theoremstyle{plain}
\newtheorem{theorem}{Theorem}
\newtheorem{lemma}[theorem]{Lemma}
\newtheorem{proposition}[theorem]{Proposition}
\newtheorem{definition}{Definition}
\theoremstyle{remark}
\newtheorem{remark}[theorem]{Remark}
\title[Condensation and non-condensation times]{Condensation and non-condensation times for   4-wave kinetic equations}
\author[G. Staffilani]{Gigliola Staffilani
}
\address{Department of Mathematics, Massachusetts Institute of Technology, Cambridge, MA 02139, USA}
\email{gigliola@math.mit.edu} 
\thanks{G.S. is  funded in part by  the NSF grants DMS-2052651, DMS-2306378 and the Simons Foundation through the Simons Collaboration on Wave Turbulence.}
\author[M.-B. Tran]{Minh-Binh Tran}
\address{Department of Mathematics, Texas A\&M University, College Station, TX 77843, USA}
\email{minhbinh@tamu.edu} 
\thanks{M.-B. T is  funded in part by  a   Humboldt Fellowship,   NSF CAREER  DMS-2303146, and NSF Grants DMS-2204795, DMS-2305523,  DMS-2306379.}
\begin{document}
\date{\today}

\begin{abstract} Inspired by the pioneering work of Escobedo and Velazquez \cite{EscobedoVelazquez:2015:FTB,EscobedoVelazquez:2015:OTT}, we prove that solutions of 4-wave kinetic equations, under very general forms of  dispersion relations, develop   condensation at the origin in finite time,  under weaker conditions on the initial data  than the ones considered in \cite{EscobedoVelazquez:2015:FTB,EscobedoVelazquez:2015:OTT}. We also provide some estimates on the non-condensation times of the solutions. 

\end{abstract}

\maketitle

 \tableofcontents

\section{Introduction}\label{intro} 
The wave turbulence theory, that describe  the dynamics of   nonlinear waves  out of  thermal equilibrium, 
  has its  origin in the works of  Peierls \cite{Peierls:1993:BRK,Peierls:1960:QTS}, Brout-Prigogine \cite{brout1956statistical},  Zaslavskii-Sagdeev \cite{zaslavskii1967limits}, Hasselmann \cite{hasselmann1962non,hasselmann1974spectral},  Benney-Saffman-Newell \cite{benney1969random,benney1966nonlinear},  Zakharov \cite{zakharov2012kolmogorov}. Having a vast range of physical applications, from  Alfv\'en wave turbulence in solar wind to planetary Rossby waves, from water surface gravity and capillary waves to inertial waves due to rotation and internal waves on density stratifications, from Bose-Einstein condensates (BECs) to nonlinear optics, this theory has  a common mathematical framework that   describes the dynamics of spectral energy transfer through  probability densities associated with weakly non-linear  wave systems. These probability densities satisfy the so-called  wave kinetic equations. We refer to books of  Zakharov, Lvov, Falkovich \cite{zakharov2012kolmogorov} and Nazarenko
\cite{Nazarenko:2011:WT}, and the review paper of Newell and Rumpf \cite{newell2011wave} for more discussion on the topic.

The focus of our work is the so-called   4-wave kinetic equation 
\begin{eqnarray}\label{4wave}
		\partial_t f &\ = \ & 	\mathfrak C \left[ f\right],\ \ \ f(0,k)=f_0(k),\ \ \ k\in\mathbb{R}^3,\ \ \ t\in [0,\infty),\\ 
		\mathfrak C \left[ f\right]  
		&\ =\ & \iiint_{\mathbb{R}^{9}}\mathrm{d}k_1\,\mathrm{d}k_2\,\mathrm{d}k_3  \delta(k+k_1-k_2-k_3)\delta(\omega + \omega_1 -\omega_2 - \omega_3)[f_2f_3(f_1+f)-ff_1(f_2+f_3)] ,\nonumber
\end{eqnarray}
where $\omega,\omega_1,\omega_2,\omega_3$  is the shorthand notation for  $\omega(k), \omega(k_1), \omega(k_2), \omega(k_3)$, and $f,f_1,f_2,f_3$ is the shorthand notation for $f(k), f(k_1), f(k_2), f(k_3)$. 

This equation has been rigorously studied in the pioneering and ground breaking works of Escobedo-Velazquez \cite{EscobedoVelazquez:2015:FTB,EscobedoVelazquez:2015:OTT}  in which several important and fundamental results have been obtained for the case $\omega(k)=|k|^2$. A local wellposedness theory of the equation has later been studied in \cite{germain2017optimal}. The stability and cascades for the Kolmogorov-Zakharov spectrum of the equation have been obtained in \cite{collot2024stability}. The near equilibrium stability and instability has recently been investigated in \cite{menegaki20222,escobedo2024instability}. The local well-posedness for the MMT model as well as the   convergence rates of discrete solutions have been studied in \cite{germain2023local} and \cite{dolce2024convergence}. 

The current work is the second paper in our series of 2 papers that are devoted to the study  of equation \eqref{4wave}, in which the dispersion relation $\omega$ is allowed to take a very general form. In the first work \cite{StaffilaniTranCascade1}, we have provided a local well-posedness result of mild solutions for the equation as well as a proof of the transfer of energies to higher wave numbers. In the current work, we prove that there is a finite time condensation (or equivalently, the appearance of a delta function at the origin) for a more general class of initial data than the one considered in  the previous work \cite{EscobedoVelazquez:2015:FTB,EscobedoVelazquez:2015:OTT}, and with   more general dispersion relations. We refer to Remark  \ref{Remark2}  for the comparisons between the results. In brief, the dispersion relation considered in \cite{EscobedoVelazquez:2015:FTB,EscobedoVelazquez:2015:OTT}  corresponds to the special case $\alpha=2$ (see \eqref{Settings2}-\eqref{Settings3}  and Remark \ref{remarkalpha} below  for the definition of $\alpha$) and the initial conditions considered in \cite{EscobedoVelazquez:2015:FTB,EscobedoVelazquez:2015:OTT} require  a concentration around the origin (see Figure \ref{Fig2}), while our initial conditions do not need this strong concerntration (see Figure \ref{Fig1} and Remark \ref{Remark2}).  The proof of our main theorem is based on a domain decomposion technique, using a numerical ``divide and conquer"  idea   (see \cite{halpern2009nonlinear,Lions:1989:OSA,toselli2004domain}), in which   the complex operators  are decomposed into    independent operators on smaller subdomains as well as some growth lemmas used in the theory of non-local differential equations. In addition, the use of domain decomposition allows us to consider more general types of dispersion relations with a wider range for the parameter $\alpha$: $\alpha\in(1,2]$. Note that the special case $\alpha=1$ is not expected to have a finite time condensation since the dispersion relation in this case is a straight line. The solutions in the case $\alpha>2$ will have slightly different behaviors that will be reported  in a separate work.  We are also able to provide some estimates on the length  of the time, at which there is no concentration of solutions around the origin.

Besides the 4-wave kinetic equations, also the 3-wave kinetic equations  play  an important role in the theory of weak turbulence, and they have  been studied    in \cite{GambaSmithBinh} for stratified  flows in the ocean, in \cite{nguyen2017quantum,soffer2020energy,walton2022deep,walton2023numerical,walton2024numerical} for capillary waves, in \cite{cortes2020system,EPV,escobedo2023linearized1,escobedo2023linearized,ToanBinh,nguyen2017quantum,soffer2018dynamics} for Bose-Einstein Condensates, in \cite{AlonsoGambaBinh,CraciunBinh,EscobedoBinh,GambaSmithBinh,tran2020reaction,PomeauBinh} for the phonon interactions in anharmonic crystal lattices and in \cite{rumpf2021wave} for beam waves. In rigorously deriving wave kinetic equations,  significant progresses have been made and we refer to   \cite{buckmaster2019onthe,buckmaster2019onset,collot2020derivation,collot2019derivation,deng2019derivation,deng2021propagation,deng2023long,deng2022wave,dymov2019formal,dymov2019formal2,dymov2020zakharov,dymov2021large,germain2024universality,grande2024rigorous,hani2023inhomogeneous,hannani2022wave,LukkarinenSpohn:WNS:2011,staffilani2021wave,ma2022almost} and the references therein.

{\bf Acknowledgment} We would like to express our gratitude to Prof. J. J.-L. Velazquez, Prof. H. Spohn, Prof. J. Lukkarinen, Prof. M. Escobedo for fruitful discussions on the topics.
\section{The Settings}
We suppose that for $k\in\mathbb{R}^3$, $\omega(k)=\omega(|k|)$ is convex in $|k|$. Moreover,
\begin{equation}
	\label{Settings3}\begin{aligned}
		&\omega(|k|) \ge C_\omega|k|^\alpha, \forall k\in\mathbb{R}^d,\\
		&\omega(|k|) \le C_\omega'|k|^{\alpha'}, \forall k\in\mathbb{R}^d,|k|<1, 
	\end{aligned}
\end{equation}  
	for  constants $2\ge\alpha >1,\alpha\ge \alpha'\ge 1,C_\omega,C_\omega'>0$. 
We set
\begin{equation}
	\label{Settings1}\mho=\frac{|k|}{\omega'(|k|)}.
\end{equation}  
We assume that $\mho$ is a function from $[0,\infty)$ to $[0,\infty)$ of the variables $\omega$ and $|k|$. Moreover,   $\mho$ is assume to be increasing in $|k|$ and $\omega$ 
	and for  sufficiently small $ 1>\omega\ge 0$
	\begin{equation}\
		\label{Settings2}\begin{aligned}& \ \check{C}_\mho\omega^\beta \ \le \ \mho(\omega),\end{aligned}\end{equation}
		where $\check{C}_\mho\ge0,\frac{2-\alpha}{\alpha}\ge\beta\ge 0$ are constants independent of $k$. Moreover, \begin{equation}
		\label{Settings2a} \mho\le {C}_\mho^1 |k|^\iota, \mbox{ for all } k, \end{equation}   where ${C}_\mho^1\ge0,1\ge \iota\ge 0$ are  constants independent of $k$.


 We also define
 \begin{equation}
 	\label{FDefinition} F(t,k)=f(t,k)|k|\mho(k),
 \end{equation}
	\begin{equation}\label{MassEnergy}
	\begin{aligned}
		\mathscr M  \ =	\ &	\int_{\mathbb{R}^3}\mathrm{d}k  f(0,k),
	\  \ \mathscr E\ =	\ &	\int_{\mathbb{R}^3}\mathrm{d}k  f(0,k)  \omega,
	\end{aligned}
\end{equation}
 and the non-condensation time set 
  \begin{equation}	\label{NoCondensateTime}
 \digamma \ := \ {\Big\{}t\in[0,\infty)~~~ \large|~~~ \int_{\{0\}}\mathrm{d}k f(t,k)= 0 {\Big\}}.
  \end{equation}

 Next, we define the following subset of the non-condensation time set
 \begin{equation}\label{NoCondensateTime1}
 	\begin{aligned}
 		\digamma^*_{\mathscr N,\theta,R}\ := & \ \Big\{t\in\digamma~~~ \big|~~~ \forall i\in\{0,\cdots,\mathscr N-3\},\\ & \int_{[iR/\mathscr N,(i+3)R/\mathscr N)}\mathrm{d}\omega F(t) < (1-\theta)\int_{[0,R)}\mathrm{d}\omega F(t)\Big\},
 	\end{aligned}
 \end{equation}
for $1/100>\theta>0,R>0,\mathscr{N}\in{\mathbb{N}}, \mathscr{N}\ge 1000$.
\begin{remark}\label{Remark1}
{\it	  $[0,\infty)\backslash\digamma$ is the set of all times $t$ such that $f$ is concentrated around the origin, namely
$$	\int_{\{0\}}\mathrm{d}k f(t,k)  \ne 0.$$

	The set $\digamma^*_{\mathscr N,\theta,R}$ describes the times $t\in\digamma$ such that $F$ is spread out to all   intervals $[iR/\mathscr N,(i+3)R/\mathscr N)$ of $[0,R)$ instead of concentrating in one small interval.}
\end{remark}
 For 3 numbers $x,y,z$ in $\mathbb{R}$, we will also use the notations
\begin{equation}\label{Mid}
\mathrm{mid}\left\{ x,y,z\right\}:=t\in\left\{ x,y,z\right\} \backslash\{\max\left\{ x,y,z\right\},\min\left\{ x,y,z\right\}\}.
\end{equation} 
Let us define  
\begin{equation}\label{T0}
	T_0 \ := \ \sup\Big\{T ~~ \Big| ~~~ \int_{\{0\}}\mathrm{d}k f(t,k) = 0, ~~~~~~\forall t\in[0,T)\Big\}
\end{equation}
which, in the case that $T_0<\infty$, can be seen as the first condensation time.

\begin{remark}\label{remarkalpha}
	There are many physical examples of $\omega$ that satisfies the above conditions:		The  dispersion relation $\omega(k)=|k|^\alpha$ with $1<\alpha\le 2$. In this case $\mho=\frac1\alpha |k|^{2-\alpha}=\frac1\alpha\omega^{\frac{2-\alpha}{\alpha}}$. Note that the dispersion relation considered in the fundamental works \cite{EscobedoVelazquez:2015:FTB,EscobedoVelazquez:2015:OTT} corresponds to the special case $\alpha=2,$ $\beta=0$. 
	
\end{remark}

We recall the following definition of \cite{StaffilaniTranCascade1}. 
\begin{definition}\cite[Definition 1]{StaffilaniTranCascade1}
	We say that $f(t,k)=f(t,|k|)\in C^1([0,\infty), L^1(\mathbb{R}^3))$ is a global mild radial solution of \eqref{4wave} with a radial initial condition $f_0(k)=f_0(|k|) \ge0$ if $f(t,k)\ge0$ and
	for all $\phi\in C_c^2([0,\infty))$, we have
	\begin{equation}\label{4wavemild}
		\begin{aligned}
			\int_{\mathbb{R}^3}\mathrm{d}k f(t,k)\phi(|k|)\ = \ & 	\int_{\mathbb{R}^3}\mathrm{d}k f(0,k)\phi(|k|) \  + \ \int_0^t\int_{\mathbb{R}^3}\mathrm{d}k	\mathfrak C \left[ f\right]\phi(|k|),
		\end{aligned}
	\end{equation}
	for all $t\in\mathbb{R}_+$. 
\end{definition}

The main theorem of \cite{StaffilaniTranCascade1} states as follows.
\begin{theorem}
	\label{Theorem1} Suppose that $\omega$ satisfies the above assumptions (Assumption \eqref{Settings2} is not needed). 
	Let $f_0(k)=f_0(|k|)\ge 0$ be an initial condition satisfying 
	\begin{equation}
		\label{Theorem1:1} \int_{\mathbb{R}^3}\mathrm{d}k f_0(k) \ = \ \mathscr{M}, \ \ \ \ \ \int_{\mathbb{R}^3}\mathrm{d}k f_0(k) \omega(k) \ = \ \mathscr{E}. 
	\end{equation}
	There exists at least a global mild radial solution $f(t,k)$ of \eqref{4wave} in the sense of \eqref{4wavemild} such that
	\begin{equation}
		\label{Theorem1:2} \int_{\mathbb{R}^3}\mathrm{d}k f(t,k) \ = \ \mathscr{M},   \ \ \ \ \ \int_{\mathbb{R}^3}\mathrm{d}k f(t,k) \omega(k) \ = \ \mathscr{E},
	\end{equation}
	for all $t\ge 0$. Suppose further that 
	\begin{itemize}
		\item [(I)] When $\omega(k)=|k|^2$, the support of $f_0$ has a non-empty interior;
		\item [(II)] When $\omega(k)\ne|k|^2$,   the origin belongs to the interior of the support of $f_0$.
	\end{itemize}
	For any $\mathfrak R>0$, we have the following energy cascade phenomenon  
	\begin{equation}\label{Theorem1:3}\begin{aligned}
			&\lim_{t\to\infty}\int_{\mathbb{R}^3}^\infty	  d\omega f\left(t,k\right)   \omega(k) \chi_{B(O,\mathfrak R)}(k)
			\ = \ 0.\end{aligned}
	\end{equation}

\end{theorem}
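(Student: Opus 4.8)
\emph{Proof proposal.} The plan is to prove the two assertions separately: first the existence of a global mild radial solution with the conservation laws, then, under \textup{(I)}/\textup{(II)}, the cascade \eqref{Theorem1:3}.

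For existence I would use regularization and compactness. Replace the kernel $\delta(k+k_1-k_2-k_3)\delta(\omega+\omega_1-\omega_2-\omega_3)$ by a smooth bounded compactly supported rotation-invariant kernel $K_n$ --- mollify the Dirac masses at scale $1/n$ and cut the collision integral to $\{1/n\le|k_i|\le n\}$, removing the degeneracy of $\mho$ and of the volume element near $k=0$ and the non-compactness at $|k|=\infty$. For fixed $n$ the operator $\mathfrak C_n$ is cubic with bounded kernel on a compact set, hence locally Lipschitz on $L^1(\mathbb R^3)$, and Banach's fixed point theorem yields a local radial solution $f_n$; positivity propagates because the loss is linear in $f_n(k)$ (write $\partial_t f_n(k)=G_n(k)+f_n(k)L_n(k)$ with $G_n\ge 0$ and integrate $\partial_t f_n\ge f_nL_n$). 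Testing against $\phi\equiv 1$ and $\phi=\omega$, using the identity $f_2f_3(f_1+f)-ff_1(f_2+f_3)=ff_1f_2f_3(f^{-1}+f_1^{-1}-f_2^{-1}-f_3^{-1})$ together with the symmetrizations $k\leftrightarrow k_1$, $k_2\leftrightarrow k_3$, $(k,k_1)\leftrightarrow(k_2,k_3)$ and the resonance relations, gives $\int f_n\,dk=\mathscr M$ and $\int f_n\omega\,dk=\mathscr E$ for all $t$, hence global existence. I would then let $n\to\infty$ by weak-$L^1$ compactness: tightness from the energy bound and $\omega\ge C_\omega|k|^\alpha$, equi-integrability from the entropy bound (the functional $\int\log f\,dk$ is nondecreasing along $\mathfrak C_n$ --- the same symmetrization gives the production $\tfrac14\iiiint K_n\,ff_1f_2f_3(f^{-1}+f_1^{-1}-f_2^{-1}-f_3^{-1})^2\ge 0$ --- and $\le\mathscr M$ since $\log f\le f$). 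A subsequence converges weakly in $L^1_{\mathrm{loc}}$ to a mild solution $f$, the mollified operators being stable under this convergence; mass passes to the limit directly, and for energy one tests with $\omega\wedge\lambda$, uses the uniform energy bound to exclude instantaneous loss to $|k|=\infty$, and sends $\lambda\to\infty$ by monotone convergence. The equi-integrability bound is uniform on compact time intervals, so $f(t,\cdot)\in L^1$ for each $t$ and $f\in C^1([0,\infty),L^1)$ follows from the equation.

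For the cascade I would first pass to the one-dimensional equation in the energy variable: with $F(t,\omega)=f(t,k)|k|\mho(k)$ as in \eqref{FDefinition}, $\partial_t F$ becomes an integral over $\{\omega+\omega_1=\omega_2+\omega_3\}$ against a weight coming from the angular/momentum integration, and $\mathscr M\propto\int F\,d\omega$, $\mathscr E\propto\int\omega F\,d\omega$. Here \textup{(I)}/\textup{(II)} enter as nondegeneracy of the reduced operator: for $\omega=|k|^2$ the resonant set $\{k+k_1=k_2+k_3,\ |k|^2+|k_1|^2=|k_2|^2+|k_3|^2\}$ is the family of rectangles $\{(k-k_2)\perp(k-k_3)\}$, whose population requires $\mathrm{supp}\,f_0$ to have nonempty interior, while for a general convex $\omega$ one needs a resonantly active neighbourhood of $O$, i.e.\ $O$ in the interior of $\mathrm{supp}\,f_0$; since the loss is linear in $f(t,k)$, $\mathrm{supp}\,f(t)\supseteq\mathrm{supp}\,f_0$ for all $t$, so this persists. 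Suppose \eqref{Theorem1:3} failed for some $\mathfrak R$; then along some $t_n\to\infty$ the mass bound lets one extract a vague limit $f(t_n,\cdot)\rightharpoonup\mu$ (a finite measure on $\mathbb R^3$, mass possibly lost to $|k|=\infty$) with $\int\omega\psi\,d\mu>0$ for a cutoff $\psi\in C_c([0,\infty))$, $\psi\ge\chi_{[0,\mathfrak R]}$ --- i.e.\ $\mu$ carries positive energy in a bounded region away from $O$. The $\omega$-limit set of the trajectory is flow-invariant, and the entropy $\mathcal S(t)=\int\log f(t,k)\,dk$ (nondecreasing, $\le\mathscr M$, finite for $t>0$ by the regularizing gain term) is constant on it, so the solution $g(t)$ issuing from $\mu$ has vanishing entropy production for all $t$, whence $1/g(t)=a_t+b_t\,\omega$ on the resonant interior of $\mathrm{supp}\,g(t)$. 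But the assumptions \eqref{Settings3}--\eqref{Settings2a} pin $\omega$ between $|k|^\alpha$ and $|k|^2$ for large $|k|$ (from $\mho$ increasing and $\mho\le{C}_\mho^1|k|^\iota$), so a Rayleigh--Jeans profile $1/(a+b\omega)$ on an unbounded support has infinite mass, and a bounded-support "truncated equilibrium" leaks mass across its upper edge (the loss is nonzero there), which enlarges $\mathrm{supp}\,g(t)$; conservation of $\int\omega\,g(t)\le\mathscr E$ forbids this from persisting. Hence $g(t)$ has vanishing absolutely continuous part; atoms at interior points being excluded (the quadratic loss there is infinite), $g(t)=m_0\delta_{\{0\}}$ up to mass at $|k|=\infty$, contradicting $\int\omega\psi\,d\mu>0$. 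Therefore \eqref{Theorem1:3} holds, i.e.\ $\lim_{t\to\infty}\int_{\mathbb R^3}\omega(k)\chi_{B(O,\mathfrak R)}(k)f(t,k)\,dk=0$ for all $\mathfrak R>0$.

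The hard part will be the long-time step: proving the singular cubic collision operator is stable enough under vague convergence to pass to the limit in the equation and to identify the $\omega$-limit set, and then turning the heuristic "no finite-mass equilibrium exists, and conservation of energy forces whatever energy sits in a bounded region to be shed to $|k|=\infty$" into a rigorous argument --- this is technically comparable to the analyses of \cite{EscobedoVelazquez:2015:FTB,EscobedoVelazquez:2015:OTT}, and it is here that the precise hypotheses \eqref{Settings3}--\eqref{Settings2a} on $\omega,\mho$ and the propagated support conditions \textup{(I)}/\textup{(II)} are genuinely used. The existence part is comparatively routine, its only delicate points being the exact conservation of energy (no instantaneous loss at $|k|=\infty$) and staying in $C^1([0,\infty),L^1)$ rather than among measures.
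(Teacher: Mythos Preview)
This paper does not prove Theorem~\ref{Theorem1}: it is quoted verbatim from the companion paper \cite{StaffilaniTranCascade1} (``The main theorem of \cite{StaffilaniTranCascade1} states as follows''), and the present paper contains no argument for it --- all the proofs in Sections~3--5 are aimed at Theorem~\ref{Theorem2}. So there is nothing here to compare your proposal against; you would have to look at \cite{StaffilaniTranCascade1} for the actual proof.

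On its own merits, your sketch has the right architecture for the existence part (regularize, a priori bounds, compactness, pass to the limit), though the entropy you invoke is the wrong one: $\int\log f\,\mathrm d k$ is typically $-\infty$ and does not give equi-integrability --- you want $\int f\log f\,\mathrm d k$, whose production term has a definite sign under the usual symmetrization. The cascade part is much more heuristic: the LaSalle-type argument (extract an $\omega$-limit, show zero entropy production forces Rayleigh--Jeans, rule that out by mass/energy) is the right intuition, but making it rigorous for a \emph{cubic singular} collision operator under vague convergence is exactly the hard analysis, and your write-up acknowledges this without supplying it. Whether the proof in \cite{StaffilaniTranCascade1} proceeds by this entropy/$\omega$-limit route or by a different mechanism (e.g.\ direct growth estimates on moments, or a monotonicity argument for a well-chosen Lyapunov functional in the $\omega$-variable) cannot be determined from the present paper.
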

Our main theorem states as follows.
\begin{theorem}
	\label{Theorem2} Suppose that $\omega$ satisfies the above assumptions. 
	Let $f_0(k)=f_0(|k|)\ge 0$ be an initial condition satisfying 
	\begin{equation}
		\label{Theorem2:1} \int_{\mathbb{R}^3}\mathrm{d}k f_0(k) \ = \ \mathscr{M}, \ \ \ \ \ \int_{\mathbb{R}^3}\mathrm{d}k f_0(k) \omega(k) \ = \ \mathscr{E}, \ \ \ \ \ \int_{\{0\}}\mathrm{d}k f_0(k) \ = \ 0. 
	\end{equation}
Suppose that there exists   a global mild radial solution $f(t,k)$ of \eqref{4wave} in the sense of \eqref{4wavemild} such that
	\begin{equation}
		\label{Theorem2:2} \int_{\mathbb{R}^3}\mathrm{d}k f(t,k) \ = \ \mathscr{M},   \ \ \ \ \ \int_{\mathbb{R}^3}\mathrm{d}k f(t,k) \omega(k) \ = \ \mathscr{E},
	\end{equation}
	for all $t\ge 0$.  
	
\begin{itemize}
	\item[(i)]
	 Suppose that there exist  a  sufficiently large number $N_0$ and a small  constant  $\mathscr{C}_F^*>0$ such that for $F$ defined in \eqref{FDefinition} 
	\begin{equation}	\label{Theorem2:4}
		\begin{aligned}
			&    \int_{0}^{2^{-N_0-1}}\mathrm{d}\omega F(0,\omega) \ge \  \mathscr{C}_F^*. 
	\end{aligned}\end{equation} 
	Moreover, there exist   constants $C_{ini}>0$, $c_{ini}\ge 0$,  such that
	\begin{equation}\label{Theorem2:3}
		\begin{aligned}
			\int_{0}^r\mathrm d\omega F(0,\omega)
			\geq  C_{ini}  {r^{c_{ini}}}, \end{aligned}
	\end{equation}
	for all $2^{-N_0-1}=r_0>r>0$ and 
	Then the first condensation time $T_0$ defined in \eqref{T0} is finite and can be bounded by constants that depends only on $N_0,\mathscr M, \mathscr E$ and $\alpha$. 
	
\item[(ii)] We have

	\begin{equation}\label{Theorem2:5}
		\begin{aligned}
			\frac{{C}_0(m\Re	)^{2+\frac2\alpha}(	\mathscr M +\mathscr E)}{(\Re m/\mathscr N)^{2} \mho(\Re m/\mathscr N) \theta^4\left( \frac{(m-1)   \mathscr{M}_o}{m}\right)^3}
			\ \ge\ &   \mathcal M(\digamma^*_{\mathscr N,\theta, m\Re}),
		\end{aligned}
	\end{equation}
for all $m,\mathscr{N}\in\mathbb{N}$, $\mathscr{N}\ge 1000$, $0<\theta<1/100$, where $\mathcal M$ denotes the Lebesgue measure and
$$\mathscr{M}_o=\int_{0}^{\Re	}\mathrm{d}\omega F(0,\omega).$$

\end{itemize}
\end{theorem}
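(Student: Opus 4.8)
\emph{Overview.} The plan is to combine a quantitative ``divide and conquer'' decomposition on the frequency axis with a growth-lemma argument for the nonlocal flow. I would work throughout with the mild formulation \eqref{4wavemild} and the symmetrized identity, valid for radial $\phi$,
\begin{multline*}
	\int_{\mathbb R^{3}}\mathfrak C[f]\,\phi(|k|)\,\mathrm{d}k\\
	=\frac14\iiiint_{\mathbb R^{12}}\delta(k+k_1-k_2-k_3)\,\delta(\omega+\omega_1-\omega_2-\omega_3)\,\big(f_2f_3(f_1+f)-ff_1(f_2+f_3)\big)\,\big(\phi+\phi_1-\phi_2-\phi_3\big),
\end{multline*}
together with the fact that, after the substitution $\mathrm{d}\omega=(|k|/\mho)\,\mathrm{d}|k|$, $\int_0^rF(t,\omega)\,\mathrm{d}\omega$ is a fixed multiple of the radial mass of $f(t,\cdot)$ in $\{\omega<r\}$, so that $\mathscr M_o$ is that multiple of the initial mass carried by $\{\omega<\Re\}$.

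\emph{Part (ii).} Set $R':=m\Re$. Morally: a state whose mass is ``spread out'' over $[0,R')$ pumps mass toward the origin at a definite rate, so — the mass in a bounded band of frequencies being at most $\mathscr M$ — such states persist only for a bounded time. Quantitatively I would (a) use a pigeonhole argument on \eqref{NoCondensateTime1} to produce, for each $t\in\digamma^*_{\mathscr N,\theta,m\Re}$, three pairwise ``well separated'' frequency regions (unions of the elementary bands $[jR'/\mathscr N,(j+1)R'/\mathscr N)$) each carrying $F$-mass $\gtrsim\theta\,\mathscr G(t)$, where $\mathscr G(t):=\int_0^{R'}F(t)\ge\tfrac{m-1}{m}\mathscr M_o$ (mass that has reached $\{\omega<\Re\}$ cannot have crossed out of the much wider window $\{\omega<m\Re\}$ over the intervening time), plus one further $\theta$-fraction of room for the origin-bound leg; (b) translate to the $k$ variable, where these regions are spherical shells, and on each select a sub-shell of comparable measure on which $f(t,\cdot)\gtrsim(\text{its }F\text{-mass})\big/\big((R'/\mathscr N)\,|k|\,\mho\big)$; (c) perform the domain decomposition — insert the band partition of $f$ into $\mathfrak C[f]$, retain only the finitely many collision quadruples all of whose legs lie in the shells from (a) (two ``sources'', one feeding factor near the origin, the fourth leg pinned by the two $\delta$'s), and check that on this sub-configuration the gain term dominates the loss (the loss near the origin is small because there is as yet no condensate) and the increment of a mass-type test functional toward the origin is favorable. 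Bounding the Lebesgue measure of the resonant set joining such shells below by $\gtrsim(R'/\mathscr N)^{2}\,\mho(R'/\mathscr N)$ via the convexity of $\omega$ and \eqref{Settings3}--\eqref{Settings2a}, and multiplying by the factors from (a)--(b), gives a rate of mass accumulation near the origin $\gtrsim (R'/\mathscr N)^{2}\mho(R'/\mathscr N)\,\theta^{4}\big(\tfrac{m-1}{m}\mathscr M_o\big)^{3}$ (the three source factors produce $\theta^{3}\mathscr M_o^{3}$, the extra $\theta$ measures the admissible origin-bound leg). Dividing a bounded weighted observable — chosen so that $0\le\mathcal F(t)\le C_0(m\Re)^{2+2/\alpha}(\mathscr M+\mathscr E)$, the exponents reflecting $\omega\gtrsim|k|^\alpha$, \eqref{Settings2a} and the volume Jacobian — by this rate yields \eqref{Theorem2:5}; the passage from the pointwise rate bound to the measure bound on $\digamma^*$ is where the nonlocal growth lemmas mentioned in the introduction are needed, since the natural localized observable is monotone only on $\digamma^*$.

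\emph{Part (i).} Argue by contradiction: assume $T_0=\infty$, so $\int_{\{0\}}f(t,k)\,\mathrm{d}k=0$ for all $t\ge0$. First I would show that the lower bound \eqref{Theorem2:3} propagates in time, i.e.\ $\int_0^rF(t,\omega)\,\mathrm{d}\omega\ge C_{ini}r^{c_{ini}}$ for all $t\ge0$ and $0<r<r_0$: this is because the loss part of $\mathfrak C[f]$ restricted to $\{\omega<r\}$ is, by the estimates of \cite{StaffilaniTranCascade1}, $o(r^{c_{ini}})$ uniformly over the (a priori bounded) time interval under consideration — equivalently, the mass near the origin is essentially non-decreasing — and likewise the reservoir condition \eqref{Theorem2:4} persists. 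Now iterate part (ii) along the dyadic scales $\Re_n=(3/\mathscr N)^{n}\Re_0$, $\Re_0\simeq 2^{-N_0}$, using at step $n$ a parameter $\theta_n$ small enough (comparable to $(\Re_n/\mathscr N)^{c_{ini}}$) that $\theta_n\mathscr M<C_{ini}(\Re_n/\mathscr N)^{c_{ini}}$. By \eqref{Theorem2:5}, applied with initial time the end of step $n-1$ (legitimate since \eqref{4wave} is autonomous), the flow leaves $\digamma^*_{\mathscr N,\theta_n,\Re_n}$ after a finite time $\tau_n$; leaving it means that some window of three consecutive sub-intervals of $[0,\Re_n)$ carries more than $(1-\theta_n)$ of its $F$-mass, and since by the propagated bound the innermost sub-interval always carries $\ge C_{ini}(\Re_n/\mathscr N)^{c_{ini}}>\theta_n\mathscr M\ge\theta_n\!\int_0^{\Re_n}\!F$, that window must be the one anchored at $\{\omega=0\}$; hence at least $(1-\theta_n)$ of the $F$-mass of $\{\omega<\Re_n\}$ now lies in $\{\omega<\Re_{n+1}\}$. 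Because $c_{ini}>0$ (forced by $\int_{\{0\}}f_0=0$ and $C_{ini}>0$) the $\theta_n$ are geometrically small, so $\sum_n\theta_n<\infty$ and the recursion never starves: $\int_0^{\Re_n}F\gtrsim\prod_j(1-\theta_j)\,\mathscr C_F^{*}>0$ throughout, so one may take $\mathscr M_o\gtrsim\mathscr C_F^{*}$ at each step of \eqref{Theorem2:5}. Reading off $\tau_n$ from \eqref{Theorem2:5} with $\mho(\omega)\ge\check C_\mho\omega^\beta$, $\beta\le(2-\alpha)/\alpha$ and $\omega\gtrsim|k|^\alpha$, the sequence $(\tau_n)$ is geometrically summable within the admissible range of the exponents, so $\sum_n\tau_n<\infty$. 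At the finite time $\sum_n\tau_n$ a mass $\ge\prod_n(1-\theta_n)\,\mathscr C_F^{*}>0$ has then collapsed into $\{\omega=0\}$, contradicting $T_0=\infty$; and $\sum_n\tau_n$ is bounded in terms of $N_0,\mathscr M,\mathscr E,\alpha$ only, which is the asserted bound on $T_0$.

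\emph{Main difficulty.} The technical heart is step (c) of part (ii): producing the \emph{cubic}, scale-tracked lower bound on the collision contribution for a \emph{general} convex dispersion, i.e.\ bounding below the Lebesgue measure of the resonant quadruples joining three prescribed thin shells and verifying that the gain term is not cancelled by the loss on that set. For $\omega=|k|^2$ the resonance manifold is parametrized explicitly (the Escobedo--Velazquez setting); for general $\omega$ one must extract the needed non-degeneracy purely from convexity and the two-sided bounds \eqref{Settings3}--\eqref{Settings2a}, which is exactly what restricts the argument to $1<\alpha\le2$. A secondary difficulty is converting the pointwise rate estimate into a genuine bound on $\mathcal M(\digamma^*_{\mathscr N,\theta,m\Re})$ — the localized observable being monotone only on $\digamma^*$, so that the nonlocal growth lemmas are needed to absorb its behaviour off that set — and, for part (i), the propagation of \eqref{Theorem2:3}, which rests on the near-conservation of mass close to the origin.
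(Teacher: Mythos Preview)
Your overall architecture --- domain decomposition plus a mass-concentration cascade --- is in the right spirit, but both halves have genuine gaps that the paper closes by different means.

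\textbf{Part (ii).} The difficulty you flag at the end (``the natural localized observable is monotone only on $\digamma^*$'') is exactly the obstruction, and the paper does \emph{not} resolve it with growth lemmas. Instead it tests the weak form against the globally \emph{concave} function $\rho(\omega)=\ln(\omega+1)$: since the symmetrized collision integrand has a definite sign against any concave test function (with a quantitative lower bound coming from $\rho''=-(\omega+1)^{-2}$ together with the monotonicity of $\mho$), one obtains
\[
\partial_t\!\int_{\mathbb R_+}\!F\,\ln(\omega+1)\,\mathrm{d}\omega \;\le\; -\,C\!\iiint f f_1 f_2\,K(\omega,\omega_1,\omega_2),\qquad K\ge 0,
\]
valid for \emph{all} $t$. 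Integrating in time and using $\int F\ln(\omega+1)\le \mathscr M+\mathscr E$ bounds the time-integrated cubic functional uniformly, with no need to control what happens off $\digamma^*$. Your pigeonhole step then corresponds to the paper's domain decomposition, which shows that on the spread-out set the integrand dominates $\big(\int_{[0,R)}F\big)^3$ times $h^2\mho(h)/R^{2+2/\alpha}$; specializing $R=m\Re$, $h=m\Re/\mathscr N$, $\nu=\theta$ gives \eqref{Theorem2:5}. Your lower bound $\int_0^{m\Re}F(t)\ge\tfrac{m-1}{m}\mathscr M_o$ is correct and is obtained by testing against the convex function $(m\Re-\omega)_+$.

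\textbf{Part (i).} Your iteration of (ii) along $\Re_n=(3/\mathscr N)^n\Re_0$ with $\theta_n\asymp\Re_n^{c_{ini}}$ yields, via \eqref{Theorem2:5} and $\mho\gtrsim\omega^\beta$,
\[
\tau_n\;\lesssim\;\Re_n^{\,2/\alpha-\beta-4c_{ini}},
\]
which is summable only when $c_{ini}<\tfrac14(2/\alpha-\beta)<\tfrac12$. The theorem, however, allows \emph{any} $c_{ini}\ge 0$ (this is precisely the improvement over the Escobedo--Vel\'azquez hypothesis, cf.\ Remark~\ref{Remark2}), and asserts a bound on $T_0$ depending only on $N_0,\mathscr M,\mathscr E,\alpha$ --- not on $c_{ini}$. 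So the iteration does not prove the result as stated.

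The paper's mechanism is genuinely different. It introduces the level sets $\mathfrak S_n^T=\{t\le T:\int_0^{2^{-n}}F(t)\ge\mathscr C_F\,2^{-n\varsigma}\}$ and bounds $|\mathfrak S_n^T|\lesssim 2^{-n\gamma}$ for an exponent $\gamma>0$ \emph{independent of $c_{ini}$}. The new ingredient is a supersolution for the nonlocal equation $\partial_t\rho+\int_0^\infty\mu(t,z)[\rho(x+z)-\rho(x)]\,\mathrm{d}z=0$ with a \emph{time-dependent} kernel $\mu$ built from $F$ restricted to the far-from-origin windows; testing the weak form against this $\rho_{\mathrm{super}}$ produces an \emph{exponential}-in-time lower bound on $\int F\rho_{\mathrm{super}}$, which, compared with $\mathscr M+\mathscr E$, forces the measure of the bad times to be small. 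Hypothesis \eqref{Theorem2:3} enters only to seed this exponential with a positive initial value $\gtrsim\Re_n^{c_{ini}}$; since any polynomial prefactor is swallowed by the exponential, $c_{ini}$ disappears from the final bound. Your propagation claim for \eqref{Theorem2:3} is correct (test against $(r-\omega)_+$), but in the paper it is used only in this indirect way, not to drive a scale-by-scale iteration.
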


\begin{remark}\label{Remark2}{\it
	The above theorem says that if   $F(0,\omega)$ is concentrated in a neighborhood near the origin but can be allowed to vanish at the origin, then in finite time, the solution will develop a non-zero measure at the origin in the sense that 
	   $T_0<\infty$ and $T_0$ can be bounded explicitly. Figure \ref{Fig1} illustrates one of such initial conditions. This initial condition vanishes at $0$ and grows smoothly from $0$ to $r_0$, such that \eqref{Theorem2:3} is valid. Moreover, the mass of the solution is sufficiently concentrated in $[0,2^{-N_0-1}]$, that is \eqref{Theorem2:4}. We believe that the domain decomposition technique allows us to consider  initial conditions that are more general than those considered in \cite{EscobedoVelazquez:2015:FTB,EscobedoVelazquez:2015:OTT}.\\
	 In \cite{EscobedoVelazquez:2015:FTB,EscobedoVelazquez:2015:OTT}, it has been proved that for the boundary case $\omega(k)=|k|^2$ (and $\alpha=2,\beta=0$) the finite time condensation phenomenon happens when the initial condition is sufficiently concentrated at  the origin (see Figure \ref{Fig2} and \cite{EscobedoVelazquez:2015:FTB,EscobedoVelazquez:2015:OTT}). This condition can be expressed in  analytic form as (see \cite[Theorem 3.18 and Theorem 3.19]{EscobedoVelazquez:2015:OTT}), 
	 \begin{equation}\label{EVConcen1}
	 \exists r>0, K^*>0, \theta^*>0 \mbox{ such that }  \int_0^R\mathrm{d}\omega F(0,\omega) \gtrsim   R^\frac32 \mbox{ for } 0<R\le r, \int_0^r\mathrm d\omega F(0,\omega) \ge K^*r^{\theta^*}.
	 \end{equation} 
   Note that such assumption on the initial conditions  $F(0,\omega)$ means that $F(0,\omega)\ge   \omega^{1/2}$ near the origin, while for us, we only need $F(0,\omega)\approx   \omega^{c_{ini}-1}$, for any $c_{ini}\ge0$. The condition considered in \cite{EscobedoVelazquez:2015:FTB,EscobedoVelazquez:2015:OTT} corresponds to the case $c_{ini}=3/2$.}
	\begin{figure}
		\centering
		\includegraphics[width=.60\linewidth]{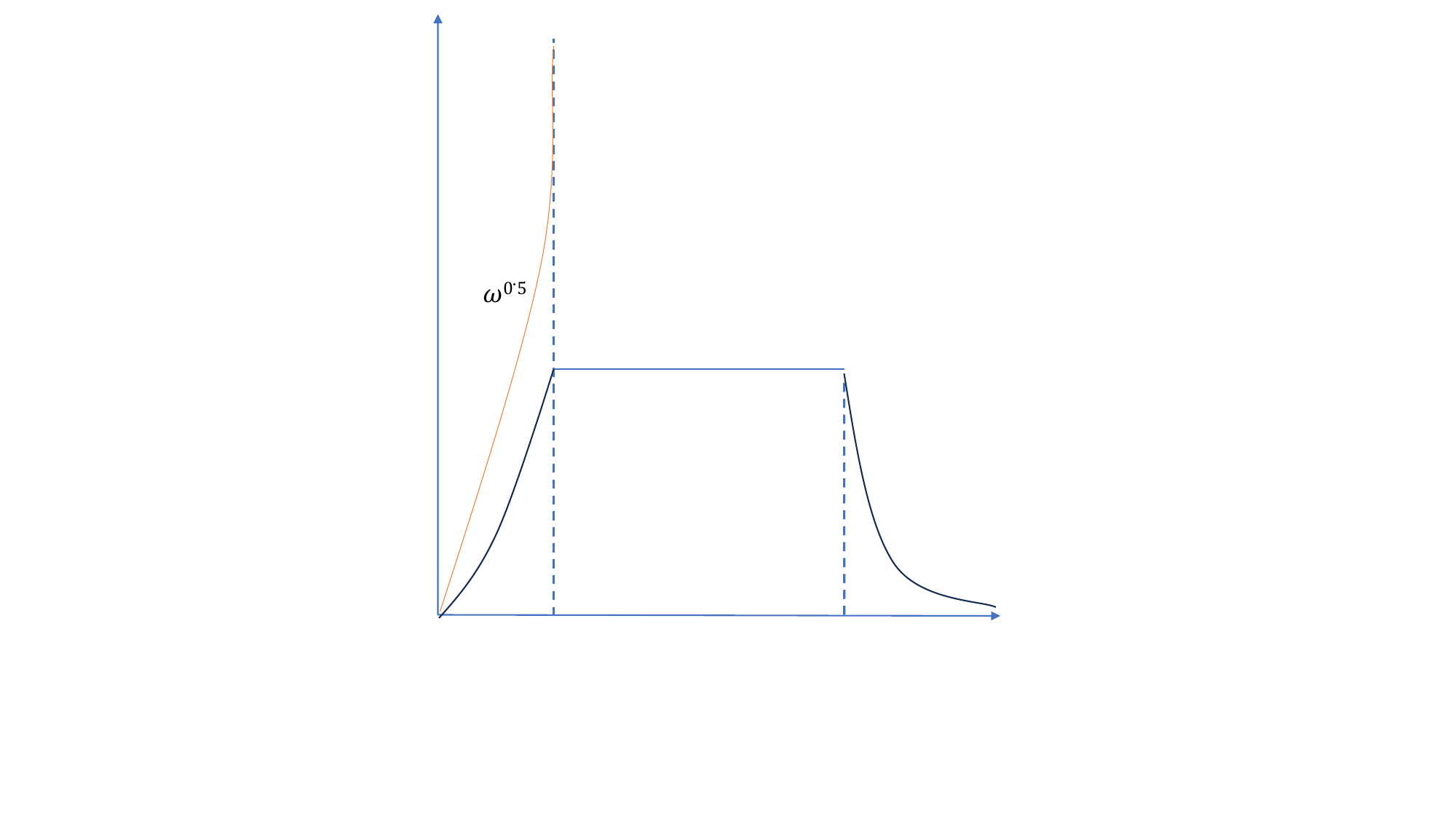}
		\caption{An example of an initial condition that satisfies \eqref{Theorem2:3}-\eqref{Theorem2:4} of Proposition \ref{Propo:FiniteTimeCondensation}. Our initial condition does not require to be above the line   $\omega^{1/2}$. } 
		\label{Fig1}
	\end{figure}
	
	\begin{figure}
		\centering
		\includegraphics[width=.60\linewidth]{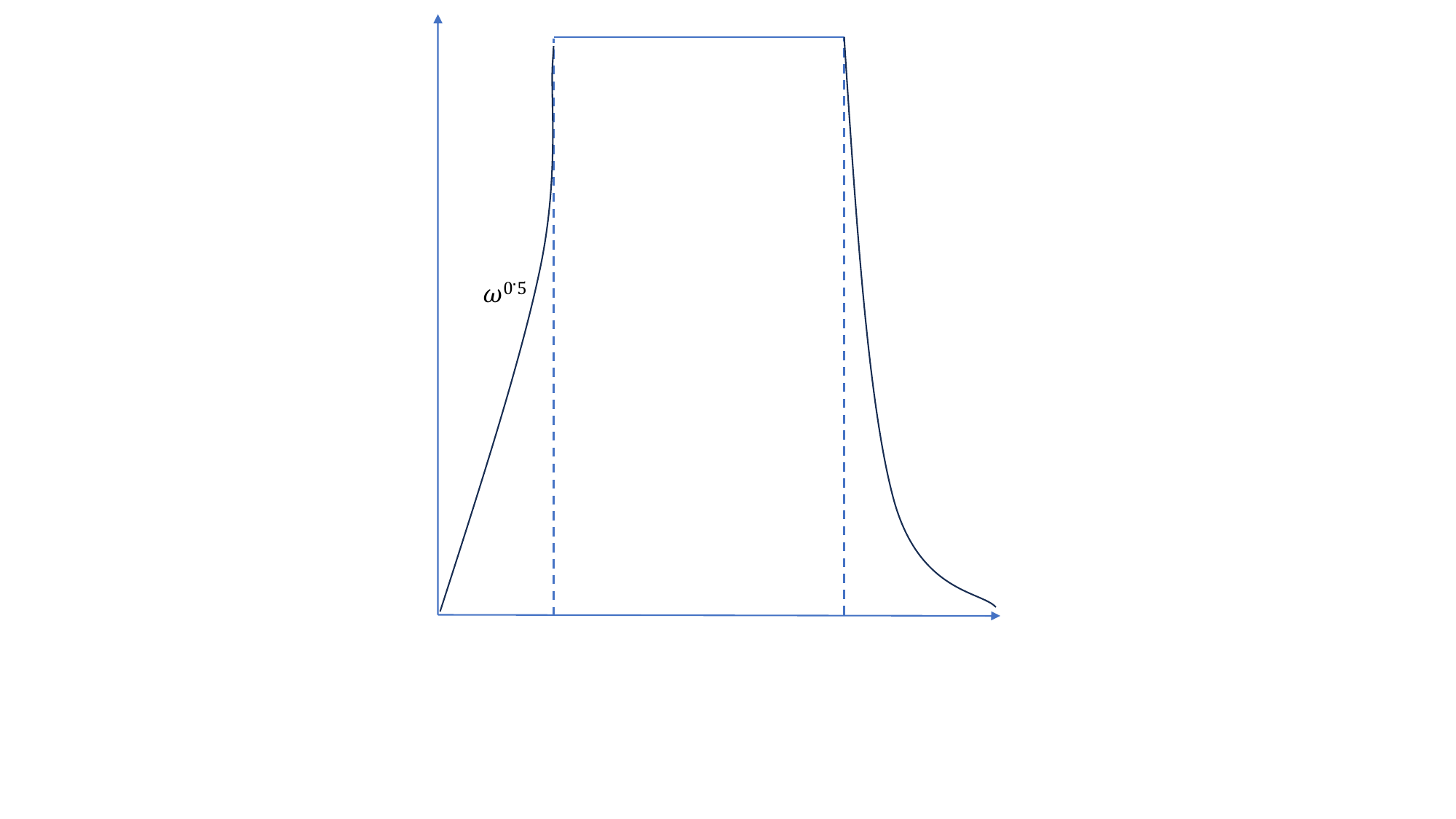}
		\caption{An example of an initial condition that has to be above the line $\omega^{1/2}$ - See   \cite{EscobedoVelazquez:2015:FTB,EscobedoVelazquez:2015:OTT}).}
		\label{Fig2}
	\end{figure}
\end{remark}

\begin{remark}\label{Remark3}{\it
	Inequality \eqref{Theorem2:5} implies that, if we consider a time $t$ such that there is no condensation at the origin, namely  $t\in \digamma$, then the set of times where the solution is spread out on  any   subinterval $[im\Re/\mathscr N,(i+3)m\Re/\mathscr N)$ of $[0,m\Re]$ always has a finite measure. Then one of the following possibilities has to be true:
	\begin{itemize}
	\item[(a)] Either the set of times $t\in \digamma$ such that $F(t)$ is concentrated in one of the  intervals  $[im\Re/\mathscr N,(i+3)m\Re/\mathscr N)$, $i>0$,
has infinite measure;
		\item[(b)] Or the set of condensation times ($F(t)$ is concentrated around the origin)
	$$
	[0,\infty)\backslash	\digamma \  = \ {\Big\{}t\in[0,\infty)~~~ \large|~~~ \int_{\{0\}}\mathrm{d}k f(t,k) =\int_{\{0\}}\mathrm{d}\omega F(t,\omega) \ne 0 {\Big\}},
	$$
	and  	has infinite measure. 
\end{itemize}   }
\end{remark}
\begin{remark}
{\it Our analysis can be extended to the case
	\begin{equation}\label{4wavesingular}
		\begin{aligned}
			\partial_t f\ = \ & 	\mathfrak C \left[ f\right],\ \ \ f(0,k)=f_0(k),\ \ \ k\in\mathbb{R}^3,\ \ \ t\in [0,\infty),\\ 
			\mathfrak C \left[ f\right]  
			\ =\ & \iiint_{\mathbb{R}^{9}}\mathrm{d}k_1\,\mathrm{d}k_2\,\mathrm{d}k_3  \delta(k+k_1-k_2-k_3)\delta(\omega + \omega_1 -\omega_2 - \omega_3)\\
			&\times |\omega\omega_1\omega_2\omega_3|^{c_o}[f_2f_3(f_1+f)-ff_1(f_2+f_3)],
		\end{aligned}
	\end{equation}
where the kernel is   $|\omega\omega_1\omega_2\omega_3|^{c_o}$ with $c_o\in\mathbb{R}$ under some constraints on $|c_o|$. However, we only focus on \eqref{4wave} for the sake of simplicity.}
	\end{remark}
\section{Non-condensation times}
\subsection{A Domain Decomposition Method}\label{Sec:DDM}
In this subsection, we propose an approach based on a Domain Decomposition Method (see \cite{halpern2009nonlinear,Lions:1989:OSA,toselli2004domain}). The method is based on a numerical ``divide and conquer  idea'', in which    complex   operators, such as our collision operators, are decomposed into    independent operators/integrals on smaller subdomains. 
For  a fixed small domain size $0<h<< R$,
we define  sequences of nonoverlapping  subdomains $\left\{ \Delta^{h, R}_{i}\right\}
_{i=0}^{\mathscr N_{h, R}}$  and overlapping subdomains $\left\{ \Xi^{h, R}_{i}\right\}
_{i=0}^{\mathscr N_{h,R}}$ lying inside the domain $[0,R)$ satisfying

%
%

\begin{equation}\label{Sec:FV:2}
	\begin{aligned}
		&\mbox{(A) Number of Subdomains:  } 	\mathscr N_{h,R}  = \left\lfloor \frac{R}{h}\right\rfloor, \\ \ 
		& \mbox{(B) Nonoverlapping Subdomains:  } 	\\
		&\Delta_{i}^{h,R} =\left[ ih,(1+i)h\right) \ \ ,\ \
		i=0, \cdots, \mathscr N_{h,R}-2, \ \ 	\Delta_{\mathscr N_{h,R}-1}^{h,R} =\left[ (\mathscr N_{h,R}-1)h,R\right),  \\
		& \mbox{(C) Overlapping Subdomains:  } 	\Xi_{i}^{h,R} =\left[ (i-1)h,(2+i)h\right) \ \ ,\ \
		i=1, \cdots, \mathscr N_{h,R}-3, \\
		&	\Xi_{0}^{h,R} =\left[0, 2h\right),  \ \ \Xi_{\mathscr N_{h,R}-2}^{h,R} =\left[ (\mathscr N_{h,R}-3)h,R\right),  \ \ \ \	\Xi_{\mathscr N_{h,R}-1}^{h,R} =\left[ (\mathscr N_{h,R}-2)h,R\right).	\end{aligned}
\end{equation}

We  set
\begin{equation}\label{Sec:FV:3}\begin{aligned}
	&	\Delta_{i,j,l}^{h,R} =\Delta_{i}^{h,R} \times%
\Delta_{j}^{h,R} \times\Delta_{l}^{h,R} ,\ \ i,j,l=0,\cdots,\mathscr N_{h,R}-1,\end{aligned}
\end{equation}
and

\begin{equation}\label{Sec:FV:3:1}\begin{aligned}
	&I_{i}^{h,R}  =\{ i-1,i,i+1\} \ \ ,\ \
		i=1, \cdots, \mathscr N_{h,R}-2, \\
		&	I_{0}^{h,R} =\{0,1\},  \ \ I_{\mathscr N_{h,R}-1}^{h,R} =\{\mathscr N_{h,R}-2,\mathscr N_{h,R}-1\}.	\end{aligned}
\end{equation}

and define the sets
\begin{equation}\label{Sec:FV:4}
	{E}_{R,h}=\left\{ \left( \omega,\omega_{1},\omega_{2}\right) \in\left[ 0,R\right) ^{3}:\left\vert \omega_{Mid}-	\omega_{Min}\right\vert \ge 2h\right\}, 
\end{equation}
and
\begin{equation}\label{Sec:FV:4}
	{E}_{R,h}'=\left\{ \left( \omega,\omega_{1},\omega_{2}\right) \in\left[ 0,R\right) ^{3}:\left\vert \omega_{Mid}-	\omega_{Min}\right\vert \ge h\right\}, 
\end{equation}
in which
	\begin{align}\label{Sec:FV:5}
	\omega_{Max}\left( \omega,\omega_{1},\omega_{2}\right) & =\max\left\{\omega,\omega_{1},\omega_{2}\right\} , \ 
	\omega_{Min}\left( \omega,\omega_{1},\omega_{2}\right)   =\min\left\{ \omega,\omega_{1},\omega_{2}\right\} , \\
	\omega_{Mid}\left( \omega,\omega_{1},\omega_{2}\right) & =\mathrm{mid}\left\{ \omega,\omega_{1},\omega_{2}\right\},
\end{align}
where we have used the notations of \eqref{Mid}. For a constant $\nu>0$, we denote 
\begin{equation}\label{TimePump}
	\begin{aligned}
		\digamma^2\ := & \ \Big\{t\in\digamma~~~ \big|~~~ \forall i\in\{0,\cdots,\mathscr N_{h,R}-1\},\int_{\Xi_{i}^{h,R} }\mathrm{d}\omega F(t) < (1-\nu)\int_{[0,R)}\mathrm{d}\omega F(t)\Big\},\\ \ \ \digamma^1 \ := \ & \digamma\backslash \digamma^2.
	\end{aligned}
\end{equation}
%

\subsection{A uniform in time estimate of the nonlinearity}

\begin{lemma}
	\label{lemma:Concave} 
	Let $f$ be a radial solution in the sense of \eqref{4wavemild} of the wave kinetic equation \eqref{4wave}, then the following estimate holds true for all $T>0$
		\begin{equation}\label{lemma:Concave:1}
		\begin{aligned}
		&	\int_{\mathbb{R}^3}\mathrm{d}k  f(0,k)  \ln(\omega+1)
			\ \ge\  C\int_0^T\mathrm{d}t\iiint_{\mathbb{R}_+^{3}}\mathrm{d}\omega_1\,\mathrm{d}\omega_2\,\mathrm{d}\omega f_1f_2f\mathbf{1}_{\omega+\omega_1-\omega_2\ge 0}\\
			&\times \mho(\omega_{Max})\mho(\omega_{Min})\mho(\omega_{Mid})\mho(\omega_{Max}-\omega_{Min}+\omega_{Mid})	|k_{Min}| \frac{(\omega_{Mid}-\omega_{Min})^2}{( 2\omega_{Mid}-\omega_{Min})^2+1},
		\end{aligned}
	\end{equation}
where $k_{Min}$ is associated to $\omega_{Min}$, $C$ is a universal constant,	and  we have used  the shorthand notation $f=f(k)=f(|k|)=f(\omega)$, $f_1=f(k_1)=f(|k_1|)=f(\omega_1)$, $f_2=f(k_2)=f(|k_2|)=f(\omega_2)$, $\omega=\omega(k)=\omega(|k|)$, $\omega_1=\omega(k_1)=\omega(|k_1|)$, $\omega_2=\omega(k_2)=\omega(|k_2|)$,  and we also have used the same notations as in  \eqref{Sec:FV:5}.
\end{lemma}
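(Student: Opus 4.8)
The plan is to derive \eqref{lemma:Concave:1} as an entropy-type (or rather, logarithmic-moment) dissipation identity. I would start from the weak formulation \eqref{4wavemild} with the test function $\phi(|k|) = \ln(\omega(|k|)+1)$ — strictly speaking $\phi$ must be approximated by $C_c^2$ functions and we pass to the limit, using $\int f(t,k)\omega\,dk = \mathscr E$ to control the tail — so that
\begin{equation*}
\int_{\mathbb R^3}\mathrm dk\, f(0,k)\ln(\omega+1) - \int_{\mathbb R^3}\mathrm dk\, f(t,k)\ln(\omega+1) \ = \ -\int_0^t\!\int_{\mathbb R^3}\mathrm dk\, \mathfrak C[f]\,\ln(\omega+1).
\end{equation*}
Since $f(t,k)\ge 0$ and $\ln(\omega+1)\ge 0$, the left side is bounded above by $\int f(0,k)\ln(\omega+1)\,dk$, which is the left side of \eqref{lemma:Concave:1}. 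Hence it suffices to show that $-\int_{\mathbb R^3}\mathfrak C[f]\ln(\omega+1)\,dk$ is bounded below by the claimed integral functional (up to the universal constant $C$), for every fixed time; then integrate in $t$ over $[0,T]$.

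Next I would symmetrize the collision term. Using the two delta functions $\delta(k+k_1-k_2-k_3)$ and $\delta(\omega+\omega_1-\omega_2-\omega_3)$ and the standard exchange symmetries $(k,k_1)\leftrightarrow(k_2,k_3)$ of $\mathfrak C$, one gets the usual weak-form identity
\begin{equation*}
-\int \mathfrak C[f]\,\ln(\omega+1) \ = \ \frac14\int \mathrm d\mu\,[f_2 f_3(f_1+f) - f f_1(f_2+f_3)]\,\big[\ln(\omega_2+1)+\ln(\omega_3+1)-\ln(\omega+1)-\ln(\omega_1+1)\big],
\end{equation*}
where $\mathrm d\mu$ carries the momentum and energy delta constraints. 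The key is that on the support of these constraints, $\omega_2,\omega_3$ are the ``outer'' pair relative to $\omega,\omega_1$ in a way compatible with convexity of $\omega$: using radiality and convexity of $\omega(|k|)$, the momentum/energy delta can be resolved so that the four frequencies are, after relabeling, $\omega_{Max},\omega_{Min},\omega_{Mid}$ and a fourth one equal to $\omega_{Max}-\omega_{Min}+\omega_{Mid}$ (from energy conservation), with the associated wavenumbers producing the Jacobian factor $\mho(\omega_{Max})\mho(\omega_{Min})\mho(\omega_{Mid})\mho(\omega_{Max}-\omega_{Min}+\omega_{Mid})|k_{Min}|$ — here each $\mho = |k|/\omega'(|k|)$ is precisely the change-of-variables factor from $\mathrm dk_i$ (radial, three-dimensional) to $\mathrm d\omega_i$, and the extra $|k_{Min}|$ is what remains after integrating out the angular variables against the momentum delta. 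Simultaneously, the logarithmic bracket must be shown to be nonnegative and bounded below by $\frac{(\omega_{Mid}-\omega_{Min})^2}{(2\omega_{Mid}-\omega_{Min})^2+1}$ up to a constant; this is an elementary convexity estimate for $x\mapsto\ln(x+1)$ applied to the four-point configuration, since $\ln(\cdot+1)$ is concave and the frequencies satisfy a one-parameter ``spreading'' relation controlled by $\omega_{Mid}-\omega_{Min}$. Finally, in the cubic term $f_2 f_3(f_1+f) - f f_1(f_2+f_3)$ I would discard the nonnegative ``gain minus part of loss'' contributions and keep only the term $\gtrsim f_1 f_2 f\,\mathbf 1_{\omega+\omega_1-\omega_2\ge 0}$ (after choosing the labeling so that $\omega_3 = \omega+\omega_1-\omega_2 \ge 0$), which is exactly the integrand appearing on the right of \eqref{lemma:Concave:1}.

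The main obstacle is the geometric/combinatorial reduction of the nine-dimensional constrained integral $\int_{\mathbb R^9}\mathrm dk_1\mathrm dk_2\mathrm dk_3\,\delta(\cdots)\delta(\cdots)$ to the three-dimensional integral $\iiint_{\mathbb R_+^3}\mathrm d\omega_1\mathrm d\omega_2\mathrm d\omega$ with the correct weight $\mho(\omega_{Max})\mho(\omega_{Min})\mho(\omega_{Mid})\mho(\omega_{Max}-\omega_{Min}+\omega_{Mid})|k_{Min}|$: one must carefully use radiality to integrate the angular variables against the momentum delta (producing the characteristic function $\mathbf 1_{\omega+\omega_1-\omega_2\ge 0}$ and the factor $|k_{Min}|$ from the surviving triangle-type constraint), then change variables $|k_i|\mapsto\omega_i$ using $\mho$, and finally use the energy delta to eliminate one frequency, identifying it as $\omega_{Max}-\omega_{Min}+\omega_{Mid}$. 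Monotonicity of $\mho$ in $|k|$ and $\omega$ (assumed in \eqref{Settings1}ff.) is what guarantees these substitutions are legitimate and the resulting $\mho$-factors are well-defined and positive. The lower bound on the logarithmic bracket and the sign of the retained cubic term are comparatively routine once the labeling convention is fixed.
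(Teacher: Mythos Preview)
Your overall scaffolding is right: test against $\rho(\omega)=\ln(\omega+1)$, symmetrize the collision integral, integrate out the angles to get the factor $\min\{|k|,|k_1|,|k_2|,|k_3|\}$, change $|k_i|\to\omega_i$ via the Jacobians $\mho$, and eliminate $\omega_3=\omega+\omega_1-\omega_2$. But the heart of the argument is not what you describe, and your version would not close.

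After the symmetrization you have
\[
\int \mathfrak C[f]\rho \ = \ C_0\iiint f f_1 f_2\,\Lambda\,\big[-\rho(\omega)-\rho(\omega_1)+\rho(\omega_2)+\rho(\omega+\omega_1-\omega_2)\big],
\]
and when you order $\{\omega,\omega_1,\omega_2\}$ as $\{\omega_{Max},\omega_{Mid},\omega_{Min}\}$ you do \emph{not} get a single configuration with fourth frequency $\omega_{Max}-\omega_{Min}+\omega_{Mid}$. You get \emph{three} terms, according to whether $\omega_2$ is $\omega_{Mid}$, $\omega_{Min}$, or $\omega_{Max}$; the corresponding fourth frequencies are $\omega_{Max}+\omega_{Min}-\omega_{Mid}$, $\omega_{Max}+\omega_{Mid}-\omega_{Min}$, and $\omega_{Min}+\omega_{Mid}-\omega_{Max}$, and the three logarithmic brackets have signs $\ge 0$, $\le 0$, $\le 0$ respectively (by concavity of $\rho$). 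So your assertion that ``the logarithmic bracket must be shown to be nonnegative'' is false for two of the three pieces, and there is no single bracket to bound below.

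The actual mechanism in the paper is this: the third term is simply dropped (it has the favorable sign), and the first two are \emph{combined}. Writing each bracket as a double integral of $\rho''$, one finds
\begin{align*}
&[\text{bracket}_1]\,\mho(\omega_{Max}+\omega_{Min}-\omega_{Mid})+[\text{bracket}_2]\,\mho(\omega_{Max}+\omega_{Mid}-\omega_{Min})\\
&\qquad \le\ \int_0^{\omega_{Mid}-\omega_{Min}}\!\!\int_0^{\omega_{Mid}-\omega_{Min}}\!\!\mho(\omega_{Max}-\omega_{Min}+\omega_{Mid})\,\rho''(\omega_{Min}+s+s_0)\,ds\,ds_0,
\end{align*}
where the cancellation uses that the two $\mho$-weights differ and $\mho$ is nondecreasing. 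Only \emph{after} this combination does the factor $(\omega_{Mid}-\omega_{Min})^2$ emerge, together with the single $\mho(\omega_{Max}-\omega_{Min}+\omega_{Mid})$ appearing in the statement. Your proposal treats this $\mho$-factor as a pure change-of-variables Jacobian coming from the fourth frequency, but in fact the fourth-frequency Jacobians in the two terms are \emph{different}, and it is precisely their mismatch (plus monotonicity of $\mho$) that produces the estimate. Without this step the positive and negative brackets could cancel and you get no lower bound at all.
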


\begin{proof}

	We first write  the collision operator in  weak form. This computation has been done in  \cite{StaffilaniTranCascade1} but we repeat it here for the sake of completeness. We write
	
	\begin{equation}\label{Lemma:Concave:E1}
		\begin{aligned}
			\mathfrak C \left[ f\right]  
			\ =\ & \iiint_{\mathbb{R}^{3\times3}}\mathrm{d}k_1\,\mathrm{d}k_2\,\mathrm{d}k_3  \delta(\omega + \omega_1 -\omega_2 - \omega_3)\delta(k+k_1-k_2-k_3)[-ff_1(f_2+f_3)+f_2f_3(f_1+f)] \\
			\ =\ & \iiint_{\mathbb{R}_+^{3}}\mathrm{d}|k_1|\,\mathrm{d}|k_2|\,\mathrm{d}|k_3|  |k_1|^2|k_2|^2|k_3|^2\delta(\omega + \omega_1 -\omega_2 - \omega_3)\\
			&\times \iiint_{\left(\mathbb{S}^2\right)^3}\mathrm{d}\mathcal V_1\mathrm{d}\mathcal V_2\mathrm{d}\mathcal V_3[-ff_1(f_2+f_3)+f_2f_3(f_1+f)]\left[\frac1{(2\pi)^3}\int_{\mathbb{R}^3}\mathrm{d}s e^{is\cdot({k}+{k}_1-{k}_2-{k}_3)}\right]\\
			\ =\ & \iiint_{\mathbb{R}_+^{3}}\mathrm{d}|k_1|\,\mathrm{d}|k_2|\,\mathrm{d}|k_3|\frac{32\pi}{|k|} \int_0^\infty\mathrm{d}s\sin(|k_1|s)\sin(|k_2|s)\sin(|k_3|s)\sin(|k|s)\frac{1}{s^2}\\
			&\times \delta(\omega + \omega_1 -\omega_2 - \omega_3)[-ff_1(f_2+f_3)+f_2f_3(f_1+f)]|k_1||k_2||k_3|.
		\end{aligned}
	\end{equation}
	Next, we compute 
	\begin{equation}\label{Lemma:Concave:E2}
		\begin{aligned}
			& \int_0^\infty\mathrm{d}s\sin(|k_1|s)\sin(|k_2|s)\sin(|k_3|s)\sin(|k|s)\frac{1}{s^2}\\
			=\	& \frac1{16}\int_0^\infty\mathrm{d}s (e^{i|k_1| s}-e^{-i|k_1| s})(e^{i|k_2| s}-e^{-i|k_2| s})(e^{i|k_3| s}-e^{-i|k_3| s})(e^{i|k_4| s}-e^{-i|k_4| s})\frac{1}{s^2}\\
			=\	& \frac1{16}\int_0^\infty\mathrm{d}s\frac{1}{s^2}\sum_{x_1,x_2,x_3,x_4=1}^2 (-1)^{x_1}e^{i(-1)^{x_1}|k_1|s}(-1)^{x_2}e^{i(-1)^{x_2}|k_2|s}(-1)^{x_3}e^{i(-1)^{x_3}|k_3|s}(-1)^{x_4}e^{i(-1)^{x_4}|k_4|s}\\
			=\	& \int_0^\infty\mathrm{d}s\frac{1}{s^2}\sum_{x_1,x_2,x_3,x_4=1}^2\frac1{16}(-1)^{x_1+x_2+x_3+x_4}\\
			&\times \exp\Big\{i\Big(\textstyle (-1)^{x_1}|k_1|+(-1)^{x_2}|k_2|+(-1)^{x_3}|k_3|+(-1)^{x_4}|k_4|\Big)s\Big\}\\
			=\	&		\frac\pi{16}\sum_{x_1,x_2,x_3,x_4=1}^2(-1)^{x_1+x_2+x_3+x_4+1}\Big|(-1)^{x_1}|k_1|+(-1)^{x_2}|k_2|+(-1)^{x_3}|k_3|+(-1)^{x_4}|k_4|\Big|\\
					=\ &  \tfrac\pi4\min\{|k_1|,|k_2|,|k_3|,|k|\}.
		\end{aligned}
	\end{equation}
	Inserting \eqref{Lemma:Concave:E2} into \eqref{Lemma:Concave:E1}, we find
	\begin{equation}\label{Lemma:Concave:E3}
		\begin{aligned}
			\mathfrak C  \left[ f\right]  
			\ =\ & 8\pi^2\iiint_{\mathbb{R}_+^{3}}\mathrm{d}|k_1|\,\mathrm{d}|k_2|\,\mathrm{d}|k_3|\frac{|k_1||k_2||k_3|\min\{|k_1|,|k_2|,|k_3|,|k|\}}{|k|} \delta(\omega + \omega_1 -\omega_2 - \omega_3)\\
			&\times [-ff_1(f_2+f_3)+f_2f_3(f_1+f)].
		\end{aligned}
	\end{equation}
 After the change of variables $|k|\to\omega$, $|k_1|\to\omega_1$, $|k_2|\to\omega_2$, $|k_3|\to\omega_3$, we have the following the weak form of \eqref{Lemma:Concave:E3}  
	\begin{equation}\label{Lemma:Concave:E4}
		\begin{aligned}
			\int_{\mathbb{R}_+}\mathrm{d}\omega 	\mathfrak C  \left[ f\right]  \rho(\omega)|k| \mho
			\ =\ &  C_0\iiiint_{\mathbb{R}_+^{4}}\mathrm{d}\omega_1\,\mathrm{d}\omega_2\,\mathrm{d}\omega_3\mathrm{d}\omega \delta(\omega + \omega_1 -\omega_2 - \omega_3)\Lambda(\omega,\omega_1,\omega_2,\omega_3)\\
			&\times [-ff_1(f_2+f_3)+f_2f_3(f_1+f)]\rho,
		\end{aligned}
	\end{equation}
	for some universal constant $ C_0>0$ and a test function $\rho(\omega)$ to be fixed later and 
	$$\Lambda(\omega,\omega_1,\omega_2,\omega_3)={\mho\mho_1\mho_2\mho_3\min\{|k_1|,|k_2|,|k_3|,|k|\}}.$$
 From \eqref{Lemma:Concave:E4}, we continue with  
	\begin{equation}\label{Lemma:Concave:E5}
		\begin{aligned}
			\int_{\mathbb{R}_+}\mathrm{d}\omega \mathfrak C \left[ f\right]  \rho(\omega)|k| \mho
			\ =\ & C_0\iiiint_{\mathbb{R}_+^{4}}\mathrm{d}\omega_1\,\mathrm{d}\omega_2\,\mathrm{d}\omega_3\mathrm{d}\omega \Lambda(\omega,\omega_1,\omega_2,\omega_3)\\
			&\times \delta(\omega + \omega_1 -\omega_2 - \omega_3)f_1f_2f[-\rho(\omega)-\rho(\omega_1)+\rho(\omega_2)+\rho(\omega_3)]\\
			\ =\ & C_0\iiint_{\mathbb{R}_+^{3}}\mathrm{d}\omega_1\,\mathrm{d}\omega_2\,\mathrm{d}\omega \Lambda(\omega,\omega_1,\omega_2,\omega+\omega_1-\omega_2)\\
			&\times f_1f_2f[-\rho(\omega)-\rho(\omega_1)+\rho(\omega_2)+\rho(\omega+\omega_1-\omega_2)]\\
				\ =\ & C_1\iiint_{\mathbb{R}_+^{3}}\mathrm{d}\omega_1\,\mathrm{d}\omega_2\,\mathrm{d}\omega f_1f_2f\mathbf{1}_{\omega+\omega_1-\omega_2\ge 0}\\
			&\times 	 \Big\{ [-\rho(\omega_{Max})-\rho(\omega_{Min})+\rho(\omega_{Mid})+\rho(\omega_{Max}+\omega_{Min}-\omega_{Mid})]\\
			&\ \ \ \ \times\Lambda(\omega_{Max},\omega_{Min},\omega_{Mid},\omega_{Max}+\omega_{Min}-\omega_{Mid})  \\
			&+[-\rho(\omega_{Max})-\rho(\omega_{Mid})+\rho(\omega_{Min})+\rho(\omega_{Max}+\omega_{Mid}-\omega_{Min})]\\
			&\ \ \ \ \times\Lambda(\omega_{Max},\omega_{Mid},\omega_{Min},\omega_{Max}+\omega_{Mid}-\omega_{Min})  \\
			&+[-\rho(\omega_{Min})-\rho(\omega_{Mid})+\rho(\omega_{Max})+\rho(\omega_{Min}+\omega_{Mid}-\omega_{Max})]\\
			&\ \ \ \ \times\Lambda(\omega_{Min},\omega_{Mid},\omega_{Max},\omega_{Min}+\omega_{Mid}-\omega_{Max})\Big\}\\
			\ =\ & C_1\iiint_{\mathbb{R}_+^{3}}\mathrm{d}\omega_1\,\mathrm{d}\omega_2\,\mathrm{d}\omega f_1f_2f\mathbf{1}_{\omega+\omega_1-\omega_2\ge 0}\\
			&\times 	 \Big\{ [-\rho(\omega_{Max})-\rho(\omega_{Min})+\rho(\omega_{Mid})+\rho(\omega_{Max}+\omega_{Min}-\omega_{Mid})]\\
			&\ \ \ \ \times\mho(\omega_{Max})\mho(\omega_{Min})\mho(\omega_{Mid})\mho(\omega_{Max}+\omega_{Min}-\omega_{Mid})|k_{Min}|  \\
			&+[-\rho(\omega_{Max})-\rho(\omega_{Mid})+\rho(\omega_{Min})+\rho(\omega_{Max}+\omega_{Mid}-\omega_{Min})]\\
			&\ \ \ \ \times\mho(\omega_{Max})\mho(\omega_{Min})\mho(\omega_{Mid})\mho(\omega_{Max}+\omega_{Mid}-\omega_{Min})|k_{Min}|  \\
			&+[-\rho(\omega_{Min})-\rho(\omega_{Mid})+\rho(\omega_{Max})+\rho(\omega_{Min}+\omega_{Mid}-\omega_{Max})]\\
			&\ \ \ \ \times\Lambda(\omega_{Min},\omega_{Mid},\omega_{Max},\omega_{Min}+\omega_{Mid}-\omega_{Max})\Big\},
		\end{aligned}
	\end{equation}
 	for some universal constant $ C_1>0$, where we have used the same notations as in  \eqref{Sec:FV:5}. We assume that   $|k_{Min}|$ is associated to $\omega_{Min}$, $|k_{Max}|$ is associated to $\omega_{Max}$, $|k_{Mid}|$ is associated to $\omega_{Mid}$, and when $\omega_{Min}+\omega_{Mid}-\omega_{Max}\le 0$, we assume that $\Lambda(\omega_{Min},\omega_{Mid},\omega_{Max},\omega_{Min}+\omega_{Mid}-\omega_{Max})=0.$

	It is straightforward that 
	\begin{equation}\label{Lemma:Concave:E5a}
		\begin{aligned}
			&[-\rho(\omega_{Min})-\rho(\omega_{Mid})+\rho(\omega_{Max})+\rho(\omega_{Min}+\omega_{Mid}-\omega_{Max})]\\
			&\ \ \ \ \times\Lambda(\omega_{Min},\omega_{Mid},\omega_{Max},\omega_{Min}+\omega_{Mid}-\omega_{Max})\Big\}\\	=\ &\int_{0}^{\omega_{Max}-\omega_{Min}}\mathrm{d}\xi_1\int_{0}^{\omega_{Max}-\omega_{Mid}}\mathrm{d}\xi_2\rho''(\xi_1+\xi_2+\omega_{Min})\\
			&\ \ \ \ \times\Lambda(\omega_{Min},\omega_{Mid},\omega_{Max},\omega_{Min}+\omega_{Mid}-\omega_{Max})\ \le 0,
		\end{aligned}
	\end{equation}
	under the further restriction   $\rho''\le 0$. Note that the form of $\rho$ will be fixed later.

	Next, we compute
	\begin{equation}\label{Lemma:Concave:E6}
		\begin{aligned}
			& 	  [-\rho(\omega_{Max})-\rho(\omega_{Min})+\rho(\omega_{Mid})+\rho(\omega_{Max}+\omega_{Min}-\omega_{Mid})]\mho(\omega_{Max}+\omega_{Min}-\omega_{Mid})  \\
			&+[-\rho(\omega_{Max})-\rho(\omega_{Mid})+\rho(\omega_{Min})+\rho(\omega_{Max}+\omega_{Mid}-\omega_{Min})]\mho(\omega_{Max}+\omega_{Mid}-\omega_{Min})\\
	=\		& 	  -\int_{0}^{\omega_{Mid}-\omega_{Min}}\mathrm{d}s\int_{0}^{\omega_{Max}-\omega_{Mid}}\mathrm{d}s_0\mho(\omega_{Max}+\omega_{Min}-\omega_{Mid}) \rho''(\omega_{Min}+s+s_0) \\
		&+\int_{0}^{\omega_{Mid}-\omega_{Min}}\mathrm{d}s\int_{0}^{\omega_{Max}-\omega_{Min}}\mathrm{d}s_0\mho(\omega_{Max}-\omega_{Min}+\omega_{Mid}) \rho''(\omega_{Min}+s+s_0)\\
		=\		& 	  \int_{0}^{\omega_{Mid}-\omega_{Min}}\mathrm{d}s\int_{0}^{\omega_{Max}-\omega_{Mid}}\mathrm{d}s_0 \rho''(\omega_{Min}+s+s_0) \\
		&\times[\mho(\omega_{Max}-\omega_{Min}+\omega_{Mid})-\mho(\omega_{Max}+\omega_{Min}-\omega_{Mid})]\\
	&+\int_{0}^{\omega_{Mid}-\omega_{Min}}\mathrm{d}s\int_{0}^{\omega_{Mid}-\omega_{Min}}\mathrm{d}s_0\mho(\omega_{Max}-\omega_{Min}+\omega_{Mid}) \rho''(\omega_{Min}+s+s_0)\\
	\le\ &	\int_{0}^{\omega_{Mid}-\omega_{Min}}\mathrm{d}s\int_{0}^{\omega_{Mid}-\omega_{Min}}\mathrm{d}s_0\mho(\omega_{Max}-\omega_{Min}+\omega_{Mid}) \rho''(\omega_{Min}+s+s_0),
		\end{aligned}
	\end{equation}
under the further restriction   $\rho''\le 0$ and note that $\mho$ is a non-decreasing function.

By choosing $\rho(\omega)=\ln(\omega+1)$ for $\omega\in\mathbb{R}_+$, we have $\rho''(\omega)=-\frac{1}{(\omega+1)^2}$. We obtain

		\begin{equation}\label{Lemma:Concave:E7}
		\begin{aligned}
			& 	  [-\rho(\omega_{Max})-\rho(\omega_{Min})+\rho(\omega_{Mid})+\rho(\omega_{Max}+\omega_{Min}-\omega_{Mid})]\mho(\omega_{Max}+\omega_{Min}-\omega_{Mid})  \\
			&+[-\rho(\omega_{Max})-\rho(\omega_{Mid})+\rho(\omega_{Min})+\rho(\omega_{Max}+\omega_{Mid}-\omega_{Min})]\mho(\omega_{Max}+\omega_{Mid}-\omega_{Min})\\
			\le\ &	-\int_{0}^{\omega_{Mid}-\omega_{Min}}\mathrm{d}s\int_{0}^{\omega_{Mid}-\omega_{Min}}\mathrm{d}s_0\mho(\omega_{Max}-\omega_{Min}+\omega_{Mid}) \frac{1}{(\omega_{Min}+s+s_0+1)^2}\\
			\le\ &	-\int_{0}^{\omega_{Mid}-\omega_{Min}}\mathrm{d}s\int_{0}^{\omega_{Mid}-\omega_{Min}}\mathrm{d}s_0\mho(\omega_{Max}-\omega_{Min}+\omega_{Mid}) \frac{1}{(\omega_{Min}+2(\omega_{Mid}-\omega_{Min})+1)^2}\\
			\le\ &	-\mho(\omega_{Max}-\omega_{Min}+\omega_{Mid}) \frac{(\omega_{Mid}-\omega_{Min})^2}{(\omega_{Min}+2(\omega_{Mid}-\omega_{Min})+1)^2}\\\ \le\  &	-\mho(\omega_{Max}-\omega_{Min}+\omega_{Mid}) \frac{(\omega_{Mid}-\omega_{Min})^2}{( 2\omega_{Mid}-\omega_{Min}+1)^2},
		\end{aligned}
	\end{equation}
which, together with \eqref{Lemma:Concave:E5}-\eqref{Lemma:Concave:E5a}, implies 
	\begin{equation}\label{Lemma:Concave:E8}
	\begin{aligned}
		&\int_{\mathbb{R}_+}\mathrm{d}\omega \mathfrak C \left[ f\right]  \ln(\omega+1)|k| \mho
		\ \le\  -C_1'\iiint_{\mathbb{R}_+^{3}}\mathrm{d}\omega_1\,\mathrm{d}\omega_2\,\mathrm{d}\omega f_1f_2f\mathbf{1}_{\omega+\omega_1-\omega_2\ge 0}|k_{Min}| \\
	&\times 	
		\frac{(\omega_{Mid}-\omega_{Min})^2}{( 2\omega_{Mid}-\omega_{Min})^2+1}
		\mho(\omega_{Max})\mho(\omega_{Min})\mho(\omega_{Mid})\mho(\omega_{Max}-\omega_{Min}+\omega_{Mid}),
	\end{aligned}
\end{equation}
for some constant $C_1'$, yielding
\begin{equation}\label{Lemma:Concave:E9}
	\begin{aligned}
	&\partial_t	\int_{\mathbb{R}_+}\mathrm{d}\omega \  f  \ln(\omega+1)|k| \mho
		\ \le\  -C_1'\iiint_{\mathbb{R}_+^{3}}\mathrm{d}\omega_1\,\mathrm{d}\omega_2\,\mathrm{d}\omega f_1f_2f\mathbf{1}_{\omega+\omega_1-\omega_2\ge 0}|k_{Min}| \\
			&\times \frac{(\omega_{Mid}-\omega_{Min})^2}{( 2\omega_{Mid}-\omega_{Min})^2+1}
		\mho(\omega_{Max})\mho(\omega_{Min})\mho(\omega_{Mid})\mho(\omega_{Max}-\omega_{Min}+\omega_{Mid}).
	\end{aligned}
\end{equation}
Integrating both sides of \eqref{Lemma:Concave:E9} in $t$, we find
\begin{equation}\label{Lemma:Concave:E10}
	\begin{aligned}
	& 	\int_{\mathbb{R}}\mathrm{d}k \  f(0,\omega) \ln(\omega+1) - 	\int_{\mathbb{R}}\mathrm{d}k \  f(T,\omega) \ln(\omega+1) \\
		\ \ge\ & C_1'\int_0^T\mathrm d t \iiint_{\mathbb{R}_+^{3}}\mathrm{d}\omega_1\,\mathrm{d}\omega_2\,\mathrm{d}\omega f_1f_2f\mathbf{1}_{\omega+\omega_1-\omega_2\ge 0}|k_{Min}| \frac{(\omega_{Mid}-\omega_{Min})^2}{( 2\omega_{Mid}-\omega_{Min})^2+1}\\
		&\times 	\mho(\omega_{Max})\mho(\omega_{Min})\mho(\omega_{Mid})\mho(\omega_{Max}-\omega_{Min}+\omega_{Mid}).
	\end{aligned}
\end{equation}

The proposition is proved. 	
\end{proof}

\subsection{Estimating non-condensation times}
\begin{proposition}
	\label{Propo:Collision} There exists a universal constant ${C}_0>0$ such that the following estimate holds true, for $0<\nu,h<1/10$,	\begin{equation}\label{Propo:Collision:1}
		\begin{aligned}
			\frac{{C}_0R^{2+\frac2\alpha}(	\mathscr M +\mathscr E)}{h^{2}\mho(h) \nu^4}
		\ \ge\ &\int_{\digamma^2}\mathrm{d}t  \left( \int_{[0,R)}\mathrm{d}\omega F(t,\omega)\right)^3.
		\end{aligned}
	\end{equation}
\end{proposition}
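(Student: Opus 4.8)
The plan is to combine the uniform-in-time estimate of Lemma~\ref{lemma:Concave} with a domain-decomposition lower bound for the cubic form on the right-hand side of \eqref{lemma:Concave:1}, evaluated at times in $\digamma^2$. On the left-hand side of \eqref{lemma:Concave:1} we bound $\int_{\mathbb{R}^3}\mathrm{d}k\,f(0,k)\ln(\omega+1)\le\mathscr{E}\le\mathscr{M}+\mathscr{E}$ using $\ln(\omega+1)\le\omega$ and \eqref{MassEnergy}. On the right-hand side, the integrand is nonnegative and \eqref{lemma:Concave:1} holds for all $T>0$, so we may replace $\int_0^T\mathrm{d}t$ by $\int_{\digamma^2}\mathrm{d}t$ and restrict the $(\omega,\omega_1,\omega_2)$-integration to the ordered ``good'' cell
\[
\mathcal{R}:=\big\{(\omega,\omega_1,\omega_2):\ 0\le\omega_2<\omega_1<\omega<R,\ \ \omega_1-\omega_2\ge 2h\big\}
\]
(nonempty by the standing assumption $h\ll R$). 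On $\mathcal{R}$ one has $\omega_{Min}=\omega_2$, $\omega_{Mid}=\omega_1$, $\omega_{Max}=\omega$, the factor $\mathbf{1}_{\omega+\omega_1-\omega_2\ge 0}$ equals $1$, and, using \eqref{FDefinition},
\[
f_1f_2f\,\mho(\omega_{Max})\mho(\omega_{Min})\mho(\omega_{Mid})\,|k_{Min}|\ =\ \frac{F(t,\omega_1)F(t,\omega_2)F(t,\omega)}{|k_{Mid}||k_{Max}|}.
\]

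The pointwise bounds on $\mathcal{R}$ come next: $(\omega_{Mid}-\omega_{Min})^2\ge 4h^2$; since $\omega_{Max}-\omega_{Min}+\omega_{Mid}\ge\omega_1\ge 2h$ and $\mho$ is increasing, $\mho(\omega_{Max}-\omega_{Min}+\omega_{Mid})\ge\mho(h)$; $(2\omega_{Mid}-\omega_{Min})^2+1\le 4R^2+1\lesssim R^2$; and by \eqref{Settings3}, $|k_{Mid}|,|k_{Max}|\le(R/C_\omega)^{1/\alpha}$, whence $(|k_{Mid}||k_{Max}|)^{-1}\gtrsim R^{-2/\alpha}$. Multiplying, the integrand on the right of \eqref{lemma:Concave:1} is $\gtrsim h^2\mho(h)R^{-2-2/\alpha}\,F(t,\omega_1)F(t,\omega_2)F(t,\omega)$ on $\mathcal{R}$, and therefore
\[
\mathscr{M}+\mathscr{E}\ \gtrsim\ \frac{h^2\mho(h)}{R^{2+2/\alpha}}\int_{\digamma^2}\mathrm{d}t\iiint_{\mathcal{R}}\mathrm{d}\omega\,\mathrm{d}\omega_1\,\mathrm{d}\omega_2\ F(t,\omega_1)F(t,\omega_2)F(t,\omega).
\]

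The core of the argument is the domain-decomposition estimate: for every $t\in\digamma^2$, with $G:=\int_{[0,R)}\mathrm{d}\omega\,F(t,\omega)$, one has $\iiint_{\mathcal{R}}F(t,\omega_1)F(t,\omega_2)F(t,\omega)\,\mathrm{d}\omega\,\mathrm{d}\omega_1\,\mathrm{d}\omega_2\ge\tfrac16\nu^3 G^3$. The key point is that for each fixed $\omega_1$ the $2h$-window $[(\omega_1-2h)_+,\omega_1]$ always lies inside one of the overlapping subdomains $\Xi^{h,R}_i$ of \eqref{Sec:FV:2}: in the generic case $\omega_1\in\Delta^{h,R}_j$ with $2\le j\le\mathscr{N}_{h,R}-2$ the window lies in $\Xi^{h,R}_{j-1}$, while the cases $j\in\{0,1\}$ and $j=\mathscr{N}_{h,R}-1$ are checked by hand. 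By the definition of $\digamma^2$ in \eqref{TimePump}, this forces $\int_{(\omega_1-2h)_+}^{\omega_1}F(t)<(1-\nu)G$ for $t\in\digamma^2$. Writing $\iiint_{0\le\omega_2<\omega_1<\omega<R}F_1F_2F=\tfrac16 G^3$ and estimating the complementary region $\{\omega_1-\omega_2<2h\}$ by $\int_0^R\mathrm{d}\omega_1\,F(t,\omega_1)\big(\int_{(\omega_1-2h)_+}^{\omega_1}F(t)\big)\big(\int_{\omega_1}^R F(t)\big)$, the substitution $u=\int_0^{\omega_1}F(t)$ converts the latter into $\int_0^G\min\{u,(1-\nu)G\}(G-u)\,\mathrm{d}u=G^3\big(\tfrac{1-\nu}{2}-\tfrac{(1-\nu)^2}{2}+\tfrac{(1-\nu)^3}{6}\big)$; subtracting this from $\tfrac16 G^3$ leaves exactly $\tfrac16\nu^3 G^3$. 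Plugging this into the previous display, using $\nu<1/10$ to replace $\nu^3$ by $\nu^4$ in the denominator, and rearranging gives \eqref{Propo:Collision:1}.

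The main obstacle is this last combinatorial/measure-theoretic step: one has to make sure the $2h$-window always falls inside a single $\Xi^{h,R}_i$ so that the spreading hypothesis \eqref{TimePump} applies uniformly in $\omega_1$ (the boundary subdomains requiring separate bookkeeping), and that the layer-cake optimization against the constraint ``no $\Xi^{h,R}_i$ carries more than a $(1-\nu)$-fraction of the mass'' is done sharply. The change of unknown from $f$ to $F$ and the pointwise estimates generating the powers of $h$, $R$ and $\mho(h)$ are routine by comparison; the only minor caveat is the passage $4R^2+1\lesssim R^2$, absorbed into $C_0$ (the regime of very small $R$, if it must be included, being dispatched by a crude direct bound).
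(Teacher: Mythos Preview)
Your argument is correct and takes a genuinely different route from the paper's. Both proofs start identically: invoke Lemma~\ref{lemma:Concave}, restrict to $[0,R)^3$, pass from $f$ to $F$ via \eqref{FDefinition}, and extract the factor $h^2\mho(h)/R^{2+2/\alpha}$ from the kernel on the region where $\omega_{Mid}-\omega_{Min}$ is at least of order $h$. The divergence is in the combinatorial lower bound. The paper (Steps~2--3) builds, for each $t\in\digamma^2$, a carefully chosen partition $\mathfrak{Z}_t$ of well-separated cells carrying a $\nu$-fraction of the mass, and then runs through several ``Possibilities'' to compare $\sum_{(i,j,l)\in\mathfrak X}\prod\int_{\Delta}F$ with $\big(\int_{\mathscr X^1}F\big)\big(\int_{\mathscr X^2}F\big)^2$; the bookkeeping costs an extra power of $\nu$ and produces $\nu^4$. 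You bypass this entirely: the observation that every window $[(\omega_1-2h)_+,\omega_1]$ sits inside a single $\Xi^{h,R}_i$ lets you feed the spreading hypothesis \eqref{TimePump} directly into the layer-cake integral, and the change of variable $u=\int_0^{\omega_1}F$ turns the estimate of the complementary strip into the exact identity
\[
\tfrac16 G^3-\int_0^G\min\{u,(1-\nu)G\}(G-u)\,\mathrm{d}u=\tfrac16\nu^3G^3.
\]
This is shorter, avoids the case analysis, and gives the sharper exponent $\nu^3$ (which of course implies the stated $\nu^4$ bound).

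Two small remarks. First, your simplex identity $\iiint_{\omega_2<\omega_1<\omega}F_1F_2F=\tfrac16G^3$ and the change of variable $u=\int_0^{\omega_1}F$ implicitly use that $F(t,\cdot)$ is an $L^1$ density rather than a general measure; this is consistent with the paper's Definition of mild solution, but it is worth making the assumption explicit. Second, the passage $4R^2+1\lesssim R^2$ that you flag is indeed a genuine caveat for small $R$; the paper's own proof performs exactly the same replacement in going from \eqref{Propo:Collision:4} to \eqref{Propo:Collision:5}, so you are not introducing a new gap but inheriting one. Your remark that very small $R$ would need a separate (and easy) treatment is fair, though note that in the applications (Lemma~\ref{Lemma:Growth1}) the proposition is in fact invoked with $R=\mathfrak L_n\to0$, so if you wanted a fully self-contained statement you would either restate the bound with $R^{2+2/\alpha}$ replaced by $\max(R,1)^2R^{2/\alpha}$ or track the $+1$ through.
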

\begin{proof}
	We deduce from \eqref{lemma:Concave:1} that
	
		\begin{equation}\label{Propo:Collision:1}
		\begin{aligned}
	\mathscr M +\mathscr E\ =	\ &	\int_{\mathbb{R}^3}\mathrm{d}k  f(0,k)  (\omega+1)
			\ \ge\   C\int_0^T\mathrm{d}t\iiint_{\mathbb{R}_+^{3}}\mathrm{d}\omega_1\,\mathrm{d}\omega_2\,\mathrm{d}\omega f_1f_2f\mathbf{1}_{\omega+\omega_1-\omega_2\ge 0}\\
			&\times 	\mho(\omega_{Max})\mho(\omega_{Min})\mho(\omega_{Mid})\mho(\omega_{Max}-\omega_{Min}+\omega_{Mid})|k_{Min}| \frac{(\omega_{Mid}-\omega_{Min})^2}{( 2\omega_{Mid}-\omega_{Min})^2+1},
		\end{aligned}
	\end{equation}
for some constant $C>0$, yielding
	\begin{equation}\label{Propo:Collision:2}
	\begin{aligned}
	&	\mathscr M +\mathscr E\ =	\ 	\int_{\mathbb{R}^3}\mathrm{d}k  f(0,k)  (\omega+1)
		\ \ge\   C\int_0^T\mathrm{d}t\iiint_{[0,R)^{3}}\mathrm{d}\omega_1\,\mathrm{d}\omega_2\,\mathrm{d}\omega f_1f_2f\mathbf{1}_{\omega+\omega_1-\omega_2\ge 0}\\
		&\times 	\mho(\omega_{Max})\mho(\omega_{Min})\mho(\omega_{Mid})\mho(\omega_{Max}-\omega_{Min}+\omega_{Mid})|k_{Min}| \frac{(\omega_{Mid}-\omega_{Min})^2}{( 2\omega_{Mid}-\omega_{Min})^2+1}\\
			\ \gtrsim\   & \int_0^T\mathrm{d}t\iiint_{[0,R)^{3}}\mathrm{d}\omega_1\,\mathrm{d}\omega_2\,\mathrm{d}\omega F_1F_2F\mathbf{1}_{\omega+\omega_1-\omega_2\ge 0}  \frac{\mho(\omega_{Max}-\omega_{Min}+\omega_{Mid})}{|k_{Mid}||k_{Max}|} \frac{(\omega_{Mid}-\omega_{Min})^2}{( 2\omega_{Mid}-\omega_{Min})^2+1}\\
			\ \gtrsim\   & \int_0^T\mathrm{d}t\iiint_{[0,R)^{3}}\mathrm{d}\omega_1\,\mathrm{d}\omega_2\,\mathrm{d}\omega F_1F_2F\mathbf{1}_{\omega+\omega_1-\omega_2\ge 0}  \frac{\mho(\omega_{Max}-\omega_{Min}+\omega_{Mid})}{|k_{Mid}||k_{Max}|} \frac{(\omega_{Mid}-\omega_{Min})^2}{( 2\omega_{Mid}-\omega_{Min})^2+1}
			\\
				\ \gtrsim\   & \int_0^T\mathrm{d}t\iiint_{[0,R)^{3}}\mathrm{d}\omega_1\,\mathrm{d}\omega_2\,\mathrm{d}\omega F_1F_2F\mathbf{1}_{\omega+\omega_1-\omega_2\ge 0}  \frac{\mho(\omega_{Max}-\omega_{Min}+\omega_{Mid})}{|\omega_{Mid}|^\frac1\alpha|\omega_{Max}|^\frac1\alpha} \frac{(\omega_{Mid}-\omega_{Min})^2}{( 2\omega_{Mid}-\omega_{Min})^2+1},
	\end{aligned}
\end{equation}
 where we denoted $F=F(\omega)$,  $F_1=F(\omega_1)$, $F_2=F(\omega_2)$.  
 We divide the rest of the proof into smaller steps.

{\it Step 1: Subdomain estimates.}

We will show that
	
		\begin{equation}\label{Propo:Collision:3}
		E_{R,h}\subset \left[
		\bigcup_{\mathscr N_{R,h} > \mathrm{max}\{i,j,l\}\geq \mathrm{mid}\{i,j,l\}>\mathrm{min}\{i,j,l\}+1}\Delta_{i,j,l}^{h,R} \right] \subset   	E_{R,h}',
	\end{equation}
	which can be combined with \eqref{Propo:Collision:2} to get
	\begin{equation}\label{Propo:Collision:4}
		\begin{aligned}
			\mathscr M +\mathscr E 
			\ \gtrsim\ & \sum_{\mathscr N_{R,h} > \mathrm{max}\{i,j,l\}\geq \mathrm{mid}\{i,j,l\}>\mathrm{min}\{i,j,l\}+1}\int_0^T\mathrm{d}t\iiint_{\Delta_{i,j,l}^{h,R} }\mathrm{d}\omega_1\,\mathrm{d}\omega_2\,\mathrm{d}\omega F_1F_2F\mathbf{1}_{\omega+\omega_1-\omega_2\ge 0}\\
			&\times 	 \frac{\mho(\omega_{Max}-\omega_{Min}+\omega_{Mid})}{|\omega_{Mid}|^\frac1\alpha|\omega_{Max}|^\frac1\alpha} \frac{(\omega_{Mid}-\omega_{Min})^2}{( 2\omega_{Mid}-\omega_{Min})^2+1}.
		\end{aligned}
	\end{equation}

	To prove \eqref{Propo:Collision:3}, let us consider $\left( \omega_2,\omega_{1},\omega\right) \in%
	{E}_{R,h}.$ Without loss of generality, we assume    $%
	\omega=\omega_{Min}<\omega_{1}=\omega_{Mid}\leq\omega_{2}=%
	\omega_{Max}.$ From our construction, there exist $%
	i,\ j$ such that $\omega_{2}\in\Delta_{i}^{h,R},\ \omega_{1}\in%
	\Delta_{j}^{h,R}$ with $j\leq i.$ By the definition of ${E}_{R,h}$, we have $\omega\le \omega_1-2h< (j-1)h,$ yielding $\omega\in\bigcup_{l=0}^{j-2}\Delta_{l}^{h,R}$. Therefore:%
	\begin{equation*}
		\left(\omega_2, \omega_{1},\omega\right) \in\bigcup_{\mathscr N_{R,h} > i\geq j>l+1}\Delta_{i,j,l}^{h,R}, 
	\end{equation*}
	yielding
	
	\begin{equation}
		E_{R,h}\subset \left[
		\bigcup_{\mathscr N_{R,h} > \mathrm{max}\{i,j,l\}\geq \mathrm{mid}\{i,j,l\}>\mathrm{min}\{i,j,l\}+1}\Delta_{i,j,l}^{h,R} \right],  
	\end{equation}
	where we have used the notations of \eqref{Mid}.  Next, we choose $$\left(\omega_2, \omega_{1},\omega\right) \in
	\bigcup_{\mathscr N_{R,h} > \mathrm{max}\{i,j,l\}\geq \mathrm{mid}\{i,j,l\}>\mathrm{min}\{i,j,l\}+1}\Delta_{i,j,l}^{h,R}.$$
	Without loss of generality, we suppose
	$\omega_{2}\in\Delta_{i}^{h,R},\ \omega_{1}\in%
	\Delta_{j}^{h,R}$, $\omega\in \Delta_{l}^{h,R}$ with $\mathscr N_{R,h} > i\geq j>l+1$, yielding
	$\left\vert \omega_{Mid}-	\omega_{Min}\right\vert \ge h$. We then have
	\begin{equation}
		\left[
		\bigcup_{\mathscr N_{R,h} > \mathrm{max}\{i,j,l\}\geq \mathrm{mid}\{i,j,l\}>\mathrm{min}\{i,j,l\}+1}\Delta_{i,j,l}^{h,R} \right] \subset E_{R,h}',  
	\end{equation}
that completes the proof of \eqref{Propo:Collision:3}.

	Coming back to the estimate \eqref{Propo:Collision:4}, using the second inclusion in \eqref{Propo:Collision:3}, we obtain
		\begin{equation}\label{Propo:Collision:5}
		\begin{aligned}
			\mathscr M +\mathscr E
			\ \gtrsim\ &  \frac{h^2}{R^{2+\frac2\alpha}}\sum_{\mathscr N_{R,h} > i\geq j>l+1}\int_0^T\mathrm{d}t\iiint_{\Delta_{i,j,l}^{h,R} }\mathrm{d}\omega_1\,\mathrm{d}\omega_2\,\mathrm{d}\omega F_1F_2F\mathbf{1}_{\omega+\omega_1-\omega_2\ge 0}\\
			&\times\mho(\omega_{Max}-\omega_{Min}+\omega_{Mid})\\
			\ \gtrsim\ &  \frac{h^2}{R^{2+\frac2\alpha}}\sum_{(i,j,l)\in\mathfrak{X}}\int_0^T\mathrm{d}t\iiint_{\Delta_{i,j,l}^{h,R} }\mathrm{d}\omega_1\,\mathrm{d}\omega_2\,\mathrm{d}\omega F_1F_2F\mathbf{1}_{\omega+\omega_1-\omega_2\ge 0}\\
			&\times\mho(\omega_{Max}-\omega_{Min}+\omega_{Mid}),		\end{aligned}
	\end{equation}
where $\mathfrak{X} : = \{(i,j,l) \in\mathbb{Z}^3 ~~ |~~ i,j,l\ge0,\  \mathscr N_{R,h} > i\geq j>l+1\}$.

To further investigate  \eqref{Propo:Collision:5}, we choose $t$ to be any time in $\digamma^2$ and follow a domain decomposition argument to construct bigger non-overlapping subdomains.

{\it Step 2: Constructing bigger non-overlapping subdomains.}

 We denote   $l_t^*$ to be an index in $\{0,\cdots,\mathscr N_{R,h} -1\}$ such that
	\begin{equation}\label{Propo:Collision:6}
	\begin{aligned}
\int_{\Delta_{l_t^*}^{h,R} }\mathrm{d}\omega F(t) \ = \ \max_{i\in\{0,\cdots,\mathscr N_{R,h} -1\}}\Big\{	\int_{\Delta_{i}^{h,R} }\mathrm{d}\omega F(t)\, \Big\}.
	\end{aligned}
\end{equation}
We denote $\mathfrak S_t$ to be the set of partitions of $\{0,\cdots,\mathscr N_{R,h} -1\}$ that satisfies the following criteria.
	
	\begin{itemize}
		\item[(a)] If $\mathfrak{I}_t\in \mathfrak{S}_t$, then $l_t^*\in \mathfrak{I}_t$.  
			\item[(b)] Suppose $\mathfrak{I}_t=\{l_1,\cdots,l_m\}$, then
			for all $l\in  \mathfrak{I}_t$
				\begin{equation}\label{Propo:Collision:6a}	 	 \Delta_{l}^{h,R}\cap \bigcup_{i\in\mathfrak{I}_t \backslash\{l\}}\Xi_{i}^{h,R} \ =\ \emptyset.
			\end{equation}
		\item[(b)] There exists a constant $20\ge c_t\ge 3$ such that for all $\mathfrak{I}_t\in \mathfrak{S}_t$
		\begin{equation}\label{Propo:Collision:7}	 	c_t\int_{\cup_{i\in\mathfrak{I}_t }\Delta_{i}^{h,R} }\mathrm{d}\omega F(t) \ \ge \ \int_{\cup_{i\in\mathfrak{I}_t }\Xi_{i}^{h,R} }\mathrm{d}\omega F(t).
		\end{equation}
			\item[(d)] For all $\mathfrak{I}_t\in \mathfrak{S}_t$ 
			\begin{equation}\label{Propo:Collision:8} \int_{\cup_{i\in\mathfrak{I}_t }\Xi_{i}^{h,R}}\mathrm{d}\omega F(t) \ < \ 	(1- \nu)\int_{[0,R)}\mathrm{d}\omega F(t).
			\end{equation}
		
	\end{itemize}
Since  \begin{equation}\label{Propo:Collision:9}	3\int_{\Delta_{l_t^*}^{h,R} }\mathrm{d}\omega F(t) \ \ge \ \int_{ \Xi_{l_t^*}^{h,R} }\mathrm{d}\omega F(t),
\end{equation}
the set of partitions $	\mathfrak S_t$ is non empty. We define  with $\mathfrak{Z}_t=\{l_1,\cdots,l_m\}\in\mathfrak S_t$, in which $l_1=l_t^*$, the partition such that 
\begin{equation}\label{Propo:Collision:10:1}
	\begin{aligned}
\int_{\Delta_{l_j}^{h,R} }\mathrm{d}\omega F(t) \ = \ \max_{i\in\{0,\cdots,\mathscr N_{R,h} -1\}\backslash\Big(I_{l_1}^{h,R} \cup\cdots\cup I_{l_{j-1}}^{h,R} \Big)}\Big\{	\int_{\Delta_{i}^{h,R} }\mathrm{d}\omega F(t)\, \Big\},
	\end{aligned}
\end{equation}
for $j=2,\cdots,m$ and we have used the notations of \eqref{Sec:FV:3:1}. In addition, for all $l\in\{0,\cdots,\mathscr N_{R,h} -1\}\backslash \Big(I_{l_1}^{h,R} \cup\cdots\cup I_{l_{m}}^{h,R} \Big)$, we have 
		\begin{equation}\label{Propo:Collision:10}	\int_{\cup_{i\in\mathfrak{Z}_t \cup\{l\}}\Xi_{i}^{h,R} }\mathrm{d}\omega F(t)  \ \ge \ 	(1- \nu)\int_{[0,R)}\mathrm{d}\omega F(t).
	\end{equation}
Denoting $\mathscr I_t= \{0,\cdots,\mathscr N_{R,h} -1\}\backslash \mathfrak{Z}_t $ and $\mathscr X^1= [0,R)\backslash\cup_{i\in\mathfrak{Z}_t }\Xi_{i}^{h,R}=\cup_{i\in\mathfrak{Y}_t }\Delta_{i}^{h,R}$, 	 $\mathscr X^2=\cup_{i\in\mathfrak{Z}_t }\Delta_{i}^{h,R}$, we have \begin{equation}\label{Propo:Collision:11}	\int_{\mathscr X^1 }\mathrm{d}\omega F(t) \ \ge \ 	 \nu\int_{[0,R)}\mathrm{d}\omega F(t).
\end{equation}
	 We denote   $l_t^{**}$ to be an index in $\mathscr I_t$ such that $\Delta_{l_t^{**}}^{h,R} \cap\bigcup_{i\in\mathfrak{Z}_t}\Xi_{i}^{h,R}=\emptyset$ and
	\begin{equation}\label{Propo:Collision:12}
		\begin{aligned}
			\int_{\Delta_{l_t^{**}}^{h,R} }\mathrm{d}\omega F(t) \ = \ \max_{i\in \mathscr I_t}\Big\{	\int_{\Delta_{i}^{h,R} }\mathrm{d}\omega F(t)\, \Big\}.
		\end{aligned}
	\end{equation}
Let $\mathfrak A$ and $\mathfrak B$ be the two sets that satisfies 
$$\mathfrak A\cup \mathfrak B = \Xi_{l_t^{**}}^{h,R}, \ \ \ \ \mathfrak A\cap \mathfrak B=\emptyset, \ \ \ \   \mathfrak A\subset \bigcup_{i\in\mathfrak{Z}_t}\Xi_{i}^{h,R}, \ \ \ \    \mathfrak B\cap \bigcup_{i\in\mathfrak{Z}_t}\Xi_{i}^{h,R}=\emptyset.$$
By the definition of the set $\mathfrak{Z}_t$,
	\begin{equation}\label{Propo:Collision:13}	 	c_t\int_{\cup_{i\in\mathfrak{Z}_t }\Delta_{i}^{h,R} }\mathrm{d}\omega F(t) \ \ge \ \int_{\cup_{i\in\mathfrak{Z}_t}\Xi_{i}^{h,R} }\mathrm{d}\omega F(t),
\end{equation}
	and by the definition of the point $l_t^{**}$
	\begin{equation}\label{Propo:Collision:14}	 	c_t\int_{ \Delta_{l_t^{**}}^{h,R} }\mathrm{d}\omega F(t) \ \ge \ 3\int_{ \Delta_{l_t^{**}}^{h,R} }\mathrm{d}\omega F(t) \ \ge \ \int_{\mathfrak B }\mathrm{d}\omega F(t).
	\end{equation}
Combining \eqref{Propo:Collision:13}	 and \eqref{Propo:Collision:14} then using \eqref{Propo:Collision:10} we obtain

	\begin{equation}\label{Propo:Collision:15}	 	c_t\int_{\cup_{i\in\mathfrak{Z}_t \cup\{l_t^{**}\}}\Delta_{i}^{h,R} }\mathrm{d}\omega F(t) \ \ge \ \int_{\cup_{i\in\mathfrak{Z}_t\cup\{l_t^{**}\}}\Xi_{i}^{h,R} }\mathrm{d}\omega F(t) \ \ge \ 	(1- \nu)\int_{[0,R)}\mathrm{d}\omega F(t).
\end{equation}
	Let us consider the two cases. We suppose in the first case that 
		\begin{equation}\label{Propo:Collision:16}
		\begin{aligned}
			\int_{\Delta_{l_t^{**}}^{h,R} }\mathrm{d}\omega F(t) \ \ge \frac{ \nu}{20}\int_{[0,R)}\mathrm{d}\omega F(t).
		\end{aligned}
	\end{equation}
	Then by \eqref{Propo:Collision:6}-\eqref{Propo:Collision:16}, we find
		\begin{equation}\label{Propo:Collision:17}
		\begin{aligned}
	\int_{\mathscr X^2 } \mathrm{d}\omega G(t) \ = \	\int_{\cup_{i\in\mathfrak{Z}_t }\Delta_{i}^{h,R} }\mathrm{d}\omega F(t) \ \ge \	\int_{\Delta_{l_t^{*}}^{h,R} }\mathrm{d}\omega F(t) \ \ge \frac{ \nu}{20}\int_{[0,R)}\mathrm{d}\omega F(t).
		\end{aligned}
	\end{equation}
In the second case, we suppose the opposite of \eqref{Propo:Collision:16}
		\begin{equation}\label{Propo:Collision:18}
		\begin{aligned}
			\int_{\Delta_{l_t^{**}}^{h,R} }\mathrm{d}\omega F(t) \ < \frac{ \nu}{20}\int_{[0,R)}\mathrm{d}\omega F(t),
		\end{aligned}
	\end{equation}
	from which, in combination with \eqref{Propo:Collision:15}, we deduce 
	
		\begin{equation}\label{Propo:Collision:19}	 c_t\int_{\cup_{i\in\mathfrak{Z}_t }\Delta_{i}^{h,R} }\mathrm{d}\omega F(t)  + 	 \frac{\nu}{20}\int_{[0,R)}\mathrm{d}\omega F(t)  \ \ge \ 	(1- \nu)\int_{[0,R)}\mathrm{d}\omega F(t),
	\end{equation}
	yielding
		\begin{equation}\label{Propo:Collision:20}	  c_t\int_{\cup_{i\in\mathfrak{Z}_t }\Delta_{i}^{h,R} }\mathrm{d}\omega F(t)    \ \ge \ 	(1- \nu- \frac{\nu}{20})\int_{[0,R)}\mathrm{d}\omega F(t).
	\end{equation}
As a consequence,  we obtain 

		\begin{equation}\label{Propo:Collision:21}	 \int_{\mathscr X^2 } \mathrm{d}\omega F(t)    \ \ge \ 	\frac{(1-2\nu)}{c_t}\int_{[0,R)}\mathrm{d}\omega F(t)\ \ge \ 	\frac{1-2\nu}{20}\int_{[0,R)}\mathrm{d}\omega F(t)\ \ge \  \frac{\nu}{20}\int_{[0,R)}\mathrm{d}\omega F(t).
	\end{equation}
Combining \eqref{Propo:Collision:11}, \eqref{Propo:Collision:17}, \eqref{Propo:Collision:21}, we have in both cases
	\begin{equation}\label{Propo:Collision:22}	 \int_{\mathscr X^1 } \mathrm{d}\omega F(t)    \ \ge \ 	   \frac{ \nu}{20}\int_{[0,R)}\mathrm{d}\omega F(t), \ \ \   \int_{\mathscr X^2 } \mathrm{d}\omega F(t)    \ \ge \ 	   \frac{ \nu}{20}\int_{[0,R)}\mathrm{d}\omega F(t). 
\end{equation}
The two new sets $\mathscr X^1=\cup_{i\in\mathfrak{Y}_t }\Delta_{i}^{h,R} , \mathscr X^2=\cup_{i\in\mathfrak{Z}_t }\Delta_{i}^{h,R}$ are the two new non-overlapping subdomains that we wanted to construct. 


{\it Step 3: Distributing the small subdomains into the two big non-overlapping subdomains.} 

Let us consider    $l_1\ge l_2\in  \mathfrak{Z}_t $, $l_3\in \mathfrak{Y}_t$. There are several possibilities.

{\it Possibility 1:} If $l_1> l_2 >l_3$, then since $\mathscr X^1\cap\cup_{i\in\mathfrak{Z}_t }\Xi_{i}^{h,R}=\emptyset$, we have $\mathscr N_{R,h} > l_1> l_2 >l_3+1.$  Moreover, by \eqref{Propo:Collision:6a}, we deduce $l_1\ge l_2+2$. Therefore, $(l_1,l_2,l_3)\in\{(i,j,l) \in\mathbb{Z}^3 ~~ |~~ i,j,l\ge0,\  \mathscr N_{R,h} > i\geq j>l+1, i\ge j+2\}$. We estimate
\begin{equation}\label{Propo:Collision:5:2}
		\begin{aligned}
			&  \sum_{(i,j,l)\in\mathfrak{X}}\iiint_{\Delta_{i,j,l}^{h,R} }\mathrm{d}\omega_1\,\mathrm{d}\omega_2\,\mathrm{d}\omega F_1F_2F\mathbf{1}_{\omega+\omega_1-\omega_2\ge 0}\mho(\omega_{Max}-\omega_{Min}+\omega_{Mid})\\
		\ge\	  &  \mho(h)\sum_{(i,j,l)\in\{(i,j,l) \in\mathbb{Z}^3 ~~ |~~ i,j,l\ge0,\  \mathscr N_{R,h} > i\geq j>l+1, i\ge j+2\}}\iiint_{\Delta_{i,j,l}^{h,R} }\mathrm{d}\omega_1\,\mathrm{d}\omega_2\,\mathrm{d}\omega F_1F_2F\mathbf{1}_{\omega+\omega_1-\omega_2\ge 0}\\
	\ge\	  & 	\mho(h)\sum_{l_1, l_2\in  \mathfrak{Z}_t; l_3\in \mathfrak{Y}_t;l_1> l_2>l_3}\int_{\Delta_{l_2}^{h,R} }\mathrm{d}\omega F \int_{\Delta_{l_1}^{h,R} }\mathrm{d}\omega F\int_{\Delta_{l_3}^{h,R} }\mathrm{d}\omega F.	\end{aligned}
	\end{equation}

{\it Possibility 2:} If $l_1= l_2 >l_3$, then since $\mathscr X^1\cap\cup_{i\in\mathfrak{Z}_t }\Xi_{i}^{h,R}=\emptyset$, we have $\mathscr N_{R,h} > l_1= l_2 >l_3+1.$  Therefore, $(l_1,l_2,l_3)\in\{(i,j,l) \in\mathbb{Z}^3 ~~ |~~ i,j,l\ge0,\  \mathscr N_{R,h} > i\geq j>l+1\}$. We bound
\begin{equation}\label{Propo:Collision:5:3}
		\begin{aligned}
			&  \sum_{(i,j,l)\in\mathfrak{X}}\iiint_{\Delta_{i,j,l}^{h,R} }\mathrm{d}\omega_1\,\mathrm{d}\omega_2\,\mathrm{d}\omega F_1F_2F\mathbf{1}_{\omega+\omega_1-\omega_2\ge 0}\mho(\omega_{Max}-\omega_{Min}+\omega_{Mid})\\
		\ge\	  &  \mho(h)\sum_{(i,j,l)\in\{(i,j,l) \in\mathbb{Z}^3 ~~ |~~ i,j,l\ge0,\  \mathscr N_{R,h} > i\geq j>l+1\}}\iiint_{\Delta_{i,j,l}^{h,R} }\mathrm{d}\omega_1\,\mathrm{d}\omega_2\,\mathrm{d}\omega F_1F_2F\mathbf{1}_{\omega+\omega_1-\omega_2\ge 0}\\
	\ge\	  &	\mho(h)\sum_{l_1, l_2\in  \mathfrak{Z}_t; l_3\in \mathfrak{Y}_t;l_1= l_2 >l_3}\int_{\Delta_{l_2}^{h,R} }\mathrm{d}\omega F \int_{\Delta_{l_1}^{h,R} }\mathrm{d}\omega F\int_{\Delta_{l_3}^{h,R} }\mathrm{d}\omega F.	\end{aligned}
	\end{equation}

{\it Possibility 3:} If $l_3>l_1> l_2$, then by \eqref{Propo:Collision:6a}, we have $\mathscr N_{R,h} > l_3> l_1 >l_2+1.$ Therefore, $(l_3,l_1,l_2)\in\mathfrak{X}$.

We bound	\begin{equation}\label{Propo:Collision:23}
	\begin{aligned}
	&  \sum_{(i,j,l)\in\mathfrak{X}}\iiint_{\Delta_{i,j,l}^{h,R} }\mathrm{d}\omega_1\,\mathrm{d}\omega_2\,\mathrm{d}\omega F_1F_2F\mathbf{1}_{\omega+\omega_1-\omega_2\ge 0}\mho(\omega_{Max}-\omega_{Min}+\omega_{Mid})\\
		\ge\	  &  \mho(h)\    \sum_{l_1,l_2 \in  \mathfrak{Z}_t; l_3\in \mathfrak{Y}_t; l_3>l_1> l_2}\int_{\Delta_{l_1}^{h,R} }\mathrm{d}\omega F\,\int_{\Delta_{l_2}^{h,R} }\mathrm{d}\omega F\int_{\Delta_{l_3}^{h,R} }\mathrm{d}\omega F.
	\end{aligned}
\end{equation}

{\it Possibility 4:} If $l_3>l_1= l_2$, we consider the quantity
 	\begin{equation}\label{Propo:Collision:24}
 	\begin{aligned}
 		 &     \sum_{l_1,l_2 \in  \mathfrak{Z}_t, l_3\in \mathfrak{Y}_t,  l_3>l_1= l_2}\int_{\Delta_{l_1}^{h,R} }\mathrm{d}\omega F\,\int_{\Delta_{l_2}^{h,R} }\mathrm{d}\omega F\int_{\Delta_{l_3}^{h,R} }\mathrm{d}\omega F\\
 	=\	 &     \sum_{l_1  \in  \mathfrak{Z}_t, l_3\in \mathfrak{Y}_t,  l_3>l_1 }\left(\int_{\Delta_{l_1}^{h,R} }\mathrm{d}\omega F\right)^2\int_{\Delta_{l_3}^{h,R} }\mathrm{d}\omega F\\
 	\le\	 &     \sum_{l_3\in \mathfrak{Y}_t  ,  l_3>l_t^*}\left(\int_{\Delta_{l_t^*}^{h,R} }\mathrm{d}\omega F\right) ^2\int_{\Delta_{l_3}^{h,R} }\mathrm{d}\omega F  + \sum_{l_1  \in  \mathfrak{Z}_t, l_3\in \mathfrak{Y}_t,  l_3>l_1\ne l_t^*}\int_{\Delta_{l_t^*}^{h,R} }\mathrm{d}\omega F \int_{\Delta_{l_1}^{h,R} }\mathrm{d}\omega F\int_{\Delta_{l_3}^{h,R} }\mathrm{d}\omega F,
 	\end{aligned}
 \end{equation}
where $l^*_t$ is defined in \eqref{Propo:Collision:6}.
The second quantity on the right hand side of the above inequality can be bounded, by a similar argument used in Possibility 3
	\begin{equation}\label{Propo:Collision:25}
	\begin{aligned}
			 &   \mho(h) \sum_{l_1  \in  \mathfrak{Z}_t, l_3\in \mathfrak{Y}_t,  l_3>l_1\ne l_t^*}\int_{\Delta_{l_t^*}^{h,R} }\mathrm{d}\omega F \int_{\Delta_{l_1}^{h,R} }\mathrm{d}\omega F\int_{\Delta_{l_3}^{h,R} }\mathrm{d}\omega F\\
\le \			 &   \sum_{(i,j,l)\in\mathfrak{X}}\iiint_{\Delta_{i,j,l}^{h,R} }\mathrm{d}\omega_1\,\mathrm{d}\omega_2\,\mathrm{d}\omega F_1F_2F\mathbf{1}_{\omega+\omega_1-\omega_2\ge 0}\mho(\omega_{Max}-\omega_{Min}+\omega_{Mid}).
	\end{aligned}
\end{equation}
Now, we suppose that there exists a constant $\mathfrak{C}>\frac13$ such that 
	\begin{equation}\label{Propo:Collision:26}
	\begin{aligned}
 \sum_{l_3\in \mathfrak{Y}_t ,   l_3>l_t^*} \int_{\Delta_{l_3}^{h,R} }\mathrm{d}\omega F\ > \ \mathfrak{C} \sum_{l_3\in \mathfrak{Y}_t ,   l_3<l_t^*} \int_{\Delta_{l_3}^{h,R} }\mathrm{d}\omega F.
		\end{aligned}
\end{equation}
Inequality \eqref{Propo:Collision:26} yields

		\begin{equation}\label{Propo:Collision:27}
		\begin{aligned}
			\sum_{l_3\in \mathfrak{Y}_t ,   l_3>l_t^*} \int_{\Delta_{l_3}^{h,R} }\mathrm{d}\omega F\ > \ \frac{\mathfrak{C}}{\mathfrak{C}+1} \sum_{l_3\in \mathfrak{Y}_t } \int_{\Delta_{l_3}^{h,R} }\mathrm{d}\omega F \ \ge \ 	   \frac{\mathfrak{C}}{\mathfrak{C}+1}\frac{ \nu}{20}\int_{[0,R)}\mathrm{d}\omega F,
		\end{aligned}
	\end{equation}
where we have used \eqref{Propo:Collision:22}. Combining \eqref{Propo:Collision:8} 
 and \eqref{Propo:Collision:27}, we obtain
	\begin{equation}\label{Propo:Collision:28}
	\begin{aligned}
		\sum_{l_3\in \mathfrak{Y}_t ,   l_3>l_t^*} \int_{\Delta_{l_3}^{h,R} }\mathrm{d}\omega F\ > \ 	   \frac{\mathfrak{C}}{\mathfrak{C}+1}\frac{ \nu}{20}\int_{[0,R)}\mathrm{d}\omega F\ > \ 	   \frac{\mathfrak{C}}{\mathfrak{C}+1}\frac{ \nu}{20(1- \nu)}\int_{\Delta_{l_t^*}^{h,R} }\mathrm{d}\omega F,
	\end{aligned}
\end{equation}
	yielding
		\begin{equation}\label{Propo:Collision:29}
		\begin{aligned}
		    & \sum_{l_3\in \mathfrak{Y}_t  ,  l_3>l_t^*}\left(\int_{\Delta_{l_t^*}^{h,R} }\mathrm{d}\omega F\right) ^2\int_{\Delta_{l_3}^{h,R} }\mathrm{d}\omega F\\  \ \le \ &  \frac{\mathfrak{C}+1}{\mathfrak{C}}\frac{20(1- \nu)}{ \nu}	\left(\int_{\Delta_{l_t^*}^{h,R} }\mathrm{d}\omega F\right)\left[\sum_{l_3\in \mathfrak{Y}_t ,   l_3>l_t^*} \int_{\Delta_{l_3}^{h,R} }\mathrm{d}\omega F\right]^2\\
		    \ \le \ &  \frac{\mathfrak{C}_0}{ \nu}	\left(\int_{\Delta_{l_t^*}^{h,R} }\mathrm{d}\omega F\right)\left[\sum_{l_3\in \mathfrak{Y}_t ,   l_3>l_t^*} \int_{\Delta_{l_3}^{h,R} }\mathrm{d}\omega F\right]^2\\
		    \ \le \ &\sum_{l_3,l_3'\in \mathfrak{Y}_t ,   l_3,l_3'>l_t^*}  \frac{\mathfrak{C}_0}{ \nu}	\left(\int_{\Delta_{l_t^*}^{h,R} }\mathrm{d}\omega F \int_{\Delta_{l_3}^{h,R} }\mathrm{d}\omega F\int_{\Delta_{l_3'}^{h,R} }\mathrm{d}\omega F\right),
		\end{aligned}
	\end{equation}
for some universal constant $\mathfrak{C}_0>0$ that varies from estimates to estimates. Since $\mathscr X^1\cap \Xi_{l_t^*}^{h,R}=\emptyset$, we have $\mathscr N_{R,h} > l_3,l_3'>l_t^*+1.$ The argument of Possibility 1 can be reapplied to \eqref{Propo:Collision:29}, leading to
	\begin{equation}\label{Propo:Collision:30}
	\begin{aligned}
		&\mho(h) \sum_{l_3\in \mathfrak{Y}_t ,  l_3>l_t^*}\left(\int_{\Delta_{l_t^*}^{h,R} }\mathrm{d}\omega F\right) ^2\int_{\Delta_{l_3}^{h,R} }\mathrm{d}\omega F\\  
		\ \le \ &  \frac{\mathfrak{C}_0}{ \nu}\sum_{(i,j,l)\in\mathfrak{X}}\iiint_{\Delta_{i,j,l}^{h,R} }\mathrm{d}\omega_1\,\mathrm{d}\omega_2\,\mathrm{d}\omega F_1F_2F\mathbf{1}_{\omega+\omega_1-\omega_2\ge 0}\mho(\omega_{Max}-\omega_{Min}+\omega_{Mid}),
	\end{aligned}
\end{equation}
for some universal constant $\mathfrak{C}_0>0$ that varies from estimates to estimates.

%

	Next, we suppose that \eqref{Propo:Collision:26} does not happen, then
		\begin{equation}\label{Propo:Collision:31}
		\begin{aligned}
			\sum_{l_3\in \mathfrak{Y}_t ,   l_3>l_t^*} \int_{\Delta_{l_3}^{h,R} }\mathrm{d}\omega F\ \le \ \mathfrak{C} \sum_{l_3\in \mathfrak{Y}_t ,   l_3<l_t^*} \int_{\Delta_{l_3}^{h,R} }\mathrm{d}\omega F
		\end{aligned}
	\end{equation}
	yielding
\begin{equation}\label{Propo:Collision:31}
	\begin{aligned}
		& \sum_{l_3\in \mathfrak{Y}_t ,   l_3>l_t^*}\left(\int_{\Delta_{l_t^*}^{h,R} }\mathrm{d}\omega F\right) ^2\int_{\Delta_{l_3}^{h,R} }\mathrm{d}\omega F\\  \ \le \ &  \mathfrak{C}	\left(\int_{\Delta_{l_t^*}^{h,R} }\mathrm{d}\omega F\right)^2\left[\sum_{l_3\in \mathfrak{Y}_t ,   l_3<l_t^*} \int_{\Delta_{l_3}^{h,R} }\mathrm{d}\omega F\right],
	\end{aligned}
\end{equation}
and
\begin{equation}\label{Propo:Collision:27:aa}
		\begin{aligned}
			\sum_{l_3\in \mathfrak{Y}_t ,   l_3<l_t^*} \int_{\Delta_{l_3}^{h,R} }\mathrm{d}\omega F\ > \ \frac{\mathfrak{C}}{\mathfrak{C}+1} \sum_{l_3\in \mathfrak{Y}_t } \int_{\Delta_{l_3}^{h,R} }\mathrm{d}\omega F \ \ge \ 	   \frac{\mathfrak{C}}{\mathfrak{C}+1}\frac{ \nu}{20}\int_{[0,R)}\mathrm{d}\omega F\ \ge \ 	   \frac{ \nu}{80}\int_{[0,R)}\mathrm{d}\omega F.
		\end{aligned}
	\end{equation}
The argument of \eqref{Propo:Collision:30} can be reapplied to \eqref{Propo:Collision:31}, leading to
	\begin{equation}\label{Propo:Collision:32:1}
	\begin{aligned}
		& \mathfrak{C}	\left(\int_{\Delta_{l_t^*}^{h,R} }\mathrm{d}\omega F\right)^2\left[\sum_{l_3\in \mathfrak{Y}_t ,   l_3<l_t^*} \int_{\Delta_{l_3}^{h,R} }\mathrm{d}\omega F\right]\\  
		\ \le \ &   \frac{\mathfrak{C}_0}{ \mho(h)\nu}\sum_{(i,j,l)\in\mathfrak{X}}\iiint_{\Delta_{i,j,l}^{h,R} }\mathrm{d}\omega_1\,\mathrm{d}\omega_2\,\mathrm{d}\omega F_1F_2F\mathbf{1}_{\omega+\omega_1-\omega_2\ge 0}\mho(\omega_{Max}-\omega_{Min}+\omega_{Mid}),
	\end{aligned}
\end{equation}
for some universal constant $\mathfrak{C}_0>0$ that varies from estimates to estimates.

Finally, we bound
\begin{equation}\label{Propo:Collision:32}
	\begin{aligned}
 		 &     \sum_{l_1,l_2 \in  \mathfrak{Z}_t, l_3\in \mathfrak{Y}_t,  l_3>l_1= l_2}\int_{\Delta_{l_1}^{h,R} }\mathrm{d}\omega F\,\int_{\Delta_{l_2}^{h,R} }\mathrm{d}\omega F\int_{\Delta_{l_3}^{h,R} }\mathrm{d}\omega F\\
		\ \le \ &   \frac{\mathfrak{C}_0}{ \mho(h)\nu}\sum_{(i,j,l)\in\mathfrak{X}}\iiint_{\Delta_{i,j,l}^{h,R} }\mathrm{d}\omega_1\,\mathrm{d}\omega_2\,\mathrm{d}\omega F_1F_2F\mathbf{1}_{\omega+\omega_1-\omega_2\ge 0}\mho(\omega_{Max}-\omega_{Min}+\omega_{Mid}),
	\end{aligned}
\end{equation}
for some universal constant $\mathfrak{C}_0>0$ that varies from estimates to estimates.

Combining \eqref{Propo:Collision:5:2}, \eqref{Propo:Collision:5:3}, \eqref{Propo:Collision:23}, \eqref{Propo:Collision:30}, \eqref{Propo:Collision:32},   we obtain 
	\begin{equation}\label{Propo:Collision:33}\begin{aligned}	& \left(\int_{\mathscr X^1 } \mathrm{d}\omega F(t)  \right)\left(\int_{\mathscr X^2 } \mathrm{d}\omega F(t)  \right)^2\\ \ \le \ &	 \frac{\mathfrak{C}_0}{ \mho(h)\nu}\sum_{(i,j,l)\in\mathfrak{X}}\iiint_{\Delta_{i,j,l}^{h,R} }\mathrm{d}\omega_1\,\mathrm{d}\omega_2\,\mathrm{d}\omega F_1F_2F\mathbf{1}_{\omega+\omega_1-\omega_2\ge 0}\mho(\omega_{Max}-\omega_{Min}+\omega_{Mid}), \end{aligned}
\end{equation}
which, together with \eqref{Propo:Collision:22}, implies
	\begin{equation}\label{Propo:Collision:33}	 \left( \int_{[0,R)}\mathrm{d}\omega F(t)\right)^3 \ \le \ 	\frac{\mathfrak{C}_0}{\nu^4 \mho(h)}\sum_{(i,j,l)\in\mathfrak{X}}\iiint_{\Delta_{i,j,l}^{h,R} }\mathrm{d}\omega_1\,\mathrm{d}\omega_2\,\mathrm{d}\omega F_1F_2F\mathbf{1}_{\omega+\omega_1-\omega_2\ge 0}\mho(\omega_{Max}-\omega_{Min}+\omega_{Mid}), 
\end{equation}
for some universal constant $\mathfrak{C}_0>0$ that varies from estimates to estimates. 
The two inequalities \eqref{Propo:Collision:5} and \eqref{Propo:Collision:33}	can be combined and we then find

	\begin{equation}\label{Propo:Collision:34}
	\begin{aligned}
	\frac{\mathfrak{C}_0R^{2+\frac2\alpha}(	\mathscr M +\mathscr E)}{h^{2}\mho(h) \nu^4}
		\ \ge\ &\int_{\digamma^2}\mathrm{d}t  \left( \int_{[0,R)}\mathrm{d}\omega F(t)\right)^3.
	\end{aligned}
\end{equation}
for some universal constants $\mathfrak{C}_0$ that varies from estimates to estimates, yielding the conclusion of our proposition. 
\end{proof}
\section{Condensation times}
\subsection{Supersolution for an integro-differential equation}
In the following lemma, we establish a construction for a supersolution of an  integro-differential equations, that could be considered as an extension of the previous sub/supersolution constructions done in \cite{caffarelli2009regularity}[Lemma 8.1] and \cite{schwab2016regularity}[Lemma 4.2]. The main improvement in  our consideration, in comparison with the previous ones (i.e. \cite{caffarelli2009regularity}[Lemma 8.1] and \cite{schwab2016regularity}[Lemma 4.2]),  is that the kernel $\mu$ is time-dependent. 
\begin{lemma}\label{Lemma:Supersol}
	Let $z\mu(t,z),z^2\mu(t,z)\in L^\infty([0,\infty),L^1([0,\infty)))$ and suppose that $\mu(t,z)\ge 0$ for a.e. $(t,z)\in [0,\infty)^2$. We consider the equation
	\begin{equation}
		\label{Lemma:Supersol:1}\partial_t\rho(t,x)  \ + \ \int_{0}^\infty\mathrm{d}z \mu(t,z)[\rho(t,x+z)-\rho(t,x)] \ = \ 0.
	\end{equation}
	Let $\Upsilon(z): [0,\infty)\to [0,\infty) $ be a function such that $\Upsilon'(z)$ and $\Upsilon''(z)$ exists for a.e. $z$ in $[0,\infty) $. Moreover,
	\begin{equation}
		\label{Lemma:Supersol:2}\Upsilon''(z_1)\ge  C_\Upsilon\Upsilon(z_2),
	\end{equation}
for a.e. $z_1,z_2\in [0,\infty),$ for some constant $C_\Upsilon>0$. 
We set, for a fixed time $T_1>0$
	\begin{equation}\begin{aligned}
	\label{Lemma:Supersol:3}A_\mu(t) \ = \  &  \int_{t}^{T_1}\mathrm{d}s\int_{0}^\infty\mathrm{d}z \mu(s,z)z,\\ \ \ \ B_\mu(t) \ = \ &  C_\Upsilon \int_{t}^{T_1}\mathrm{d}s\int_{0}^\infty\mathrm{d}z \mu(s,z)z^2/2.\end{aligned}
\end{equation}
Then 
	\begin{equation}\begin{aligned}
		\label{Lemma:Supersol:4}  &  \rho_{super}(t,x)\ : = \ e^{B_\mu}\Upsilon(A_\mu+x),\end{aligned}
\end{equation}
is a supersolution of \eqref{Lemma:Supersol:1}, in the sense that
	\begin{equation}
	\label{Lemma:Supersol:5}\partial_t\rho_{super}(t,x)  \ + \ \int_{0}^\infty\mathrm{d}z \mu(t,z)[\rho_{super}(t,x+z)-\rho_{super}(t,x)] \ \ge \ 0,
\end{equation}
for a.e. $(t,x)\in [0,\infty)^2$. 
\end{lemma}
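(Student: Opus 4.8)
The plan is to verify directly that the claimed $\rho_{super}$ satisfies the differential inequality \eqref{Lemma:Supersol:5} by computing $\partial_t\rho_{super}$ and the nonlocal term separately and then combining them. First I would record the two derivatives we need: by the fundamental theorem of calculus, $\partial_t A_\mu(t)=-\int_0^\infty\mathrm{d}z\,\mu(t,z)z$ and $\partial_t B_\mu(t)=-C_\Upsilon\int_0^\infty\mathrm{d}z\,\mu(t,z)z^2/2$, both of which are $\le 0$ because $\mu\ge 0$. Then, writing $\rho_{super}(t,x)=e^{B_\mu(t)}\Upsilon(A_\mu(t)+x)$ and using the chain rule,
\begin{equation}
\partial_t\rho_{super}(t,x)\ =\ e^{B_\mu}\Big[B_\mu'(t)\,\Upsilon(A_\mu+x)+A_\mu'(t)\,\Upsilon'(A_\mu+x)\Big].
\end{equation}

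Next I would handle the nonlocal term. Since $e^{B_\mu(t)}$ does not depend on the spatial variable, it factors out:
\begin{equation}
\int_0^\infty\mathrm{d}z\,\mu(t,z)\big[\rho_{super}(t,x+z)-\rho_{super}(t,x)\big]\ =\ e^{B_\mu}\int_0^\infty\mathrm{d}z\,\mu(t,z)\big[\Upsilon(A_\mu+x+z)-\Upsilon(A_\mu+x)\big].
\end{equation}
For each fixed $z$, Taylor expansion with integral remainder gives $\Upsilon(A_\mu+x+z)-\Upsilon(A_\mu+x)=z\,\Upsilon'(A_\mu+x)+\int_0^z(z-s)\,\Upsilon''(A_\mu+x+s)\,\mathrm{d}s$. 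The first-order term, integrated against $\mu(t,z)z$, produces exactly $-A_\mu'(t)\,\Upsilon'(A_\mu+x)$, which cancels the $A_\mu'$ contribution in $\partial_t\rho_{super}$ (note $A_\mu'(t)=-\int\mu z$). The key convexity hypothesis \eqref{Lemma:Supersol:2}, namely $\Upsilon''(z_1)\ge C_\Upsilon\,\Upsilon(z_2)$ for a.e. $z_1,z_2$, lets me bound the remainder from below: $\int_0^z(z-s)\,\Upsilon''(A_\mu+x+s)\,\mathrm{d}s\ge C_\Upsilon\,\Upsilon(A_\mu+x)\int_0^z(z-s)\,\mathrm{d}s=C_\Upsilon\,\Upsilon(A_\mu+x)\,z^2/2$. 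Integrating against $\mu(t,z)$ then yields a term $\ge -C_\Upsilon\,B_\mu'(t)\,\Upsilon(A_\mu+x)$ (again recalling $B_\mu'(t)=-C_\Upsilon\int\mu z^2/2$... more precisely the remainder integral equals $-B_\mu'(t)\,\Upsilon(A_\mu+x)$ in sign), which is designed to absorb the $B_\mu'\,\Upsilon$ term in $\partial_t\rho_{super}$.

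Assembling the pieces, the $A_\mu'$ terms cancel exactly and the $B_\mu'$ term in $\partial_t\rho_{super}$ is dominated by the lower bound on the remainder, leaving
\begin{equation}
\partial_t\rho_{super}(t,x)+\int_0^\infty\mathrm{d}z\,\mu(t,z)\big[\rho_{super}(t,x+z)-\rho_{super}(t,x)\big]\ \ge\ 0,
\end{equation}
which is the claim. The main obstacle I anticipate is purely a bookkeeping one: keeping the signs straight, since $A_\mu',B_\mu'\le 0$ while the nonlocal corrections carry the opposite sign, and making sure the regularity hypotheses ($z\mu,z^2\mu\in L^\infty_tL^1_z$, and a.e.\ differentiability of $\Upsilon$) are exactly enough to justify differentiating under the integral sign, applying Fubini to exchange the $z$-integral with the $s$-integral in the Taylor remainder, and evaluating $A_\mu',B_\mu'$ by the Lebesgue differentiation theorem for a.e.\ $t$. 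No genuinely hard estimate is needed; the construction of $A_\mu,B_\mu$ as the time-integrated moments of $\mu$ is precisely what makes the cancellation work, and the time-dependence of $\mu$ enters only through these integrals.
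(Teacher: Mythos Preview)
Your proposal is correct and follows essentially the same approach as the paper: compute $\partial_t\rho_{super}$ by the chain rule, expand the nonlocal difference $\Upsilon(A_\mu+x+z)-\Upsilon(A_\mu+x)$ via Taylor's formula with integral remainder, cancel the first-order $\Upsilon'$ terms exactly against the $A_\mu'$ contribution, and use the uniform convexity hypothesis $\Upsilon''(z_1)\ge C_\Upsilon\Upsilon(z_2)$ to bound the remainder from below by precisely the quantity that absorbs the $B_\mu'\Upsilon$ term. The only differences are cosmetic (the paper writes the Taylor remainder on $\rho_{super}$ directly rather than on $\Upsilon$, and does not pause over the measure-theoretic justifications you flag), and your minor sign slip in the $B_\mu'$ bookkeeping is self-corrected in the same sentence.
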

\begin{proof}
	Using Taylor's expansion, we compute
		\begin{equation}
		\label{Lemma:Supersol:E1}\begin{aligned}&   \partial_t\rho_{super}(t,x)  \ + \ \int_{0}^\infty\mathrm{d}z \mu(t,z)[\rho_{super}(t,x+z)-\rho_{super}(t,x)]\\
			 \ = \ & \partial_t\rho_{super}(t,x)  \ + \ \partial_x\rho_{super}(t,x)  \int_{0}^\infty\mathrm{d}z \mu(t,z)z\\
			 & \ +\ \int_{0}^\infty\mathrm{d}z \mu(t,z)\int_0^z\mathrm{d}\zeta \partial^2_{xx}\rho_{super}(x+\zeta)(z-\zeta).\end{aligned}
	\end{equation}
	Plugging the form \eqref{Lemma:Supersol:4} into \eqref{Lemma:Supersol:1}, we get
		\begin{equation}\label{Lemma:Supersol:E2}\begin{aligned}
&    \partial_t\rho_{super}(t,x)   \ + \ \int_{0}^\infty\mathrm{d}z \mu(t,z)[\rho_{super}(t,x+z)-\rho_{super}(t,z)]\\
 \ = \ & -C_\Upsilon e^{B_\mu(t)} \Upsilon(A_\mu(t)+x)\int_{0}^\infty\mathrm{d}z \mu(t,z)z^2/2 \ - \ e^{B_\mu(t)} \Upsilon'(A_\mu(t)+x) \int_{0}^\infty\mathrm{d}z \mu(t,z)z\\
 & \ + \ e^{B_\mu(t)} \Upsilon'(A_\mu(t)+x) \int_{0}^\infty\mathrm{d}z \mu(t,z)z\\
 &   \ +\ \int_{0}^\infty\mathrm{d}z \mu(t,z)\int_0^z\mathrm{d}\zeta \Upsilon''(A_\mu(t)+x+\zeta)(z-\zeta)e^{B_\mu(t)},
\end{aligned}
	\end{equation}
 which, in combination with \eqref{Lemma:Supersol:2}, yields
	\begin{equation}\label{Lemma:Supersol:E3}\begin{aligned}
		&   \partial_t\rho_{super}(t,x)   \ + \ \int_{0}^\infty\mathrm{d}z \mu(t,z)[\rho_{super}(t,x+z)-\rho_{super}(t,x)]\\
		\ \ge \ & -C_\Upsilon e^{B_\mu(t)} \Upsilon(A_\mu(t)+x)\int_{0}^\infty\mathrm{d}z \mu(t,z)z^2/2\\
		&  \ +\ C_\Upsilon e^{B_\mu(t)}\Upsilon(A_\mu(t)+x)\int_{0}^\infty\mathrm{d}z \mu(t,z)\int_0^z\mathrm{d}\zeta (z-\zeta)\\
		\ \ge \ & -C_\Upsilon e^{B_\mu(t)} \Upsilon(A_\mu(t)+x)\int_{0}^\infty\mathrm{d}z \mu(t,z)z^2/2    \ +\ C_\Upsilon e^{B_\mu(t)}\Upsilon(A_\mu(t)+x)\int_{0}^\infty\mathrm{d}z \mu(t,z)z^2/2 \ \ge\ 0.
	\end{aligned}
\end{equation}
That concludes the proof of our lemma. 
\end{proof}
\subsection{Growth lemmas}
The following lemmas play the roles of the growth lemmas studied in the theory of nonlocal integro-differential equations, that provide bounds on sets where the solutions exhibit some  growths (see, for instance \cite{schwab2016regularity}[Section 4]).

		In addition, we set $\frac{2}{\alpha}\frac{\beta+1}{\beta+2}+\beta\frac{\beta+3}{\beta+2}<\varepsilon< \frac{2}{\alpha}+\beta$.

Let $n\in\mathbb{N}, n>100$. We set  $\mathfrak{L}_{n}=2^{-n}$.   We also set $\mathfrak{h}_n=\frac{2^{-n}}{2^{\mathfrak{M}_0(n)}}$, $\mathfrak{M}_0(n)=\left\lfloor	{n(-\varepsilon+2/\alpha+\beta)}\right\rfloor	$. 
Employing the definition \eqref{Sec:FV:2}, we obtain
 
 \begin{equation}\label{Sec:Growthlemmas:1}
 	\begin{aligned}
 		&\mbox{(A) Number of Subdomains:  } 	\mathscr N_{\mathfrak{h}_n, \mathfrak{L}_{n}}  = 2^{\mathfrak{M}_0(n)}; \\ \ 
 		& \mbox{(B) Nonoverlapping Subdomains:  } 	\\
 		&\Delta_{i}^{\mathfrak{h}, \mathfrak{L}_{n}} =\left[ i\mathfrak{h}_n ,(1+i)\mathfrak{h}_n \right) \ \ ,\ \
 		i=0, \cdots, 2^{\mathfrak{M}_0(n)}-2,\ \
   	\Delta_{2^{\mathfrak{M}_0(n)}-1}^{\mathfrak{h}_n, \mathfrak{L}_{n}} =\left[ (2^{\mathfrak{M}_0(n)}-1)\mathfrak h_n ,\mathfrak{L}_{n}\right);  \\
 		& \mbox{(C) Overlapping Subdomains:  } \\
 		&	\Xi_{i}^{\mathfrak{h}_n, \mathfrak{L}_{n}} =\left[ (i-1)\mathfrak{h}_n ,(2+i)\mathfrak{h}_n \right) \ \ ,\ \
 		i=1, \cdots, 2^{\mathfrak{M}_0(n)}-3, \\
 		&	\Xi_{0}^{\mathfrak{h}_n, \mathfrak{L}_{n}} =\left[0, 2\mathfrak{h}_n \right), \ \	\Xi_{1}^{\mathfrak{h}_n, \mathfrak{L}_{n}} =\left[0, 3\mathfrak h_n\right),\ \
 		 \Xi_{2^{\mathfrak{M}_0(n)}-2}^{\mathfrak{h}_n, \mathfrak{L}_{n}} =\left[ (2^{\mathfrak{M}_0(n)}-3)\mathfrak{h}_n,\mathfrak{L}_{n}\right), \\ \ &	\Xi_{2^{\mathfrak{M}_0(n)}-1}^{\mathfrak{h}_n, \mathfrak{L}_{n}} =\left[ (2^{\mathfrak{M}_0(n)}-2)\mathfrak{h}_n ,\mathfrak{L}_{n}\right).\end{aligned}
 \end{equation}

  We set      \begin{equation}\label{varsigma}
  0\le\varsigma\le\frac12\min\Big\{\frac{\beta+2}{3}\Big[\varepsilon-\frac{2}{\alpha}\frac{\beta+1}{\beta+2}-\beta\frac{\beta+3}{\beta+2}\Big],\frac{\varepsilon-2\beta}{3}\Big\},
  \end{equation}  we define using the notation of \eqref{FDefinition} 

\begin{equation}\label{Sec:Growthlemmas:2}
	\mathfrak{S}_n^T=\left\{ t\in\left[ 0,T\right] :\int_{\left[ 0,\mathfrak{L}_n\right)
	}\mathrm{d}\omega F\left( t,\omega\right)  \geq \mathscr{C}_F \mathfrak L_{n} 
^\varsigma\right\},
\end{equation} 
\begin{equation}\label{Sec:Growthlemmas:3}
		\mathfrak{S}_{n,i}^T=\left\{ t\in\left[ 0,T\right] :\text{ such that }\int_{
	\Xi_{i}^{\mathfrak{h_n},\mathfrak{L}_{n}}} \mathrm{d}\omega F\left( t,\omega\right) \geq  \mathscr{C}_F\mathfrak L_{n+1} 
^\varsigma\right\} , 
\end{equation}
for $	i=0, \cdots, 2^{\mathfrak{M}_0(n)}-1$;
%
 We will also need the following definitions
\begin{equation}\label{Sec:Growthlemmas:4}
	\mathscr{U}_{n}^T= \mathfrak{S}_{n}^T\setminus\bigcup_{i=0}^{2^{\mathfrak{M}_0(n)}-1}\mathfrak{S}^T_{n,i},
\end{equation}
\begin{equation}\label{Sec:Growthlemmas:5}
	\mathscr{V}_{n}^T= \bigcup_{i=2^{{\mathfrak{M}_0(n)}-1}-1}^{2^{\mathfrak{M}_0(n)}-1}\mathfrak{S}^T_{n,i},
\end{equation}
\begin{equation}\label{Sec:Growthlemmas:6}
	\mathscr{W}_{n}^T= \bigcup_{i=0}^{2^{\mathfrak{M}_0(n)-1}-2}\mathfrak{S}^T_{n,i}.
\end{equation}


We define $T_{\mathfrak{h}_n,\mathfrak{L}_{n}}<T$ to be  a time that satisfies (such a time may not exist)	\begin{equation}	\label{Lemma:Growthlemmas:9}
 	\int_{0}^{{T}_{\mathfrak{h}_n,\mathfrak{L}_{n}}} \mathrm d t  \left[\int_{\mathbb{R}_+ }\mathrm d\omega
 	\Big[ \mathcal G_{\mathfrak{h}_n,\mathfrak{L}_{n}}(t,\omega)\Big]\right]
 	^{2}  = \Big[{{C_1} C_\omega^{\frac2\alpha} }\Big]^{-1}\frac{\mathfrak{L}_{n}^{\varepsilon-\beta}}{4000\mho(\mathfrak{L}_n/2)},
 \end{equation}
where   the constants are defined in \eqref{Settings2}-\eqref{Settings3}-\eqref{Lemma:Concave:E5}   
and 
$$\mathcal G_{\mathfrak{h}_n,\mathfrak{L}_{n}}(t,\omega) \ :=\ F(t,\omega)\chi_{\omega\in \Xi_{\mathfrak{i}\left( t\right)}^{\mathfrak{h}_n,\mathfrak{L}_{n}}}\chi_{\mathscr{V}_{n}^{T_{\mathfrak{h}_n,\mathfrak{L}_{n}}}}(t).$$ 
Before going into the details of the ``growth lemmas'', we will need to construct an upper solution for an integro-differential equation
	following Lemma \ref{Lemma:Supersol}. To this end, we define, using the notation of Lemma \ref{Lemma:Supersol}
\begin{equation}\label{Lemma:Growthlemmas:7}\begin{aligned}
		\mu\left( t,z\right) \ := & \ \frac{\mho(\mathfrak{L}_n/2){C_1} C_\omega^{\frac2\alpha} }{\mathfrak L_n^{\frac2\alpha}}\int_{\mathbb{R}_+ }\mathrm d\omega
	    \Big[ \mathcal G_{\mathfrak{h}_n,\mathfrak{L}_{n}}(t,\omega)\Big]\Big[\mathcal G_{\mathfrak{h}_n,\mathfrak{L}_{n}}(t,\omega+z)\Big],\end{aligned}
\end{equation}
for $(t,z)\in [0,\infty)^2$ and $C_1$ is defined in \eqref{Lemma:Concave:E5}.
In the next lemma, we will construct a super solution for \eqref{Lemma:Supersol:1} with  $\mu$ defined in  \eqref{Lemma:Growthlemmas:7}
	\begin{equation}
	\label{Lemma:Growthlemmas:8}\partial_t\rho_{super}(t,x)  \ + \ \int_{0}^\infty\mathrm{d}z \mu(t,z)[\rho_{super}(t,x+z)-\rho_{super}(t,x)] \ \ge \ 0,
\end{equation}
for a.e. $(t,x)\in [0,\infty)^2$.

\begin{lemma}
	\label{Lemma:GrowthFinal}  Suppose that there exists a sufficiently large natural number $N_0$ such that 
		\begin{equation}	\label{Lemma:GrowthFinal:1}
		\begin{aligned}
			  &    \int_{0}^{\mathfrak{L}_{N_0+1}}\mathrm{d}\omega F(0,\omega)\ \ge \  \mathscr{C}_F^*=2\mathscr{C}_F|\mathfrak{L}_{N_0}|^\varsigma, 
	\end{aligned}\end{equation}
then the set $
	\mathfrak{S}_{N_0}^T$ is identical to $[0,T]$, 
for all $T\in[0,T_0)$.
\end{lemma}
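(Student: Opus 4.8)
The plan is to combine the trivial $t=0$ case with a continuity / first--exit--time argument whose only substantial input is a comparison against the supersolution $\rho_{super}$ of Lemma~\ref{Lemma:Supersol}.

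\emph{Base case and reduction.} Since $\mathfrak L_{N_0+1}<\mathfrak L_{N_0}$, hypothesis \eqref{Lemma:GrowthFinal:1} gives at once $\int_{[0,\mathfrak L_{N_0})}\mathrm d\omega\,F(0,\omega)\ge\int_0^{\mathfrak L_{N_0+1}}\mathrm d\omega\,F(0,\omega)\ge 2\mathscr C_F\mathfrak L_{N_0}^{\varsigma}>\mathscr C_F\mathfrak L_{N_0}^{\varsigma}$, so $0\in\mathfrak S_{N_0}^{T}$. The map $t\mapsto m(t):=\int_{[0,\mathfrak L_{N_0})}\mathrm d\omega\,F(t,\omega)$ is continuous on $[0,T]$: by \eqref{FDefinition} and the substitution $|k|\mapsto\omega$ it equals, up to a fixed constant, $\int_{\{|k|\le\omega^{-1}(\mathfrak L_{N_0})\}}f(t,k)|k|^{2}\,\mathrm d|k|$, which is continuous because $f\in C^{1}([0,\infty),L^{1}(\mathbb R^{3}))$. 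It therefore suffices to show $m(t)>\mathscr C_F\mathfrak L_{N_0}^{\varsigma}$ for every $t\in[0,T]$, $T<T_0$. Suppose not, and let $\tau:=\inf\big([0,T]\setminus\mathfrak S_{N_0}^{T}\big)$; by continuity of $m$ one has $0<\tau<T\le T_0$, $m\ge\mathscr C_F\mathfrak L_{N_0}^{\varsigma}$ on $[0,\tau]$, and $m(\tau)=\mathscr C_F\mathfrak L_{N_0}^{\varsigma}$. We derive a contradiction.

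\emph{The comparison.} The goal is a test function $\rho(t,\cdot)$, supported in $[0,\mathfrak L_{N_0})$ with $0\le\rho\le 1$, for which $g(t):=\int_{\mathbb R_+}F(t,\omega)\rho(t,\omega)\,\mathrm d\omega$ does not decrease (up to negligible error) on $[0,\tau]$, while $g(0)\ge(1-o(1))\int_0^{\mathfrak L_{N_0+1}}F(0)$. We build $\rho$ from Lemma~\ref{Lemma:Supersol}, applied at scale $n=N_0$ with the kernel $\mu$ of \eqref{Lemma:Growthlemmas:7}, a profile $\Upsilon$ obeying \eqref{Lemma:Supersol:2} on the bounded window $[0,2\mathfrak L_{N_0}]$ normalized so $\sup\Upsilon=1$, and final time $T_1:=\min\{\tau,T_{\mathfrak h_{N_0},\mathfrak L_{N_0}}\}$ (with $T_1:=\tau$ if no such $T_{\mathfrak h_{N_0},\mathfrak L_{N_0}}$ exists), obtaining $\rho_{super}(t,x)=e^{B_\mu(t)}\Upsilon(A_\mu(t)+x)$ subject to \eqref{Lemma:Growthlemmas:8}. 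Because $\mathcal G_{\mathfrak h_{N_0},\mathfrak L_{N_0}}(t,\cdot)$ lives in one overlapping block $\Xi^{\mathfrak h_{N_0},\mathfrak L_{N_0}}_{\mathfrak i(t)}$ of width $3\mathfrak h_{N_0}$, the calibration \eqref{Lemma:Growthlemmas:9} together with $\varepsilon<\tfrac2\alpha+\beta$ gives $A_\mu\lesssim\mathfrak L_{N_0}$ and $B_\mu=o(1)$, so $\rho_{super}$ stays in the window with $\rho_{super}(0,\cdot)\ge1-o(1)$ there; one sets $\rho:=\rho_{super}\,\chi_{[0,\mathfrak L_{N_0})}$, which still satisfies \eqref{Lemma:Growthlemmas:8} up to an $o(1)$ error coming only from the sliver $[\mathfrak L_{N_0}-3\mathfrak h_{N_0},\mathfrak L_{N_0})$. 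Differentiating $g$ via \eqref{4wavemild} and rewriting $\int\mathfrak C[f]\rho|k|\mho$ through \eqref{Lemma:Concave:E5}--\eqref{Propo:Collision:2}, now restricted to the frequencies of $\Xi^{\mathfrak h_{N_0},\mathfrak L_{N_0}}_{\mathfrak i(t)}$ and the times of $\mathscr V^{T_1}_{N_0}$ (where all three frequencies are $\asymp\mathfrak L_{N_0}$, so $\mho(\omega_{Max}-\omega_{Min}+\omega_{Mid})\ge\mho(\mathfrak L_{N_0}/2)$ and, by \eqref{Settings3}, $|k_{Mid}||k_{Max}|\le\mathfrak L_{N_0}^{2/\alpha}/C_\omega^{2/\alpha}$), the collision contribution is bounded below by the nonlocal term carried by $\mu$ — which is exactly why the normalizing factor $\mho(\mathfrak L_n/2)C_1C_\omega^{2/\alpha}\mathfrak L_n^{-2/\alpha}$ was placed in \eqref{Lemma:Growthlemmas:7} — and, combined with \eqref{Lemma:Growthlemmas:8} and $F\ge0$, this yields $g'(t)\ge-o(1)$ on $[0,T_1]$. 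Integrating,
$$m(T_1)\ \ge\ g(T_1)\ \ge\ g(0)-o(1)\ \ge\ (1-o(1))\int_0^{\mathfrak L_{N_0+1}}F(0)\ \ge\ (2-o(1))\,\mathscr C_F\mathfrak L_{N_0}^{\varsigma}\ >\ \mathscr C_F\mathfrak L_{N_0}^{\varsigma}$$
for $N_0$ large. If $T_1=\tau$ this contradicts $m(\tau)=\mathscr C_F\mathfrak L_{N_0}^{\varsigma}$; otherwise one restarts the comparison at $T_1$ (the hypothesis $m(T_1)>\mathscr C_F\mathfrak L_{N_0}^{\varsigma}$ is exactly what is needed) and iterates, finitely many times because each step consumes the fixed amount $\frac{\mathfrak L_{N_0}^{\varepsilon-\beta}}{4000\,C_1C_\omega^{2/\alpha}\mho(\mathfrak L_{N_0}/2)}$ of the bounded quantity $\int_0^{\tau}(\int\mathcal G_{\mathfrak h_{N_0},\mathfrak L_{N_0}})^{2}$, while the accumulated error stays $o(\mathscr C_F\mathfrak L_{N_0}^{\varsigma})$ by the exponent constraints. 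In every case we reach a contradiction, so $\mathfrak S_{N_0}^{T}=[0,T]$ for all $T\in[0,T_0)$.

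\emph{The main obstacle.} The genuinely delicate step is the lower bound on $\int\mathfrak C[f]\rho|k|\mho$ — converting the degenerate, cubic, three--term weak collision operator near the origin into the linear nonlocal operator handled by Lemma~\ref{Lemma:Supersol}, with the unavoidable drift and error pieces absorbed rather than fought. This is what forces the localization of $F$ to a single block via $\mathcal G$, the splitting of $[0,T]$ into $\mathscr U^{T}_{n},\mathscr V^{T}_{n},\mathscr W^{T}_{n}$, and the full use of the admissible window $\tfrac2\alpha\tfrac{\beta+1}{\beta+2}+\beta\tfrac{\beta+3}{\beta+2}<\varepsilon<\tfrac2\alpha+\beta$ and the bound \eqref{varsigma} on $\varsigma$, chosen so that $e^{B_\mu}-1$, the drift $A_\mu$, the truncation sliver, the discarded parts of the collision kernel, and the accumulation over the iteration are each $o(\mathscr C_F\mathfrak L_{N_0}^{\varsigma})$ once $N_0$ is large. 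Everything else — the base case, the continuity of $m$, the first--exit--time bookkeeping, and the estimates $A_\mu\lesssim\mathfrak L_{N_0}$, $B_\mu=o(1)$ — is routine.
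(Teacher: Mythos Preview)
Your approach is far more complicated than needed and has real gaps. The paper's proof is a two--line monotonicity argument: take the \emph{time--independent} convex test function $\varphi(\omega)=(\mathfrak L_{N_0}-\omega)_+$. For any convex, nonincreasing $\varphi$ the weak collision term $\int\mathfrak C[f]\,\varphi\,|k|\mho$ is nonnegative --- this is precisely the content of \eqref{Lemma:Growth1:E12:a}, which uses only $\varphi''\ge0$, $\varphi'\le0$, and that $\mho$ is nondecreasing. Hence $t\mapsto\int_{\mathbb R_+}F(t,\omega)\varphi(\omega)\,\mathrm d\omega$ is nondecreasing, and the chain
\[
\mathfrak L_{N_0}\!\int_{[0,\mathfrak L_{N_0})}\!F(t)\ \ge\ \int F(t)\varphi\ \ge\ \int F(0)\varphi\ \ge\ \mathfrak L_{N_0+1}\!\int_0^{\mathfrak L_{N_0+1}}\!F(0)\ \ge\ \mathfrak L_{N_0+1}\cdot 2\mathscr C_F\mathfrak L_{N_0}^{\varsigma}
\]
gives $\int_{[0,\mathfrak L_{N_0})}F(t)\ge\mathscr C_F\mathfrak L_{N_0}^{\varsigma}$ for every $t$, since $\mathfrak L_{N_0+1}=\mathfrak L_{N_0}/2$. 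No supersolution, no iteration, no error terms.

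You have imported the entire time--dependent supersolution machinery of Lemma~\ref{Lemma:SuperSolu} and the kernel $\mu$ of \eqref{Lemma:Growthlemmas:7}, which in the paper are built for a genuinely harder task (bounding $|\mathscr V_n^T|$ in part~(iii) of Lemma~\ref{Lemma:Growth1}, where one needs \emph{exponential growth}, not mere nondecrease). Beyond the unnecessary complexity, your argument has concrete gaps: (a) you never specify $\Upsilon$, and demanding it be convex, decreasing, satisfy $\Upsilon''(z_1)\ge C_\Upsilon\Upsilon(z_2)$ for all $z_1,z_2$, and simultaneously be $\ge1-o(1)$ on $[0,\mathfrak L_{N_0}/2]$ are in tension --- on the nearly--flat plateau $\Upsilon''\approx0$ while $\Upsilon\approx1$; (b) the truncation $\rho:=\rho_{super}\chi_{[0,\mathfrak L_{N_0})}$ destroys the supersolution inequality \eqref{Lemma:Growthlemmas:8} at the cutoff, and the ``sliver'' error is asserted but not bounded; (c) your iteration relies on $\int_0^\tau\big(\int\mathcal G\big)^2<\infty$, which you have not established; (d) when $T_{\mathfrak h_{N_0},\mathfrak L_{N_0}}$ does not exist you set $T_1=\tau$, but then the calibration \eqref{Lemma:Growthlemmas:9} you invoke for the bounds on $A_\mu,B_\mu$ is unavailable. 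The single observation you are missing is that for a \emph{static} convex decreasing test function the collision term already has the right sign, so none of this apparatus is needed here.
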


\begin{proof}
Using  the test function $
		\varphi\left( \omega\right) =\left( \mathfrak{L}_{N_0}-\omega\right)
		_{+} $
in the place of $\rho_{super}$ in \eqref{Lemma:Growth1:E12}, we find
	
	\begin{equation}	\label{Lemma:GrowthFinal:E1}
	\begin{aligned}
		&	\partial_{t}\left( \int_{\mathbb{R}^{+}}\mathrm{d}\omega F\varphi\right)\  =\ \  C_1\iiint_{\mathbb{R}_+^{3}}\mathrm{d}\omega_1\,\mathrm{d}\omega_2\,\mathrm{d}\omega f_1f_2f\mathbf{1}_{\omega+\omega_1-\omega_2\ge 0}\\
		&\times 	 \Big\{ [-\varphi(\omega_{Max})-\varphi(\omega_{Min})+\varphi(\omega_{Mid})+\varphi(\omega_{Max}+\omega_{Min}-\omega_{Mid})]\\
		&\ \ \ \ \times\mho(\omega_{Max})\mho(\omega_{Min})\mho(\omega_{Mid})\mho(\omega_{Max}+\omega_{Min}-\omega_{Mid})|k_{Min}|  \\
		&+[-\varphi(\omega_{Max})-\varphi(\omega_{Mid})+\varphi(\omega_{Min})+\varphi(\omega_{Max}+\omega_{Mid}-\omega_{Min})]\\
		&\ \ \ \ \times\mho(\omega_{Max})\mho(\omega_{Min})\mho(\omega_{Mid})\mho(\omega_{Max}+\omega_{Mid}-\omega_{Min})|k_{Min}|  \\
		&+[-\varphi(\omega_{Min})-\varphi(\omega_{Mid})+\varphi(\omega_{Max})+\varphi(\omega_{Min}+\omega_{Mid}-\omega_{Max})]\\
		&\ \ \ \ \times\Xi(\omega_{Max},\omega_{Min},\omega_{Mid},\omega_{Min}+\omega_{Mid}-\omega_{Max})\Big\} \ \ge 0,
\end{aligned}\end{equation}
where the same arguments as in  \eqref{Lemma:Growth1:E13}-\eqref{Lemma:Growth1:E14a} have been used. As a consequence, we obtain
	\begin{equation}	\label{Lemma:GrowthFinal:E2}
	\begin{aligned}
		\mathfrak{L}_{N_0} \int_{[0,\mathfrak{L}_{N_0})}\mathrm{d}\omega F(t,\omega)\  \ge \  &   \int_{\mathbb{R}^{+}}\mathrm{d}\omega F(t,\omega)\left[ \left( \mathfrak{L}_{N_0}-\omega\right)_+\right] 
		\  \ge \     \  \int_{\mathbb{R}^{+}}\mathrm{d}\omega  F(0,\omega)\left[ \left( \mathfrak{L}_{N_0}-\omega\right)_+ \right]\\
		\  \ge \  &  \  \int_{0}^{\mathfrak{L}_{N_0+1}}\mathrm{d}\omega F(0,\omega)\left[  \left( \mathfrak{L}_{N_0}-\omega\right)_+ \right] 
		\  \ge \     \  \mathfrak{L}_{N_0+1}\int_{0}^{\mathfrak{L}_{N_0+1}}\mathrm{d}\omega F(0,\omega),
\end{aligned}\end{equation}
yielding
	\begin{equation}	\label{Lemma:GrowthFinal:E3}
	\begin{aligned}
	  2\int_{[0,\mathfrak{L}_{N_0})}\mathrm{d}\omega F(t,\omega)\  \ge \  &    \int_{0}^{\mathfrak{L}_{N_0+1}}\mathrm{d}\omega F(0,\omega)\ \ge \   \mathscr{C}_F^*=2\mathscr{C}_F|\mathfrak{L}_{N_0}|^\varsigma, 
\end{aligned}\end{equation}
which, in combination with the definition \eqref{Sec:Growthlemmas:2} gives
\begin{equation}\label{Lemma:GrowthFinal:E4} 
	\mathfrak{S}_{N_0}^T= [0,T], 
\end{equation} 
for all $T\in[0,T_0)$.
\end{proof}

\begin{lemma}[Supersolution of an integro-differential equation]
	\label{Lemma:SuperSolu} Suppose that the time  $T_{\mathfrak{h}_n,\mathfrak{L}_{n}}$ defined in \eqref{Lemma:Growthlemmas:9} exists, then there exists a super solution  $\rho_{super}(t,x)$ for \eqref{Lemma:Supersol:1} with  $\mu$ defined in  \eqref{Lemma:Growthlemmas:7} such that $\rho_{super}(t,x) \in L^{\infty}\left( \left[ 0,T_{\mathfrak{h}_n,\mathfrak{L}_{n}}\right]\times\mathbb{R}^{+}\right)$ and  \begin{itemize}
		\item[(a)] $0\le \rho_{super}(t,x)\le 0.2$  for a.e. $(t,x)\in [0,\infty)^2$. 
				\item[(b)] $\rho_{super}(t,x)\ge 0.05$  for a.e.  $x\in\Big[0,\frac{\mathfrak{L}_n}{40}\Big]$ and $\mathrm{supp}\rho_{super}(t,x)\subset\Big[0,\frac{\mathfrak{L}_n}{10}\Big]$ for all $t\in[0,T_{\mathfrak{h}_n,\mathfrak{L}_{n}}]$.
				
					\item[(c)] $\partial_x\rho_{super}(t,x)\le 0$ and $\partial_{xx}\rho_{super}(t,x)\ge 0$
				a.e. $(t,x)\in [0,\infty)^2$. 
	\end{itemize}

\end{lemma}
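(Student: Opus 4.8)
The plan is to produce $\rho_{super}$ in the form given by Lemma~\ref{Lemma:Supersol}, with final time $T_1:=T_{\mathfrak{h}_n,\mathfrak{L}_{n}}$ (assumed to exist), kernel $\mu$ from \eqref{Lemma:Growthlemmas:7}, and the explicit quadratic barrier
\[
\Upsilon(z)\ :=\ \frac{10}{\mathfrak{L}_n^{2}}\Big(\tfrac{\mathfrak{L}_n}{10}-z\Big)_{+}^{2},\qquad C_\Upsilon\ :=\ \frac{1}{\mathfrak{L}_n^{2}},
\]
so that $\rho_{super}(t,x)=e^{B_\mu(t)}\Upsilon(A_\mu(t)+x)$ with $A_\mu,B_\mu$ as in \eqref{Lemma:Supersol:3}. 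This $\Upsilon$ is nonnegative, decreasing, convex, supported in $[0,\mathfrak{L}_n/10]$, with $\Upsilon(0)=\tfrac1{10}$ and $\Upsilon''\equiv 20/\mathfrak{L}_n^{2}\ge C_\Upsilon\|\Upsilon\|_\infty$ on $(0,\mathfrak{L}_n/10)$; all three conclusions will then reduce to showing that $A_\mu(0)$ is a small fraction of $\mathfrak{L}_n$ and that $B_\mu(0)$ is negligible.

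First I would use that, for each fixed $t$, $\mathcal{G}_{\mathfrak{h}_n,\mathfrak{L}_{n}}(t,\cdot)$ is supported in the single overlapping subinterval $\Xi_{\mathfrak{i}(t)}^{\mathfrak{h}_n,\mathfrak{L}_{n}}$, which has length at most $3\mathfrak{h}_n$. Hence $\mu(t,z)=0$ for $z>3\mathfrak{h}_n$, and by Tonelli
\[
\int_{0}^{\infty}\mathrm{d}z\,\mu(t,z)\ \le\ \frac{\mho(\mathfrak{L}_n/2)\,C_1C_\omega^{2/\alpha}}{\mathfrak{L}_n^{2/\alpha}}\Big(\int_{\mathbb{R}_+}\mathrm{d}\omega\,\mathcal{G}_{\mathfrak{h}_n,\mathfrak{L}_{n}}(t,\omega)\Big)^{2}.
\]
Bounding $z\le 3\mathfrak{h}_n$ (respectively $z^{2}\le 9\mathfrak{h}_n^{2}$), integrating in $t\in[0,T_1]$ and inserting the defining identity \eqref{Lemma:Growthlemmas:9} for $T_{\mathfrak{h}_n,\mathfrak{L}_{n}}$, I obtain
\[
A_\mu(0)\ \le\ 3\mathfrak{h}_n\,\frac{\mathfrak{L}_n^{\varepsilon-\beta-2/\alpha}}{4000},\qquad B_\mu(0)\ \le\ C_\Upsilon\,\frac{9\,\mathfrak{h}_n^{2}\,\mathfrak{L}_n^{\varepsilon-\beta-2/\alpha}}{8000}.
\]
Since $\mathfrak{h}_n=2^{-n}2^{-\mathfrak{M}_0(n)}$ with $\mathfrak{M}_0(n)=\lfloor n(-\varepsilon+2/\alpha+\beta)\rfloor$, one has $\mathfrak{L}_n^{1-\varepsilon+2/\alpha+\beta}\le\mathfrak{h}_n<2\mathfrak{L}_n^{1-\varepsilon+2/\alpha+\beta}$, so that $A_\mu(0)<6\mathfrak{L}_n/4000<\mathfrak{L}_n/600$ and, using $\varepsilon<2/\alpha+\beta$ and $\mathfrak{L}_n\le1$, $B_\mu(0)<\tfrac1{200}\mathfrak{L}_n^{-\varepsilon+2/\alpha+\beta}\le\tfrac1{200}$; in particular $e^{B_\mu(t)}\le e^{B_\mu(0)}<1.01$ for all $t\in[0,T_1]$.

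These bounds yield the three properties at once. \emph{(a)}: since $\Upsilon$ decreases and $A_\mu(t)\ge0$, $0\le\rho_{super}(t,x)\le e^{B_\mu(0)}\Upsilon(0)<1.01\cdot\tfrac1{10}<0.2$. \emph{(b)}: $\Upsilon(A_\mu(t)+x)=0$ whenever $A_\mu(t)+x\ge\mathfrak{L}_n/10$, hence for every $x>\mathfrak{L}_n/10$, giving $\mathrm{supp}\,\rho_{super}(t,\cdot)\subset[0,\mathfrak{L}_n/10]$; and for $x\in[0,\mathfrak{L}_n/40]$ we have $A_\mu(t)+x\le A_\mu(0)+\mathfrak{L}_n/40<\mathfrak{L}_n/600+\mathfrak{L}_n/40=2\mathfrak{L}_n/75$, so $\rho_{super}(t,x)\ge\Upsilon(2\mathfrak{L}_n/75)=10(11/150)^{2}>0.05$. \emph{(c)}: $\partial_x\rho_{super}=e^{B_\mu}\Upsilon'(A_\mu+x)\le0$ and $\partial_{xx}\rho_{super}=e^{B_\mu}\Upsilon''(A_\mu+x)\ge0$ a.e., since $\Upsilon$ is decreasing and convex. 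Finally, that $\rho_{super}$ satisfies \eqref{Lemma:Growthlemmas:8} is the conclusion of Lemma~\ref{Lemma:Supersol} applied to this $\mu,\Upsilon,C_\Upsilon$ and $T_1$.

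The hard part is the calibration: the constant $4000$ in \eqref{Lemma:Growthlemmas:9} is chosen precisely so that the moments of $\mu(t,\cdot)$, which carry a factor $3\mathfrak{h}_n$ (respectively $9\mathfrak{h}_n^{2}$) times $(\int\mathcal{G})^{2}$ whose time integral \eqref{Lemma:Growthlemmas:9} fixes, force $A_\mu(0)$ below $\mathfrak{L}_n/600$ — small enough that the $x$-shifted barrier stays inside $[0,\mathfrak{L}_n/10]$ and above $0.05$ on $[0,\mathfrak{L}_n/40]$ — while the extra factor $\mathfrak{h}_n^{2}\ll\mathfrak{L}_n^{2}$ keeps $e^{B_\mu}\le1.01$. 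A secondary point is that the quadratic $\Upsilon$ fails the pointwise inequality \eqref{Lemma:Supersol:2} on the $O(\mathfrak{h}_n)$-wide sliver at the right end of its support; this is immaterial because $\mu(t,\cdot)$ has range $\le3\mathfrak{h}_n\ll\mathfrak{L}_n$, so only the averaged inequality
\[
\int_{0}^{\infty}\mathrm{d}z\,\mu(t,z)\int_{0}^{z}\mathrm{d}\zeta\,\Upsilon''(A_\mu(t)+x+\zeta)(z-\zeta)\ \ge\ C_\Upsilon\,\Upsilon(A_\mu(t)+x)\int_{0}^{\infty}\mathrm{d}z\,\mu(t,z)\,\tfrac{z^{2}}{2}
\]
enters the proof of Lemma~\ref{Lemma:Supersol}, and this holds for our $\Upsilon$ once $C_\Upsilon\lesssim\mathfrak{L}_n^{-2}$ (the deficit on the sliver being of size $O(\mathfrak{h}_n^{2}/\mathfrak{L}_n^{2})$ and dominated by the contribution of the arguments with $A_\mu(t)+x+\zeta<\mathfrak{L}_n/10$); alternatively one may replace $(\mathfrak{L}_n/10-z)_{+}^{2}$ by $(\mathfrak{L}_n/10-z)_{+}^{p}$ with $1<p<2$, whose second derivative is unbounded at the endpoint.
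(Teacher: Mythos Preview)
Your approach is essentially the paper's: construct $\rho_{super}$ via Lemma~\ref{Lemma:Supersol} with an explicit barrier $\Upsilon$ supported in $[0,\mathfrak{L}_n/10]$, then use the support bound $z\le 3\mathfrak{h}_n$ on $\mu$ together with the calibration \eqref{Lemma:Growthlemmas:9} to force $A_\mu(0)\lesssim\mathfrak{L}_n/1000$ and $B_\mu(0)\ll 1$, from which (a)--(c) follow by direct evaluation. The only genuine difference is the choice of barrier: the paper takes the exponential $\Upsilon(z)=\exp\big[\mathfrak{L}_n^{-1}(\mathfrak{L}_n/10-z)_+\big]-1$ with $C_\Upsilon=\tfrac{1}{20}\mathfrak{L}_n^{-2}$, whereas you take the quadratic $\Upsilon(z)=10\mathfrak{L}_n^{-2}(\mathfrak{L}_n/10-z)_+^2$. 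The numerical bounds on $A_\mu,B_\mu$ and the verification of (a)--(c) are otherwise parallel and your computations are correct.

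Your flagging of the failure of the pointwise condition \eqref{Lemma:Supersol:2} for $z_1>\mathfrak{L}_n/10$ is in fact more careful than the paper, which simply asserts the inequality for its exponential $\Upsilon$ even though $\Upsilon''$ vanishes there as well. Your workaround via the averaged inequality is correct in spirit and can be made rigorous exactly as you indicate: when $A_\mu(t)+x\ge\mathfrak{L}_n/10$ both sides vanish, and when $A_\mu(t)+x=\mathfrak{L}_n/10-\delta$ with $0<\delta<\mathfrak{L}_n/10$ one checks directly (using $z\le 3\mathfrak{h}_n\ll\mathfrak{L}_n$) that $\int_0^z\Upsilon''(A_\mu+x+\zeta)(z-\zeta)\,d\zeta\ge C_\Upsilon\Upsilon(A_\mu+x)\,z^2/2$ for each such $z$. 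One caution: your proposed alternative $(\mathfrak{L}_n/10-z)_+^{p}$ with $1<p<2$ does \emph{not} actually repair the pointwise hypothesis, since its second derivative still vanishes identically for $z>\mathfrak{L}_n/10$; the unboundedness near the endpoint is irrelevant. The honest fix, for either your barrier or the paper's, is precisely the averaged argument you sketch.
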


\begin{proof}
	The proof  is an application of  Lemma \ref{Lemma:Supersol}. To this end, we define  $\Upsilon(z): [0,\infty)\to [0,\infty) $, using the same notations of Lemma \ref{Lemma:Supersol}, as follow
	\begin{equation}\label{Lemma:SuperSolu:E1}
		\Upsilon\left( z\right)\ := \ \exp\left[ \frac{1}{\mathfrak{L}_{n}}\left( 
		\frac{\mathfrak{L}_{n}}{10}-z\right) _{+}\right] -1 \  := \ \exp\left[ 	\frac{1}{\mathfrak{L}_{n}}\max\left\{\frac{\mathfrak{L}_{n}}{10}-z,0\right\} \right] -1 .
	\end{equation}
	Following \eqref{Lemma:Supersol:2}, we have for any $z_1, \, z_2$
\begin{equation}
	\label{Lemma:Supersolu:E2}\Upsilon''(z_1)\ge \frac{1}{20}\mathfrak{L}_{n}^{-2}\Upsilon(z_2) =: C_\Upsilon\Upsilon(z_2),
\end{equation}
and  
	\begin{equation}\begin{aligned}
			\label{Lemma:Supersolu:E3}A_\mu(t) \ = \  &  \int_{t}^{T_{\mathfrak{h}_n,\mathfrak{L}_{n}}}\mathrm{d}s\int_{0}^\infty\mathrm{d}z \mu(s,z),\\ \ \ \ B_\mu(t) \ = \ &  C_\Upsilon \int_{t}^{T_{\mathfrak{h}_n,\mathfrak{L}_{n}}}\mathrm{d}s\int_{0}^\infty\mathrm{d}z \mu(s,z)z^2/2.\end{aligned}
	\end{equation}
	Then by Lemma \ref{Lemma:Supersol},
	\begin{equation}\begin{aligned}
			\label{Lemma:Supersolu:E4}  &  \rho_{super}(t,x)\ : = \ e^{B_\mu}\Upsilon(A_\mu+x),\end{aligned}
	\end{equation}
	satisfies \eqref{Lemma:Growthlemmas:8}, which is a supersolution. It is straightforward that $\partial_x\rho_{super}(t,x)\le 0$ and $\partial_{xx}\rho_{super}(t,x)\ge 0$
	a.e. $(t,x)\in [0,\infty)^2$. 
	
	Next, we bound
		\begin{equation}\begin{aligned}\label{Lemma:Supersolu:E5}    
A_\mu(t) \ = \ 		&	 \int_{t}^{T_{\mathfrak{h}_n,\mathfrak{L}_{n}}}\mathrm{d}s\int_{0}^\infty\mathrm{d}z \mu(s,z)z \ \le \  \int_{0}^{T_{\mathfrak{h}_n,\mathfrak{L}_{n}}}\mathrm{d}s\int_{0}^\infty\mathrm{d}z \mu(s,z)z\\
		 \ \le \ & 27 \frac{{C_1}C_\omega^{\frac2\alpha} \mho(\mathfrak{L}_n/2)}{\mathfrak L_n^{\frac2\alpha}}\mathfrak{h}_n\int_{0}^{T_{\mathfrak{h}_n,\mathfrak{L}_{n}}}\mathrm{d}s\int_{0}^\infty\mathrm{d}z  \int_{\mathbb{R}_+ }\mathrm d\omega
		 \Big[ \mathcal G_{\mathfrak{h}_n,\mathfrak{L}_{n}}(s,\omega)\Big]\Big[\mathcal G_{\mathfrak{h}_n,\mathfrak{L}_{n}}(s,\omega+z)\Big]\\
		 \ \le \ &27 \frac{{C_1}C_\omega^{\frac2\alpha} \mho(\mathfrak{L}_n/2)}{\mathfrak L_n^{\frac2\alpha}}\mathfrak{h}_n \int_{0}^{T_{\mathfrak{h}_n,\mathfrak{L}_{n}}}\mathrm{d}s  	\left[\int_{\mathbb{R}_+ }\mathrm d\omega\Big[ \mathcal G_{\mathfrak{h}_n,\mathfrak{L}_{n}}(s,\omega)\Big]\right]
		 ^{2}   \ \le \   27  \frac{1}{\mathfrak L_n^{\frac2\alpha}}\mathfrak{h}_n\frac{\mathfrak{L}_n^{\varepsilon-\beta}}{4000}\\
		 \ \le \ &27 \frac{\mathfrak{L}_n^{\varepsilon-\frac2\alpha-\beta}\mathfrak{h}_n}{4000}  \ \le \   27 \frac{2^{-n(\varepsilon-\frac2\alpha-\beta)}}{4000}\frac{\mathfrak L_n}{2^{\mathfrak{M}_0(n)}}\ \le \   27 \frac{\mathfrak{L}_n}{4000},\end{aligned}
	\end{equation}
	where we have used \eqref{Lemma:Growthlemmas:9} and the definitions of $\mathfrak{h}_n,{\mathfrak{M}_0(n)}$.
	  We then bound
		\begin{equation}\begin{aligned}\label{Lemma:Supersolu:E6}    
		B_\mu(t) \ = \ 		&	 C_\Upsilon\int_{t}^{T_{\mathfrak{h}_n,\mathfrak{L}_{n}}}\mathrm{d}s\int_{0}^\infty\mathrm{d}z \mu(s,z)z^2/2 \ \le \  9C_\Upsilon\int_{0}^{T_{\mathfrak{h}_n,\mathfrak{L}_{n}}}\mathrm{d}s\int_{0}^\infty\mathrm{d}z \mu(s,z)z\mathfrak{h}_n/2\\
			\ \le \ & 9C_\Upsilon A_\mu(t) \mathfrak{h}_n/2  \ \le \    \frac{243}{160000} \mathfrak{L}_n^{-1}\mathfrak{h}_n \ \le \ \frac{243}{2^{\mathfrak{M}_0(n)}160000}   \ \le \  \frac{243}{160000}.\end{aligned}
	\end{equation}
 
Combining \eqref{Lemma:Supersolu:E5} and \eqref{Lemma:Supersolu:E6}, we find
	\begin{equation}\begin{aligned}\label{Lemma:Supersolu:E7} 
	\rho_{super}(t,x) \ = \ 		&	    e^{B_\mu}\Upsilon(A_\mu+x) \ \le  \ e^{\frac{243}{160000}}\left[\ \exp\left[ 	\frac{1}{\mathfrak{L}_{n}}\max\left\{\frac{\mathfrak{L}_{n}}{10}-x-A_\mu,0\right\} \right] -1\right]\\ \le  \ &  e^{\frac{243}{160000}}\Big[e^\frac{1}{10}-1\Big]\le 0.2.\end{aligned}
\end{equation}
	When $x\in\Big[0,\frac{\mathfrak{L}_n}{40}\Big]$, we find
		\begin{equation}\begin{aligned}\label{Lemma:Supersolu:E8} 
			\rho_{super}(t,x) \ = \ 		&	    e^{B_\mu}\Upsilon(A_\mu+x) \ \ge  \  \left[\ \exp\left[ \frac{1}{\mathfrak{L}_{n}}\max\left\{\frac{\mathfrak{L}_{n}}{10}-x-A_\mu,0\right\} \right] -1\right]\\ \ge  \ &   e^\frac{1}{20}-1\ \ge\ 0.05.\end{aligned}
	\end{equation}
	The proof of the Lemma is then completed.

\end{proof}
\begin{lemma}[Growth lemma]
	\label{Lemma:Growth1} Let $T_0$ be as in \eqref{T0}. By the definitions of \eqref{Sec:Growthlemmas:2},   \eqref{Sec:Growthlemmas:3},   \eqref{Sec:Growthlemmas:4},   \eqref{Sec:Growthlemmas:5},  \eqref{Sec:Growthlemmas:6}, there exists a universal constant $\mathfrak C_{1}>0$ such that,  
	\begin{equation}\label{Lemma:Growth1:1}
	\begin{aligned}
		{\mathfrak{C}_1 \mathfrak{L}_n^{(\beta+2)\Big[\varepsilon-\frac{2}{\alpha}\frac{\beta+1}{\beta+2}-\beta\frac{\beta+3}{\beta+2}-\frac{3 \varsigma}{\beta+2}\Big]}(	\mathscr M +\mathscr E)}  
		\ > \ 	  & | {	\mathscr{U}_{n}^T}|,
	\end{aligned}
\end{equation}
for   $n\in\mathbb{N},n>100$, $\forall T\in[0,T_0)$.
In addition,   suppose that there exist arbitrary constants $C_{ini},c_{ini}>0$ and a small constant $r_0>0$ such that
\begin{equation}\label{Lemma:Growth1:5}
	\begin{aligned}
		\int_{0}^r\mathrm d\omega F(0,\omega)     
		\geq  C_{ini}  {r^{c_{ini}}}, \end{aligned}
\end{equation}
for all $r_0>r>0$.
We also have the following bound on the sets $\mathfrak{S}_n^T$, $\mathfrak{S}_{n+1}^T$ and 	$\mathscr{V}_{n}^T$
	\begin{equation}\label{Lemma:Growth1:3}
	\begin{aligned}
		\Big|	\Big(\mathfrak{S}_n^T\setminus  \mathfrak{S}_{n+1}^T\Big)\setminus 
		\Big(\mathscr{V}_{n}^T\backslash \mathscr{W}_{n}^T\Big)\Big| \ < \ {\mathfrak{C}_1\mathfrak{L}_n^{(\beta+2)\Big[\varepsilon-\frac{2}{\alpha}\frac{\beta+1}{\beta+2}-\beta\frac{\beta+3}{\beta+2}-\frac{3 \varsigma}{\beta+2}\Big]}(	\mathscr M +\mathscr E)},
	\end{aligned}
\end{equation}
and there exists $\mathscr{M}_*$ such that $\forall n>\mathscr{M}_*, n\in\mathbb{N}$
	\begin{equation}\label{Lemma:Growth1:4}
	\begin{aligned}
		\Big| 
		\mathscr{V}_{n}^T\Big|  \ \le \ \mathfrak{C}_V{\mathfrak{L}_{n}^{\varepsilon-2\beta-2\varsigma}},
	\end{aligned}
\end{equation}
for some universal constant $\mathfrak{C}_V>0$, independent of $\varepsilon, \alpha,\beta,\varsigma$.

We define $\gamma(\varsigma)=\min\left\{ \varepsilon-2\beta-2\varsigma,(\beta+2)\Big[\varepsilon-\frac{2}{\alpha}\frac{\beta+1}{\beta+2}-\beta\frac{\beta+3}{\beta+2}-\frac{3 \varsigma}{\beta+2}\Big]\right\} $,   then there exists $\mathscr{M}^*>0$ such that
\begin{equation}\label{Lemma:Growth2:1}
	\left\vert\bigcup_{i=n}^\infty	\big(\mathfrak{S}_i^T\backslash
			\mathscr{W}_{i}^T\big) \right\vert   \leq{\mathfrak{C}_\mathfrak{S}}\left(
	\mathfrak{L}_n\right) ^{\gamma(\varsigma)},
\end{equation}
for a universal constant $\mathfrak{C}_\mathfrak{S}>0$,  independent of $\varepsilon, \alpha,\beta,\varsigma$,
 for   $n>\mathscr{M}^*, n\in\mathbb{N}$, $\forall T\in[0,T_0)$.
 
 There exist $\varsigma_*>0$ and $\mathscr{M}^*(\varsigma_*)>0$ such that for $n>\mathscr{M}^*(\varsigma_*)$, we have
 \begin{equation}\label{Lemma:Growth:3}
	\left\vert \mathfrak{S}_n^T \right\vert   \leq{\mathfrak{C}_\mathfrak{S}}\left(
	\mathfrak{L}_n\right) ^{\gamma(\varsigma_*)}.
\end{equation}
\end{lemma}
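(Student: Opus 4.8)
The plan is to upgrade the already-established bound \eqref{Lemma:Growth2:1} on the telescoped sets $\bigcup_{i\ge n}(\mathfrak S_i^T\setminus\mathscr W_i^T)$ into a bound on $\mathfrak S_n^T$ itself, by showing that the two sets coincide up to a Lebesgue-null set.

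First I would record the inclusion $\mathscr W_n^T\subseteq\mathfrak S_{n+1}^T$. By the construction \eqref{Sec:Growthlemmas:1}, for $0\le i\le 2^{\mathfrak M_0(n)-1}-2$ the overlapping subdomain $\Xi_i^{\mathfrak h_n,\mathfrak L_n}$ has right endpoint at most $2^{\mathfrak M_0(n)-1}\mathfrak h_n=\mathfrak L_n/2=\mathfrak L_{n+1}$, hence $\Xi_i^{\mathfrak h_n,\mathfrak L_n}\subseteq[0,\mathfrak L_{n+1})$; thus if $t\in\mathfrak S_{n,i}^T$ then $\int_{[0,\mathfrak L_{n+1})}\mathrm d\omega F(t)\ge\int_{\Xi_i^{\mathfrak h_n,\mathfrak L_n}}\mathrm d\omega F(t)\ge\mathscr C_F\mathfrak L_{n+1}^\varsigma$, i.e.\ $t\in\mathfrak S_{n+1}^T$, and taking the union over $i$ gives the claim (for $n$ large enough that $\mathfrak M_0(n)\ge2$). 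Writing $\mathfrak S_n^T=(\mathfrak S_n^T\setminus\mathscr W_n^T)\cup(\mathfrak S_n^T\cap\mathscr W_n^T)$ and iterating, any $t\in\mathfrak S_n^T$ that lies in none of the sets $\mathfrak S_i^T\setminus\mathscr W_i^T$, $i\ge n$, must lie in $\mathscr W_i^T$ for every $i\ge n$; therefore
\[
\mathfrak S_n^T\ \subseteq\ \Big(\bigcup_{i=n}^{\infty}(\mathfrak S_i^T\setminus\mathscr W_i^T)\Big)\ \cup\ \Big(\bigcap_{i\ge n}\mathscr W_i^T\Big)\ =:\ \mathcal E_n\cup Z_n .
\]

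It remains to prove $|Z_n|=0$ for $n$ large, and this is the heart of the matter. For $t\in Z_n$ and every dyadic scale $i\ge n$, the near-origin mass $\int_{[0,\mathfrak L_{i+1})}\mathrm d\omega F(t)\ge\mathscr C_F\mathfrak L_{i+1}^\varsigma$ is entirely carried by a single window $\Xi_{j_i}^{\mathfrak h_i,\mathfrak L_i}$ of length $3\mathfrak h_i\ll\mathfrak L_{i+1}$ inside $[0,\mathfrak L_{i+1})$, so $F(t,\cdot)$ concentrates ever more sharply at the origin. I would rule this out by running to the limit the dissipation mechanism of Lemma \ref{lemma:Concave} underlying \eqref{Lemma:Growth1:1} and \eqref{Lemma:Growth1:3}: the time-integrated dissipation on the right of \eqref{lemma:Concave:1} is finite, whereas persistence of such windowed concentration on a set of times of positive measure would force it to diverge; combined with $t\in[0,T_0)$ being a non-condensation time ($\int_{\{0\}}\mathrm dk\,f(t,k)=0$, hence $\int_{\{0\}}\mathrm d\omega\,F(t,\omega)=0$ and $\int_{[0,\mathfrak L_{i+1})}\mathrm d\omega F(t)\to0$ as $i\to\infty$), this yields $|Z_n|=0$. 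The exponent is pinned down at this stage: one fixes $\varsigma_*>0$ satisfying \eqref{varsigma} (so that $\gamma(\varsigma_*)>0$) and small enough that $\varsigma_*<c_{ini}$ — the latter ensuring, via the monotonicity estimate with test function $(\mathfrak L_n-\omega)_+$ as in \eqref{Lemma:GrowthFinal:E1}--\eqref{Lemma:GrowthFinal:E2} together with \eqref{Lemma:Growth1:5}, that the lower bound does not force $\mathfrak S_n^T$ to be all of $[0,T]$ — and one takes $\mathscr M^*(\varsigma_*)$ large enough that \eqref{Lemma:Growth1:4} and \eqref{Lemma:Growth2:1} (with $\varsigma=\varsigma_*$) hold for $n>\mathscr M^*(\varsigma_*)$. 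Then, taking Lebesgue measure in the displayed inclusion and invoking \eqref{Lemma:Growth2:1}, $|\mathfrak S_n^T|\le|\mathcal E_n|+|Z_n|=|\mathcal E_n|\le\mathfrak C_{\mathfrak S}\mathfrak L_n^{\gamma(\varsigma_*)}$ for all $n>\mathscr M^*(\varsigma_*)$, which is \eqref{Lemma:Growth:3}. I expect the nullity of $Z_n$ to be the only genuine obstacle: the set-theoretic iteration reduces \eqref{Lemma:Growth:3} to controlling the tail $\bigcap_{i\ge n}\mathscr W_i^T$ of ``times with windowed concentration at every dyadic scale'', and dispatching that tail must use the dynamics (the dissipation identity of Lemma \ref{lemma:Concave}), not measure-theoretic bookkeeping alone.
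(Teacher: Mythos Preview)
Your set--theoretic reduction for \eqref{Lemma:Growth:3} is correct and matches the paper's: the inclusion $\mathscr W_n^T\subseteq\mathfrak S_{n+1}^T$ (this is \eqref{Lemma:Growth1:2}) gives by iteration $\mathfrak S_n^T\subseteq\mathcal E_n\cup Z_n$, and \eqref{Lemma:Growth2:1} handles $\mathcal E_n$. The paper likewise reduces to showing the tail set $\bigcap_{i\ge n}\mathfrak S_i^T$ (which contains your $Z_n$) is empty.

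The gap is your argument that $|Z_n|=0$. Neither mechanism you propose yields a contradiction. The dissipation integrand in \eqref{lemma:Concave:1} carries the factor $(\omega_{Mid}-\omega_{Min})^2$, which is \emph{small} when all the mass sits in one short window, so windowed concentration near the origin does not make the dissipation diverge; this is exactly why the paper needs the domain--decomposition machinery of Proposition~\ref{Propo:Collision} rather than \eqref{lemma:Concave:1} alone. And for $t<T_0$ one indeed has $\int_{[0,\mathfrak L_{i+1})}\mathrm d\omega\,F(t)\to0$, but $t\in\mathscr W_i^T$ only asserts this integral is $\ge\mathscr C_F\mathfrak L_{i+1}^\varsigma$, which also tends to $0$ --- no contradiction. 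Separately, your rationale $\varsigma_*<c_{ini}$ ``so that $\mathfrak S_n^T$ is not forced to be all of $[0,T]$'' is backwards: in the application (Proposition~\ref{Propo:FiniteTimeCondensation}) one \emph{wants} $\mathfrak S_{N_0}^T=[0,T]$, and then \eqref{Lemma:Growth:3} bounds $T$.

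The paper's actual argument for this step is substantially heavier and uses the supersolution machinery, not the dissipation bound. Assuming the tail set is nonempty for every small $\varsigma$, one picks $t(\varsigma)$ in it, so $\int_{[0,\mathfrak L_i)}\mathrm d\omega\,F(t(\varsigma))\ge\mathscr C_F\mathfrak L_i^\varsigma$ for all $i$; the monotonicity estimate with test function $(\mathfrak L_i-\omega)_+$ propagates this to \emph{all later times}. For each such $t$ one then shows there is an unbounded set $A(t)$ of scales $j$ at which a definite fraction of the mass lies in the annulus $[\mathfrak L_{j+1},\mathfrak L_j)$ (otherwise summing would put positive mass at $\{0\}$, contradicting $t<T_0$). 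A pigeonhole argument extracts infinitely many scales $q$ for which the time set $A_q=\{t:q\in A(t)\}$ has measure $\gtrsim 2^{-q\varkappa}$ with $\varkappa=-\varepsilon+2/\alpha+\beta$. Since at those times the mass sits in the upper half of $[0,\mathfrak L_q)$, one re--runs the supersolution construction of Lemma~\ref{Lemma:SuperSolu} and the exponential--growth estimate from the proof of \eqref{Lemma:Growth1:4}; choosing $\varsigma,\varkappa$ small enough that $3\varkappa+2\varsigma<\varepsilon-\beta$ makes the analogue of \eqref{Lemma:Growth1:E11} hold, and sending $q\to\infty$ yields the contradiction as in \eqref{Lemma:Growth1:E25}. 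The hypothesis \eqref{Lemma:Growth1:5} enters here to seed the exponential inequality (cf.~\eqref{Lemma:Growth1:E22}), and the admissible $\varsigma_*$ is determined by the closing condition $3\varkappa+2\varsigma<\varepsilon-\beta$, not by comparison with $c_{ini}$.
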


\begin{proof}
	
	{\it (i) Proof of \eqref{Lemma:Growth1:1}}.

We first observe that the same argument used in   the proof of Proposition \ref{Propo:Collision} can be reiterated, with $\nu= 1-\frac{1}{2^{\varsigma}}$, with $\varsigma$ as in \eqref{varsigma}, yielding the same inequality as  \eqref{Propo:Collision:34}
	\begin{equation}\label{Lemma:Growth1:E1}
	\begin{aligned}
		\frac{\mathfrak{C}_0\mathfrak{L}_n^{2+\frac2\alpha}(	\mathscr M +\mathscr E)}{\mathfrak h^{2}_n\mho(\mathfrak h_n)\Big( {\nu}\Big)^4}
		\ \ge\ &\int_{0}^T\mathrm{d}t \chi_{	\mathscr{U}_{n}^T}(t) \left( \int_{[0,\mathfrak{L}_n)}\mathrm{d}\omega F(t)\right)^3,
	\end{aligned}
\end{equation}
which, in combination with \eqref{Sec:Growthlemmas:2}, implies
	\begin{equation}\label{Lemma:Growth1:E2}
	\begin{aligned}
		\frac{\mathfrak{C}_1\mathfrak{L}_n^{2+\frac2\alpha}(	\mathscr M +\mathscr E)}{\mathfrak h^{2+\beta}_n}
		\ \ge\ &\int_{0}^T\mathrm{d}t \chi_{	\mathscr{U}_{n}^T}(t) \mathfrak L_{n } 
		^{3\varsigma} \ = \ \mathscr{C}_F| {	\mathscr{U}_{n}^T}| \mathfrak L_{n} 
		^{3\varsigma},
	\end{aligned}
\end{equation}
for a universal constant $\mathfrak{C}_1>0$. Since $$(\beta+2)\Big[\varepsilon-\frac{2}{\alpha}\frac{\beta+1}{\beta+2}-\beta\frac{\beta+3}{\beta+2}-\frac{3 \varsigma}{\beta+2}\Big]>0$$ then $$\frac2\alpha-\beta-(2+\beta)(\frac2\alpha+\beta-\varepsilon)>3 \varsigma.$$
Therefore,
	\begin{equation}\label{Lemma:Growth1:E2}
	\begin{aligned}
\mathfrak{C}_1\mathfrak{L}_n^{(\beta+2)\Big[\varepsilon-\frac{2}{\alpha}\frac{\beta+1}{\beta+2}-\beta\frac{\beta+3}{\beta+2}-\frac{3 \varsigma}{\beta+2}\Big]}\ \ge\ 2^{(2+\beta)\mathfrak{M}_0}{\mathfrak{C}_1\mathfrak{L}_n^{\frac2\alpha-\beta-3\varsigma}(	\mathscr M +\mathscr E)}			\ \ge\ & | {	\mathscr{U}_{n}^T}|,
	\end{aligned}
\end{equation}
leading to the proof of \eqref{Lemma:Growth1:1}.

	{\it (ii) Proof of \eqref{Lemma:Growth1:3}}.

Next, we will prove that 
	\begin{equation}\label{Lemma:Growth1:2}
	\begin{aligned}
		\Big(\mathfrak{S}_n^T\setminus  \mathfrak{S}_{n,*}^T\Big)\bigcap 
		\mathscr{W}_{n}^T	\ = \ 	  \emptyset,
	\end{aligned}
\end{equation} whose proof is quite straightforward. If $t\in \mathscr{W}_{n}^T$, then there exists $i\in \{0,\cdots,2^{\mathfrak{M}_0-1}-2\}$ such that 
	\begin{equation*}
	\int_{
		\Xi_{i}^{\mathfrak{h}_n,\mathfrak{L}_{n}}} \mathrm{d}\omega F\left( t,\omega\right) \geq \mathscr{C}_F \mathfrak L_{n+1} 
	^\varsigma,
	\end{equation*}
yielding 
	\begin{equation}\label{Lemma:Growth1:E3}
 \int_{\left[ 0,\mathfrak{L}_{n+1}\right)
	}\mathrm{d}\omega F\left( t,\omega\right) \geq \mathscr{C}_F   \mathfrak L_{n+1} 
	^\varsigma,
\end{equation}
since $\Xi_{i}^{\mathfrak{h}_{n},\mathfrak{L}_{n}}\subset \left[ 0,\mathfrak{L}_{n+1}\right)$. 
	On the other hand, when $t\in	\Big(\mathfrak{S}_n^T\backslash  \mathfrak{S}_{n,*}^T\Big) $ we
	have:%
	\begin{equation}\label{Lemma:Growth1:E4}
		\int_{\left[ 0,\mathfrak{L}_{n}\right)
		}\mathrm{d}\omega F\left( t,\omega\right)\geq   \mathscr{C}_F \mathfrak L_{n} 
		^\varsigma\ \ ,\ \ \int_{\left[ 0,\mathfrak{L}_{n+1}\right)
		}\mathrm{d}\omega F\left( t,\omega\right)  <   \mathscr{C}_F\mathfrak L_{n+1} 
		^\varsigma.
	\end{equation}
Combining \eqref{Lemma:Growth1:E3}-\eqref{Lemma:Growth1:E4}, we obtain \eqref{Lemma:Growth1:2}. 

 From \eqref{Lemma:Growth1:2}, we deduce
 
 	\begin{equation}\label{Lemma:Growth1:E5}
 	\begin{aligned}
 		\Big(\mathfrak{S}_n^T\setminus  \mathfrak{S}_{n,*}^T\Big)\setminus 
 		\Big(\mathscr{V}_{n}^T\backslash \mathscr{W}_{n}^T\Big)	\ = \ 	  	\Big(\mathfrak{S}_n^T\setminus  \mathfrak{S}_{n,*}^T\Big)\setminus 
 		\bigcup_{i=0}^{2^\mathfrak{M}_0-1}\mathfrak{S}^T_{n,i} \ \subset  \mathfrak{S}_n^T  \setminus 
 		\bigcup_{i=0}^{2^\mathfrak{M}_0-1}\mathfrak{S}^T_{n,i} \ = \ \mathscr{U}_{n}^T,
 	\end{aligned}
 \end{equation}
 which, in comparison with \eqref{Lemma:Growth1:E2}, implies
 	\begin{equation}\label{Lemma:Growth1:E6}
 	\begin{aligned}
 	\Big|	\Big(\mathfrak{S}_n^T\setminus  \mathfrak{S}_{n,*}^T\Big)\setminus 
 		\Big(\mathscr{V}_{n}^T\backslash \mathscr{W}_{n}^T\Big)\Big| \ \le \ & | {	\mathscr{U}_{n}^T}|   		   \ \le \ &  \mathfrak{C}_1\mathfrak{L}_n^{(\beta+2)\Big[\varepsilon-\frac{2}{\alpha}\frac{\beta+1}{\beta+2}-\beta\frac{\beta+3}{\beta+2}-\frac{3 \varsigma}{\beta+2}\Big]},
	\end{aligned}
 \end{equation}
leading to \eqref{Lemma:Growth1:3}. 

	{\it (iii) Proof of \eqref{Lemma:Growth1:4}}.

We will now prove \eqref{Lemma:Growth1:4}. To this end, we first consider the case that  there exists $\mathscr{M}_*$ such that $\forall n>\mathscr{M}_*$	\begin{equation}	\label{Lemma:Growth1:E7}
	\int_{0}^{{T}} \mathrm d t  \left[\int_{\mathbb{R}_+ }\mathrm d\omega
	\Big[ F(t,\omega)\chi_{\omega\in \Xi_{\mathfrak{i}\left( t\right)}^{\mathfrak{h}_n,\mathfrak{L}_{n}}}\chi_{\mathscr{V}_{n}^{T}}(t)\Big]\right]
	^{2}  \le \Big[{{C_1} C_\omega^{\frac2\alpha} }\Big]^{-1}\frac{\mathfrak{L}_{n}^{\varepsilon-\beta}}{4000\mho(\mathfrak{L}_n/2)},
\end{equation}
which, in combination with \eqref{Sec:Growthlemmas:3}, \eqref{Sec:Growthlemmas:5}, implies
\begin{equation}	\label{Lemma:Growth1:E7:aa}
	\int_{0}^{{T}} \mathrm d t  \chi_{\mathscr{V}_{n}^T}\Big(\Big| \mathfrak L_{n+1} 
	^{\varsigma}\Big|\Big)	^{2}  \le C' {\mathfrak{L}_{n}^{\varepsilon-2\beta}},
\end{equation}
for some constant $C'>0$, then
	\begin{equation}	\label{Lemma:Growth1:E8}
		 \mathscr{C}_F\Big| 
	\mathscr{V}_{n}^T\Big| \mathfrak L_{n+1} 
	^{2\varsigma}\ \le\ {\mathfrak{L}_{n}^{\varepsilon-2\beta}}C',
\end{equation}
yielding
	\begin{equation}\label{Lemma:Growth1:E9}
	\begin{aligned}
	\Big| 
	\mathscr{V}_{n}^T\Big| \ \le \ C' {\mathfrak{L}_{n}^{\varepsilon-2\beta-2\varsigma}}2^{2\varsigma} \mathscr{C}_F^{-1}.
	\end{aligned}
\end{equation}
We then obtain \eqref{Lemma:Growth1:4}.   We now consider the opposite case of \eqref{Lemma:Growth1:E7}, that there exists an infinite set $\mathcal{S}_*=\{n_1,n_2,\cdots\}$ and $\forall n_j\in \mathcal{S}_*$
		\begin{equation}	\label{Lemma:Growth1:E10}
		\int_{0}^{{T}} \mathrm d t  \left[\int_{\mathbb{R}_+ }\mathrm d\omega
		\Big[ F(t,\omega)\chi_{\omega\in \Xi_{\mathfrak{i}\left( t\right)}^{\mathfrak{h}_{n_j},\mathfrak{L}_{n_j}}}\chi_{\mathscr{V}_{n_j}^{T}}(t)\Big]\right]
		^{2}  > \Big[{{C_1}\ C_\omega^{\frac2\alpha} }\Big]^{-1}\frac{\mathfrak{L}_{n_j}^{\varepsilon-\beta}}{4000\mho(\mathfrak{L}_{n_j}/2)}.
	\end{equation}
There must  exist a time  $\bar T_{n_j}=T_{\mathfrak{h}_{n_j},\mathfrak{L}_{{n_j}}}\in\left[ 0, T
	\right]$ such that: 
	\begin{equation}	\label{Lemma:Growth1:E11}
	\int_{0}^{{\bar T}_{n_j}} \mathrm d t  \left[\int_{\mathbb{R}_+ }\mathrm d\omega
	\Big[ \mathcal G_{\mathfrak{h}_{n_j},\mathfrak{L}_{{n_j}}}(t,\omega)\Big]\right]
	^{2}  = \Big[{{C_1}C_\omega^{\frac2\alpha} }\Big]^{-1}\frac{\mathfrak{L}_{{n_j}}^{\varepsilon-\beta}}{4000\mho(\mathfrak{L}_{n_j}/2)}.
\end{equation}
	Using $\rho_{super}(t,\omega)$ constructed in Lemma  \ref{Lemma:SuperSolu} as a test function, with $\bar T_{n_j}=T_{\mathfrak{h}_{n_j},\mathfrak{L}_{{n_j}}}$, we obtain: 	\begin{equation}	\label{Lemma:Growth1:E12}
	\begin{aligned}
	&	\partial_{t}\left( \int_{\mathbb{R}^{+}}\mathrm{d}\omega F\rho_{super}\right)\  =\
		\int_{\mathbb{R}^{+}}\mathrm{d}\omega F\partial_{t}\rho_{super} \\
		 & + \  C_1\iiint_{\mathbb{R}_+^{3}}\mathrm{d}\omega_1\,\mathrm{d}\omega_2\,\mathrm{d}\omega f_1f_2f\mathbf{1}_{\omega+\omega_1-\omega_2\ge 0}\\
		&\times 	 \Big\{ [-\rho_{super}(\omega_{Max})-\rho_{super}(\omega_{Min})+\rho_{super}(\omega_{Mid})+\rho_{super}(\omega_{Max}+\omega_{Min}-\omega_{Mid})]\\
		&\ \ \ \ \times\mho(\omega_{Max})\mho(\omega_{Min})\mho(\omega_{Mid})\mho(\omega_{Max}+\omega_{Min}-\omega_{Mid})|k_{Min}|  \\
		&+[-\rho_{super}(\omega_{Max})-\rho_{super}(\omega_{Mid})+\rho_{super}(\omega_{Min})+\rho_{super}(\omega_{Max}+\omega_{Mid}-\omega_{Min})]\\
		&\ \ \ \ \times\mho(\omega_{Max})\mho(\omega_{Min})\mho(\omega_{Mid})\mho(\omega_{Max}+\omega_{Mid}-\omega_{Min})|k_{Min}|  \\
		&+[-\rho_{super}(\omega_{Min})-\rho_{super}(\omega_{Mid})+\rho_{super}(\omega_{Max})+\rho_{super}(\omega_{Min}+\omega_{Mid}-\omega_{Max})]\\
		&\ \ \ \ \times\Xi(\omega_{Max},\omega_{Min},\omega_{Mid},\omega_{Min}+\omega_{Mid}-\omega_{Max})\Big\},
	\end{aligned}\end{equation}
where we have used \eqref{Lemma:Concave:E5}. Next, we will prove that 
	\begin{equation}	\label{Lemma:Growth1:E12:a}
	\begin{aligned}
		& [-\rho_{super}(\omega_{Max})-\rho_{super}(\omega_{Min})+\rho_{super}(\omega_{Mid})+\rho_{super}(\omega_{Max}+\omega_{Min}-\omega_{Mid})]\\
		&\ \ \ \ \times\mho(\omega_{Max})\mho(\omega_{Min})\mho(\omega_{Mid})\mho(\omega_{Max}+\omega_{Min}-\omega_{Mid})|k_{Min}|  \\
		&+[-\rho_{super}(\omega_{Max})-\rho_{super}(\omega_{Mid})+\rho_{super}(\omega_{Min})+\rho_{super}(\omega_{Max}+\omega_{Mid}-\omega_{Min})]\\
		&\ \ \ \ \times\mho(\omega_{Max})\mho(\omega_{Min})\mho(\omega_{Mid})\mho(\omega_{Max}+\omega_{Mid}-\omega_{Min})|k_{Min}|  \\
		&+[-\rho_{super}(\omega_{Min})-\rho_{super}(\omega_{Mid})+\rho_{super}(\omega_{Max})+\rho_{super}(\omega_{Min}+\omega_{Mid}-\omega_{Max})]\\
		&\ \ \ \ \times\Xi(\omega_{Max},\omega_{Min},\omega_{Mid},\omega_{Min}+\omega_{Mid}-\omega_{Max}) \ \ge \ 0.
\end{aligned}\end{equation}

To prove \eqref{Lemma:Growth1:E12:a},   we note  that \begin{equation}\label{Lemma:Growth1:E13}-\rho_{super}(\omega_{Min})-\rho_{super}(\omega_{Mid})+\rho_{super}(\omega_{Max})+\rho_{super}(\omega_{Min}+\omega_{Mid}-\omega_{Max})\ge 0,\end{equation}
which can be easily proved using the convexity of $\rho_{super}$ (see \eqref{Lemma:Concave:E5a}).

	It follows from Lemma \ref{Lemma:SuperSolu}(c) that
\begin{equation}
	\rho_{super}(\omega_{Max}+\omega_{Min}-\omega_{Mid}) \ +\ \rho_{super}(\omega_{Max}+\omega_{Mid}-\omega_{Min}) 	 \geq 2 \rho_{super}(\omega_{Max}),  
\end{equation}
which leads to
\begin{equation}\begin{aligned}
		&	-\rho_{super}(\omega_{Max})-\rho_{super}(\omega_{Min})+\rho_{super}(\omega_{Mid})+\rho_{super}(\omega_{Max}+\omega_{Min}-\omega_{Mid})\\
		\geq \ & -[-\rho_{super}(\omega_{Max})-\rho_{super}(\omega_{Mid})+\rho_{super}(\omega_{Min})+\rho_{super}(\omega_{Max}+\omega_{Mid}-\omega_{Min})].  \end{aligned}
\end{equation}
Next, we multiply by $\mho(\omega_{Max}+\omega_{Min}-\omega_{Mid})$  both sides of the inequality and obtain \begin{equation}\begin{aligned}
		&	[-\rho_{super}(\omega_{Max})-\rho_{super}(\omega_{Min})+\rho_{super}(\omega_{Mid})+\rho_{super}(\omega_{Max}+\omega_{Min}-\omega_{Mid})]\\
		&\times\mho(\omega_{Max}+\omega_{Min}-\omega_{Mid})\\
		\geq \ & -[-\rho_{super}(\omega_{Max})-\rho_{super}(\omega_{Mid})+\rho_{super}(\omega_{Min})+\rho_{super}(\omega_{Max}+\omega_{Mid}-\omega_{Min})]\\
		&\times\mho(\omega_{Max}+\omega_{Min}-\omega_{Mid}),  \end{aligned}
\end{equation}
which implies 
\begin{equation}\label{Lemma:Growth1:E14a}\begin{aligned}
		&	[-\rho_{super}(\omega_{Max})-\rho_{super}(\omega_{Min})+\rho_{super}(\omega_{Mid})+\rho_{super}(\omega_{Max}+\omega_{Min}-\omega_{Mid})]\\
		&\times\mho(\omega_{Max}+\omega_{Min}-\omega_{Mid})
		\\
		& +  [-\rho_{super}(\omega_{Max})-\rho_{super}(\omega_{Mid})+\rho_{super}(\omega_{Min})+\rho_{super}(\omega_{Max}+\omega_{Mid}-\omega_{Min})]\\
		&\times\mho(\omega_{Max}+\omega_{Mid}-\omega_{Min})\\
		\geq \ & [-\rho_{super}(\omega_{Max})-\rho_{super}(\omega_{Mid})+\rho_{super}(\omega_{Min})+\rho_{super}(\omega_{Max}+\omega_{Mid}-\omega_{Min})]\\
		&\times[\mho(\omega_{Max}+\omega_{Mid}-\omega_{Min})-\mho(\omega_{Max}+\omega_{Min}-\omega_{Mid})]\\
		= \ & \Big[\int_{0}^{\omega_{Mid}-\omega_{Min}}\mathrm d\xi_1\rho'_{super}(\omega_{Max}+\xi_1)-\int_{0}^{\omega_{Mid}-\omega_{Min}}\mathrm d\xi_1\rho'_{super}(\omega_{Min}+\xi_1)\Big]\\
		&\times[\mho(\omega_{Max}+\omega_{Mid}-\omega_{Min})-\mho(\omega_{Max}+\omega_{Min}-\omega_{Mid})]\ \ge\ 0. \end{aligned}
\end{equation}
Therefore \eqref{Lemma:Growth1:E12:a} is proved.


Restricting to the case $\omega_2\le \omega_1\le\omega$, we deduce from \eqref{Lemma:Growth1:E12:a}  

 that
\begin{equation}	\label{Lemma:Growth1:E15}
	\begin{aligned}
		&	\partial_{t}\left( \int_{\mathbb{R}^{+}}\mathrm{d}\omega F\rho_{super}\right)\\
		\  \ge \ &
		\int_{\mathbb{R}^{+}}\mathrm{d}\omega F\partial_{t}\rho_{super} \
	 + \  C_1\iiint_{\mathbb{R}_+^{3}}\mathrm{d}\omega_1\,\mathrm{d}\omega_2\,\mathrm{d}\omega f_1f_2f\mathbf{1}_{\omega_2\le \omega_1\le\omega}\\
		&\times 	 \Big\{[-\rho_{super}(\omega_2)-\rho_{super}(\omega_1)+\rho_{super}(\omega)+\rho_{super}(\omega_2+\omega_1-\omega)]\\
		&\times\mho(\omega)\mho(\omega_{2})\mho(\omega_{1})\mho(\omega_2+\omega_1-\omega)\min\{|k_0|,|k|\}\\ &+[-\rho_{super}(\omega)-\rho_{super}(\omega_{2})+\rho_{super}(\omega_{1})+\rho_{super}(\omega+\omega_{2}-\omega_{1})]\\
		&\ \ \ \ \times\mho(\omega)\mho(\omega_{2})\mho(\omega_{1})\mho(\omega+\omega_{2}-\omega_{1})|k_{2}|  \\
		&+[-\rho_{super}(\omega)-\rho_{super}(\omega_{1})+\rho_{super}(\omega_{2})+\rho_{super}(\omega+\omega_{1}-\omega_{2})]\\
		&\ \ \ \ \times\mho(\omega)\mho(\omega_{2})\mho(\omega_{1})\mho(\omega+\omega_{1}-\omega_{2})|k_{2}|\Big\}\\
			\  \ge \	&
		\int_{\mathbb{R}^{+}}\mathrm{d}\omega F\partial_{t}\rho_{super} \
		+ \  C_1\iiint_{\mathbb{R}_+^{3}}\mathrm{d}\omega_1\,\mathrm{d}\omega_2\,\mathrm{d}\omega \frac{F(\omega)}{|k|\mho(\omega)}\frac{F(\omega_2)}{|k_2|\mho(\omega_2)}\frac{F(\omega_1)}{|k_1|\mho(\omega_1)}\mathbf{1}_{\omega_2\le \omega_1\le\omega}\\
			&\times 	 \Big\{[-\rho_{super}(\omega_2)-\rho_{super}(\omega_1)+\rho_{super}(\omega)+\rho_{super}(\omega_2+\omega_1-\omega)]\\
	&\times\mho(\omega)\mho(\omega_{2})\mho(\omega_{1})\mho(\omega_2+\omega_1-\omega)\min\{|k_0|,|k|\}\\
		& +[-\rho_{super}(\omega)-\rho_{super}(\omega_{2})+\rho_{super}(\omega_{1})+\rho_{super}(\omega+\omega_{2}-\omega_{1})]\\
		&\ \ \ \ \times\mho(\omega)\mho(\omega_{2})\mho(\omega_{1})\mho(\omega+\omega_{2}-\omega_{1})|k_{2}|  \\
		&+[-\rho_{super}(\omega)-\rho_{super}(\omega_{1})+\rho_{super}(\omega_{2})+\rho_{super}(\omega+\omega_{1}-\omega_{2})]\\
		&\ \ \ \ \times\mho(\omega)\mho(\omega_{2})\mho(\omega_{1})\mho(\omega+\omega_{1}-\omega_{2})|k_{2}|\Big\},
\end{aligned}\end{equation}
which can be simplified as
\begin{equation}	\label{Lemma:Growth1:E16}
	\begin{aligned}
		&	\partial_{t}\left( \int_{\mathbb{R}^{+}}\mathrm{d}\omega F\rho_{super}\right)\\
		\  \ge \	&
		\int_{\mathbb{R}^{+}}\mathrm{d}\omega F\partial_{t}\rho_{super} \
		+ \  C_1\iiint_{\mathbb{R}_+^{3}}\mathrm{d}\omega_1\,\mathrm{d}\omega_2\,\mathrm{d}\omega \frac{FF_1F_2}{|k||k_1|}\mathbf{1}_{\omega_2\le \omega_1\le\omega}\\
			&\times 	 \Big\{[-\rho_{super}(\omega_2)-\rho_{super}(\omega_1)+\rho_{super}(\omega)+\rho_{super}(\omega_2+\omega_1-\omega)]
		\\	&\times \mho(\omega_2+\omega_1-\omega)\min\{|k_0|,|k|\}/|k_2|\\
		& +[-\rho_{super}(\omega)-\rho_{super}(\omega_{2})+\rho_{super}(\omega_{1})+\rho_{super}(\omega+\omega_{2}-\omega_{1})] \mho(\omega+\omega_{2}-\omega_{1})   \\
		&+[-\rho_{super}(\omega)-\rho_{super}(\omega_{1})+\rho_{super}(\omega_{2})+\rho_{super}(\omega+\omega_{1}-\omega_{2})] \mho(\omega+\omega_{1}-\omega_{2})\Big\},
\end{aligned}\end{equation}
where $F=F(\omega), F_1=F(\omega_1), F_2=F(\omega_2)$. 
Using \eqref{Lemma:Growth1:E13}, we bound
\begin{equation}	\label{Lemma:Growth1:E17}
	\begin{aligned}
		&	\partial_{t}\left( \int_{\mathbb{R}^{+}}\mathrm{d}\omega F\rho_{super}\right)\\
		\  \ge \	&
		\int_{\mathbb{R}^{+}}\mathrm{d}\omega F\partial_{t}\rho_{super} \
		+ \  C_1C_\omega^{\frac2\alpha}\iiint_{\mathbb{R}_+^{3}}\mathrm{d}\omega_1\,\mathrm{d}\omega_2\,\mathrm{d}\omega \frac{FF_1F_2}{\mathfrak L_{n_j}^{\frac2\alpha}}\chi_{\left\{ \omega_{2}\leq\frac{\mathfrak L_{{n_j}} }{10}\right\}}  \chi_{\left\{\omega_1\le\omega;\omega_1,\omega \in \Xi_{\mathfrak{i}\left( t\right)}^{\mathfrak{h}_{n_j},\mathfrak{L}_{{n_j}}}\right\}}\chi_{\mathscr{V}_{{n_j}}^{\bar T_{n_j}}}(t) \\
		&\times 	 \Big\{ [-\rho_{super}(\omega)-\rho_{super}(\omega_{2})+\rho_{super}(\omega_{1})+\rho_{super}(\omega+\omega_{2}-\omega_{1})] \mho(\omega+\omega_{2}-\omega_{1})   \\
		&+[-\rho_{super}(\omega)-\rho_{super}(\omega_{1})+\rho_{super}(\omega_{2})+\rho_{super}(\omega+\omega_{1}-\omega_{2})] \mho(\omega+\omega_{1}-\omega_{2})\Big\}.
\end{aligned}\end{equation} 
By \eqref{Sec:Growthlemmas:5}, then for $\omega_1,\omega \in \Xi_{\mathfrak{i}\left( t\right)}^{\mathfrak{h}_{n_j},\mathfrak{L}_{{n_j}}}$, we have $\omega_1,\omega>\frac{\mathfrak L_{{n_j}} }{10}$. Therefore $\rho_{super}(\omega)=\rho_{super}(\omega_1)=\rho_{super}(\omega+\omega_{1}-\omega_{2})=0$, we deduce 
\begin{equation}	\label{Lemma:Growth1:E18}
	\begin{aligned}
		&	\partial_{t}\left( \int_{\mathbb{R}^{+}}\mathrm{d}\omega F\rho_{super}\right)\\
		\  \ge \	&
		\int_{\mathbb{R}^{+}}\mathrm{d}\omega F\partial_{t}\rho_{super} \
		+ \  C_1C_\omega^{\frac2\alpha}\iiint_{\mathbb{R}_+^{3}}\mathrm{d}\omega_1\,\mathrm{d}\omega_2\,\mathrm{d}\omega \frac{FF_1F_2}{\mathfrak L_{n_j}^{\frac2\alpha}}\chi_{\left\{ \omega_{2}\leq\frac{\mathfrak L_{{n_j}} }{10}\right\}}  \chi_{\left\{\omega_1\le\omega;\omega_1,\omega \in \Xi_{\mathfrak{i}\left( t\right)}^{\mathfrak{h}_{n_j},\mathfrak{L}_{n_j}}\right\}}\chi_{\mathscr{V}_{{n_j}}^{\bar T_{n_j}}}(t) \\
		&\times 	 \Big\{ [-\rho_{super}(\omega_{2})+\rho_{super}(\omega+\omega_{2}-\omega_{1})] \mho(\omega+\omega_{2}-\omega_{1})  \ + \ \rho_{super}(\omega_{2})  \mho(\omega+\omega_{1}-\omega_{2})\Big\}.
\end{aligned}\end{equation}

Observing that $\omega+\omega_{1}-\omega_{2}\ge \mathfrak{L}_{n_j}/2$ and $0\le\omega+\omega_{2}-\omega_{1}\le \omega_2+3\mathfrak{h}_{n_j}\le \mathfrak{L}_{n_j}/10+3\mathfrak{h}_{n_j}\le \mathfrak{L}_{n_j}/2$, we bound
\begin{equation}	\label{Lemma:Growth1:E18:1}
	\begin{aligned}& [-\rho_{super}(\omega_{2})+\rho_{super}(\omega+\omega_{2}-\omega_{1})] \mho(\omega+\omega_{2}-\omega_{1})  \ + \ \rho_{super}(\omega_{2})  \mho(\omega+\omega_{1}-\omega_{2})\\
			\ge\ &	[-\rho_{super}(\omega_{2})+\rho_{super}(\omega+\omega_{2}-\omega_{1})] \mho(\mathfrak{L}_{n_j}/2) \ + \ \rho_{super}(\omega_{2})  \mho(\mathfrak{L}_{n_j}/2),
\end{aligned}\end{equation}
yielding
\begin{equation}	\label{Lemma:Growth1:E18:2}
	\begin{aligned}
		&	\partial_{t}\left( \int_{\mathbb{R}^{+}}\mathrm{d}\omega F\rho_{super}\right) \\
		\  \ge \	&
		\int_{\mathbb{R}^{+}}\mathrm{d}\omega F\partial_{t}\rho_{super} \
		+ \  C_1C_\omega^{\frac2\alpha}\iiint_{\mathbb{R}_+^{3}}\mathrm{d}\omega_1\,\mathrm{d}\omega_2\,\mathrm{d}\omega \frac{FF_1F_2}{\mathfrak L_{n_j}^{\frac2\alpha}} \chi_{\left\{\omega_1\le\omega;\omega_1,\omega \in \Xi_{\mathfrak{i}\left( t\right)}^{\mathfrak{h}_{n_j},\mathfrak{L}_{{n_j}}}\right\}}\chi_{\mathscr{V}_{{n_j}}^{\bar T_{n_j}}}(t) \\
		&\times 	\chi_{\left\{ \omega_{2}\leq\frac{\mathfrak L_{{n_j}} }{10}\right\}}    \Big\{ [-\rho_{super}(\omega_{2})+\rho_{super}(\omega+\omega_{2}-\omega_{1})] \mho(\mathfrak{L}_{n_j}/2) \ + \rho_{super}(\omega_{2}) \mho(\mathfrak{L}_{n_j}/2)\Big\}\\
		\  \ge \	&
		\int_{\mathbb{R}^{+}}\mathrm{d}\omega F\partial_{t}\rho_{super} \
		+ \  C_1C_\omega^{\frac2\alpha}\iiint_{\mathbb{R}_+^{3}}\mathrm{d}\omega_1\,\mathrm{d}\omega_2\,\mathrm{d}\omega \frac{FF_1F_2\mho(\mathfrak{L}_{n_j}/2)}{\mathfrak L_{n_j}^{\frac2\alpha}}\chi_{\left\{ \omega_{2}\leq\frac{\mathfrak L_{{n_j}} }{10}\right\}} \\
		&\times 	   \chi_{\left\{\omega_1\le\omega;\omega_1,\omega \in \Xi_{\mathfrak{i}\left( t\right)}^{\mathfrak{h}_{n_j},\mathfrak{L}_{{n_j}}}\right\}}\chi_{\mathscr{V}_{{n_j}}^{\bar T_{n_j}}}(t)  \Big\{ [-\rho_{super}(\omega_{2})+\rho_{super}(\omega+\omega_{2}-\omega_{1})]   \Big\}\\
		& +\   C_1C_\omega^{\frac2\alpha}\iiint_{\mathbb{R}_+^{3}}\mathrm{d}\omega_1\,\mathrm{d}\omega_2\,\mathrm{d}\omega \frac{FF_1F_2}{\mathfrak L_{n_j}^{\frac2\alpha}}\chi_{\left\{ \omega_{2}\leq\frac{\mathfrak L_{{n_j}} }{10}\right\}}  \chi_{\left\{\omega_1\le\omega;\omega_1,\omega \in \Xi_{\mathfrak{i}\left( t\right)}^{\mathfrak{h}_{n_j},\mathfrak{L}_{{n_j}}}\right\}}\chi_{\mathscr{V}_{{n_j}}^{\bar T_{n_j}}}(t)  \rho_{super}(\omega_{2}) \mho(\mathfrak{L}_{n_j}/2).
\end{aligned}\end{equation}

Using the definition of $\rho_{super}$ as a supersolution, we obtain
\begin{equation}	\label{Lemma:Growth1:E20}
	\begin{aligned}
		&	\partial_{t}\left( \int_{\mathbb{R}^{+}}\mathrm{d}\omega F\rho_{super}\right)\\
		\  \ge \	&   \frac{\mathfrak C_2\mho(\mathfrak{L}_{n_j}/2)}{\mathfrak L_{n_j}^{\frac2\alpha}}\int_{\mathbb{R}_+}\mathrm{d}\omega_2\,F_2\rho_{super}(\omega_{2}) \iint_{\mathbb{R}_+^{2}}\mathrm{d}\omega_1\,\mathrm{d}\omega {FF_1} \chi_{\left\{\omega_1\le\omega;\omega_1,\omega \in \Xi_{\mathfrak{i}\left( t\right)}^{\mathfrak{h}_{n_j},\mathfrak{L}_{{n_j}}}\right\}}\chi_{\mathscr{V}_{n_j}^{\bar T_{n_j}}}(t)  ,
\end{aligned}\end{equation}
for a universal constant $\mathfrak C_2>0$ that varies from line to line. Therefore
\begin{equation}	\label{Lemma:Growth1:E21}
	\begin{aligned}
		&	\partial_{t}\left( \int_{\mathbb{R}^{+}}\mathrm{d}\omega F\rho_{super}\right)\\
		\  \ge \	&   \frac{\mathfrak C_2\mho(\mathfrak{L}_{n_j}/2)}{2\mathfrak L_{n_j}^{\frac2\alpha}}\left[\int_{\mathbb{R}_+}\mathrm{d}\omega\,F\rho_{super}(\omega)\right]\left[ \int_{\mathbb{R}_+}\mathrm{d}\omega {F} \chi_{\left\{\omega \in \Xi_{\mathfrak{i}\left( t\right)}^{\mathfrak{h}_{n_j},\mathfrak{L}_{{n_j}}}\right\}}\chi_{\mathscr{V}_{{n_j}}^{\bar T_{n_j}}}(t)\right]^2.
\end{aligned}\end{equation}
	
Let us recall that  
	$\rho_{super}(0,\omega)\ge 0.05$  for a.e.  $\omega\in\Big[0,\frac{\mathfrak{L}_{n_j}}{40}\Big]$, we bound
	\begin{equation}\label{Lemma:Growth1:E22}
		\begin{aligned}
		\int_{\mathbb{R}_+}\mathrm d\omega F(0,\omega)  \rho_{super}(0,\omega)   
	  \geq  0.05C_{ini}  {\Big(\min\Big\{\frac{\mathfrak L_{n_j}}{40},r_0/2\Big\}\Big)^{c_{ini}}}. \end{aligned}
	\end{equation}
Solving   \eqref{Lemma:Growth1:E21}, we   obtain 
	\begin{equation}\label{Lemma:Growth1:E23}
	\begin{aligned}
		& \int_{\mathbb{R}_+}\mathrm d\omega F(\bar T_{n_j},\omega)  \rho_{super}(\bar T_{n_j},\omega)\\ \ \geq \  &  0.05C_{ini}  {\Big(\min\Big\{\frac{\mathfrak L_{n_j}}{40},r_0/2\Big\}\Big)^{c_{ini} }} \exp\left\{  \frac{\mathfrak C_2\mho(\mathfrak{L}_{n_j}/2)}{2\mathfrak L_{n_j}^{\frac2\alpha}}\int_{0}^{\bar T_{n_j}}\mathrm d t \left[ \int_{\mathbb{R}_+}\mathrm{d}\omega {F} \chi_{\left\{\omega \in \Xi_{\mathfrak{i}\left( t\right)}^{\mathfrak{h}_{n_j},\mathfrak{L}_{n_j}}\right\}}\chi_{\mathscr{V}_{{n_j}}^{\bar T_{n_j}}}(t)\right]^2\right\}\\
		\ \geq \ &  0.05C_{ini}  {\Big(\min\Big\{\frac{\mathfrak L_{n_j}}{40},r_0/2\Big\}\Big)^{c_{ini} }}  \exp\left\{\frac{\mathfrak C_2\mho(\mathfrak{L}_{n_j}/2)}{2\mathfrak L_{n_j}^{\frac2\alpha}}\int_{0}^{{\bar T_{n_j}}} \mathrm d t  \left[\int_{\mathbb{R}_+ }\mathrm d\omega
		\Big[   \mathcal G_{\mathfrak{h}_{n_j},\mathfrak{L}_{{n_j}}}(t,\omega)\Big]\right]
		^{2}\right\}.
\end{aligned}
\end{equation}
Plugging \eqref{Lemma:Growth1:E11} into \eqref{Lemma:Growth1:E23}, we get 
	\begin{equation}\label{Lemma:Growth1:E24}
	\begin{aligned}
	&	\int_{\mathbb{R}_+}\mathrm d\omega F(\bar T_{n_j},\omega)  \rho_{super}(\bar T_{n_j},\omega)\\ \ \geq \ &    0.05C_{ini}  {\Big(\min\Big\{\frac{\mathfrak L_{n_j}}{40},r_0/2\Big\}\Big)^{c_{ini} }}  \exp\left\{\frac{\mathfrak C_2\mho(\mathfrak{L}_{n_j}/2)}{2\mathfrak L_{n_j}^{\frac2\alpha}}\frac{\mathfrak{L}_{{n_j}}^{\varepsilon-\beta}}{4000\mho(\mathfrak{L}_{n_j}/2)}\Big[{{C_1}C_\omega^{\frac2\alpha} }\Big]^{-1}\right\}\\
		 \ \geq \ &    0.05C_{ini}  {\Big(\min\Big\{\frac{\mathfrak L_{n_j}}{40},r_0/2\Big\}\Big)^{c_{ini} }} \exp\left\{\frac{\mathfrak C_2\mathfrak{L}_{{n_j}}^{\Big(\varepsilon-\frac2\alpha-\beta\Big)}}{8000}\Big[{{C_1}C_\omega^{\frac2\alpha} }\Big]^{-1}\right\},
	\end{aligned}
\end{equation}
 yielding 
	 	\begin{equation}\label{Lemma:Growth1:E25}
	 	\begin{aligned}
	 		\mathscr M +\mathscr E  
	 		\ \geq \ &  \lim_{j\to\infty}  0.05C_{ini}  {\Big(\min\Big\{\frac{\mathfrak L_{{n_j}}}{40},r_0/2\Big\}\Big)^{c_{ini} }}  \exp\left\{\frac{\mathfrak C_2\mathfrak{L}_{{n_j}}^{\Big(\varepsilon-\frac2\alpha-\beta\Big)}}{4000}\Big[{{C_1}C_\omega^{\frac2\alpha} }\Big]^{-1}\right\}\ = \ \infty.
	 	\end{aligned}
	 \end{equation}
We obtain a contradiction and hence \eqref{Lemma:Growth1:4} follows. 
 
 	{\it (iv) Proof of \eqref{Lemma:Growth2:1}}.
 
	Using \eqref{Lemma:Growth1:1} and \eqref{Lemma:Growth1:4}, we obtain

		\begin{equation}\label{Lemma:Growth2:E1}
		\begin{aligned}
		\Big|	
		\mathscr{U}_n^T\cup
			\mathscr{V}_{n}^T\Big|  \ < \ & {\mathfrak{C}_1\mathfrak{L}_n^{(\beta+2)\Big[\varepsilon-\frac{2}{\alpha}\frac{\beta+1}{\beta+2}-\beta\frac{\beta+3}{\beta+2}-\frac{3 \varsigma}{\beta+2}\Big]}(	\mathscr M +\mathscr E)} + \mathfrak{C}_V {\mathfrak{L}_{n}^{\varepsilon-2\beta-2\varsigma}} \le \mathfrak{C}_3\mathfrak{L}_n^{\gamma},
		\end{aligned}
	\end{equation}
where $\mathfrak C_3>0$ is a universal constant that varies from line to line. Next, we
 write
	\begin{equation*}
\left\vert\bigcup_{i=n}^\infty	\big(\mathfrak{S}_i^T\backslash
			\mathscr{W}_{i}^T\big) \right\vert =		\left\vert \bigcup_{i=n}^\infty	\big(\mathscr{U}_n^T\cup
			\mathscr{V}_{n}^T\big)\right\vert =\sum_{i=n}^{\infty}	\Big|	
		\mathscr{U}_n^T\cup
			\mathscr{V}_{n}^T\Big| \leq \mathfrak{C}_3\sum_{i=n}^{\infty
		}\mathfrak{L}_i^{\gamma}=\frac{\mathfrak{C}_3}{1-2^{-\gamma}}\left(
		\mathfrak{L}_n\right) ^{\gamma}. 
	\end{equation*}

	 	{\it (v) Proof of \eqref{Lemma:Growth:3}}.

	Now, we will prove the existence of $\varsigma_*>0$ and $\mathscr{M}^*(\varsigma_*)>0$ such that for $n>\mathscr{M}^*(\varsigma_*)$  \begin{equation}\label{Lemma:Growth2:E1:2}\mathfrak{S}_n^T\subset\bigcup_{i=n}^\infty	\big(\mathfrak{S}_i^T\backslash
			\mathscr{W}_{i}^T\big).\end{equation}
		By \eqref{Lemma:Growth1:2}, it follows
		$$\bigcup_{i=n}^\infty\big(\mathfrak{S}_i^T\backslash\mathfrak{S}_{i+1}^T\big)\subset \bigcup_{i=n}^\infty	\big(\mathfrak{S}_i^T\backslash
			\mathscr{W}_{i}^T\big).$$ To prove \eqref{Lemma:Growth2:E1:2}, we  will show the existence of   $\varsigma_*>0$ and $\mathscr{M}^*(\varsigma_*)>0$ such that for $n>\mathscr{M}^*(\varsigma_*)$
			 \begin{equation}\label{Lemma:Growth2:E1:3}\mathfrak{S}_n^T\backslash \Big(\bigcup_{i=n}^\infty\big(\mathfrak{S}_i^T\backslash\mathfrak{S}_{i+1}^T\big)\Big)=\emptyset.\end{equation}

			 We perform a proof by contradiction. Suppose that for all $\varsigma>0$ there is a sequence $m(\varsigma)=m_0(\varsigma)< m_1(\varsigma)<\cdots$ such that 
			 $ \mathfrak{S}_{m_j(\varsigma)}^T\backslash \Big(\bigcup_{i=m_j(\varsigma)}^\infty\big(\mathfrak{S}_i^T\backslash\mathfrak{S}_{i+1}^T\big)\Big)\ne \emptyset$ for $j\in\{0,1,2,\cdots\}$. Choosing $t(\varsigma)\in \mathfrak{S}_{m(\varsigma)}^T\backslash \Big(\bigcup_{i=m(\varsigma)}^\infty\big(\mathfrak{S}_i^T\backslash\mathfrak{S}_{i+1}^T\big)\Big)\subset [0,T]$  we will prove that  \begin{equation}\label{Lemma:Growth2:E1:3}\int_{\left[ 0,\mathfrak{L}_{i}\right)
	}\mathrm{d}\omega F\left( t(\varsigma),\omega\right)  \geq \mathscr{C}_F|\mathfrak{L}_{i}|^\varsigma,\end{equation}
	for all $i\ge m(\varsigma)$. Suppose the contrary, that there exists $i_0\ge m(\varsigma)$ such that \eqref{Lemma:Growth2:E1:3} does not hold. Since \eqref{Lemma:Growth2:E1:3} holds for $i=m(\varsigma)$, we can suppose that \eqref{Lemma:Growth2:E1:3}  holds for $ i, i+1,\cdots, i_0-1$ and \ does not hold for $ i_0$. Hence, $t(\varsigma)\in \big(\mathfrak{S}_{i_0-1}^T\backslash\mathfrak{S}_{i_0}^T\big)$, yielding a contradiction. Therefore \eqref{Lemma:Growth2:E1:3} holds for all $i\ge m(\varsigma)$. 
	
	The same argument with \eqref{Lemma:GrowthFinal:E3} gives
	 \begin{equation}\label{Lemma:Growth2:E1:4}\int_{\left[ 0,2\mathfrak{L}_{i}\right)
	}\mathrm{d}\omega F\left( t,\omega\right)  \geq \mathscr{C}_F|\mathfrak{L}_{i}|^\varsigma/2,\forall t>t(\varsigma),\end{equation}
		for all $i\ge m(\varsigma)$.
Since $0<t(\varsigma)\le T$, we can set $$t_*=\sup_{\varsigma\in\Big(0,\frac12\min\Big\{\frac{\beta+2}{3}\Big[\varepsilon-\frac{2}{\alpha}\frac{\beta+1}{\beta+2}-\beta\frac{\beta+3}{\beta+2}\Big],\frac{\varepsilon-2\beta}{3}\Big\}\Big)}t(\varsigma),$$ then $0\le t_*\le T<T_0.$ Moreover,
 \begin{equation}\label{Lemma:Growth2:E1:5}\int_{\left[ 0,2\mathfrak{L}_{i}\right)
	}\mathrm{d}\omega F\left( t,\omega\right)  \geq \mathscr{C}_F|2\mathfrak{L}_{i}|^\varsigma/(2^{\varsigma+1}),\end{equation}
$\forall t>t_*, \forall	i\ge m(\varsigma), \forall \varsigma\in\Big(0,\frac12\min\Big\{\frac{\beta+2}{3}\Big[\varepsilon-\frac{2}{\alpha}\frac{\beta+1}{\beta+2}-\beta\frac{\beta+3}{\beta+2}\Big],\frac{\varepsilon-2\beta}{3}\Big\}\Big)$.
We choose $ \varepsilon$ such that $ -\varepsilon+2/\alpha+\beta=\varkappa$  is sufficiently small.
 We fix a sufficiently small $\varsigma$, $\exists m^*= m^*(\varsigma)$ such that 
\begin{equation}\label{Lemma:Growth2:E1:7}\int_{\left[ 0,\mathfrak{L}_{j}\right)
	}\mathrm{d}\omega F\left( t,\omega\right)\ > \ \mathscr{C}_F'|\mathfrak{L}_{j}|^\varsigma.\end{equation}
 $\forall  j> m^*(\varsigma)$ and for all  $t_*<t<T_0$, $\mathscr{C}_F'$ is a positive constant independent of $j,t$.

Given $t_*<T_1<T_0$, $j> m^*(\varsigma)$ and a time $t_*<t<T_1$, we will prove the existence of  an unbounded set $A(t)$ such that for all $j\in A(t)$, 
\begin{equation}\label{Lemma:Growth2:E1:8}\int_{\left[\mathfrak{L}_{j+1},\mathfrak{L}_{j}\right)
	}\mathrm{d}\omega F\left(t,\omega\right)\ > \ \frac{2^\varsigma-1}{2^\varsigma}\frac{\mathscr{C}_F'}{2}|\mathfrak{L}_{j}|^\varsigma,\end{equation}
	 while for $j\notin A(t)$, \eqref{Lemma:Growth2:E1:8} does not hold.
	We suppose there exists $M_0>0$ such that for all $j\ge M_0$,
	 \begin{equation}\label{Lemma:Growth2:E1:9}\int_{\left[\mathfrak{L}_{j+1},\mathfrak{L}_{j}\right)
	}\mathrm{d}\omega F\left( t,\omega\right)\ \le \  \frac{2^\varsigma-1}{2^\varsigma} \frac{\mathscr{C}_F'}{2}|\mathfrak{L}_{j}|^\varsigma.\end{equation}
	We deduce 
	 \begin{equation}\label{Lemma:Growth2:E1:10}\int_{\cup_{j=M_0}^\infty\left[\mathfrak{L}_{j+1},\mathfrak{L}_{j}\right)
	}\mathrm{d}\omega F\left( t,\omega\right)\ \le \ \frac{\mathscr{C}_F'}{2}\sum_{j=M_0}^\infty|\mathfrak{L}_{i}|^\varsigma=|\mathfrak{L}_{M_0}|^\varsigma\frac{\mathscr{C}_F'}{2}\frac{2^\varsigma}{2^\varsigma-1} \frac{2^\varsigma-1}{2^\varsigma}=|\mathfrak{L}_{M_0}|^\varsigma\frac{\mathscr{C}_F'}{2},\end{equation}
	yielding
 \begin{equation}\label{Lemma:Growth2:E1:11}\int_{\{0\}
	}\mathrm{d}\omega F\left( t,\omega\right)\ \ge \ |\mathfrak{L}_{M_0}|^\varsigma \mathscr{C}_F'\frac{1}{2}.\end{equation}
	This contradicts our original assumption and as thus this confirms the existence of $A(t)$. For each $n\in\mathbb{N}$, we define $A_n$ to be the set of all $t\in(t_*,T_1)$ such that $n$ belongs to $A(t)$. This implies $\cup_{n\in\mathbb{N}}A_n=(t_*,T_1)$. There must exist $q_0\in\mathbb{N}$ such that $|A_{q_0}|\ge |T_1-t_*|2^{-{q_0}\varkappa}\frac12\frac{ 2^\varkappa -1}{ 2^\varkappa}$. This can be seen by a contradiction argument. Suppose the contrary that $|A_{n}|\le |T_1-t_*|2^{-n\varkappa}\frac12\frac{ 2^\varkappa -1}{ 2^\varkappa}$ for all $n\in\mathbb{N}$, it immediately yields $|T_1-t_*|=\Big|\cup_{n\in\mathbb{N}}A_n\Big|\le \frac12\frac{ 2^\varkappa -1}{ 2^\varkappa} |T_1-t_*|\frac{ 2^\varkappa}{ 2^\varkappa -1}< |T_1-t_*|$. Therefore, the existence of $q_0$ is guaranteed. Since for each $t\in(t_*,T_1)$,  the set $A(t)$ is unbounded, the set $\{n\in A(t) | n>q_0\}$ is also unbounded. Therefore, $\cup_{n\in\mathbb{N},n>q_0}A_n=(t_*,T_1)$. Similarly as above,  there must exist $q_1\in\mathbb{N}, q_1>q_0$ such that $|A_{q_1}|\ge |T_1-t_*|2^{-{q_1}\varkappa}\frac12\frac{ 2^\varkappa -1}{ 2^\varkappa}$. By repeating this argument, we can construct an unbounded set $\{q_0,q_1,\cdots\}$ such that for all $q$ belongs to this set,  $|A_{q}|\ge |T_1-t_*|2^{-{q}\varkappa}\frac12\frac{ 2^\varkappa -1}{ 2^\varkappa}$.

	 Let $q\in \{q_0,q_1,\cdots\}$, $q$ large enough, there also exists $\Xi_{{i}}^{\mathfrak{h}_{q},\mathfrak{L}_{q}}\subset \left[\mathfrak{L}_{q+1},\mathfrak{L}_{q}\right)
$
	\begin{equation}\label{Lemma:Growth2:E1:12}\int_{\Xi_{{i}}^{\mathfrak{h}_{q},\mathfrak{L}_{q}}
	}\mathrm{d}\omega F\left(t,\omega\right)\ > \ \frac{\mathscr{C}_F'}{20}\frac{|\mathfrak{L}_{q}|^\varsigma}{\mathfrak{M}(q)}\ge \mathscr{C}_F''\mathfrak{L}_q^{\varsigma -\varepsilon+2/\alpha+\beta},\end{equation}
for some constant $\mathscr{C}_F''>0$  that varies from lines to lines and $t\in A_q$, yielding
\begin{equation}\label{Lemma:Growth2:E1:13}\int_{t_*}^{T_1}\mathrm{d}t \chi_{A_q}(t)\left[\int_{\Xi_{{i}}^{\mathfrak{h}_{q},\mathfrak{L}_{q}}
	}\mathrm{d}\omega F\left(t,\omega\right)\right]^2\ > \ \mathscr{C}_F''\mathfrak{L}_q^{2(\varsigma -\varepsilon+2/\alpha+\beta)+\varkappa}\ = \ \ \mathscr{C}_F''\mathfrak{L}_q^{3\varkappa+2\varsigma},\end{equation}
for some constant $\mathscr{C}_F''>0$  that varies from lines to lines. For sufficiently small $\varkappa, \varsigma$, 	we have $3\varkappa+2\varsigma<\varepsilon-\beta.$
Therefore, there must  exist a time  $ t_*<T_2<T_1$ such that: 
	\begin{equation}	\label{Lemma:Growth2:E1:14}
	\int_{t_*}^{T_2}\mathrm{d}t \chi_{A_q}(t)\left[\int_{\Xi_{{i}}^{\mathfrak{h}_{q},\mathfrak{L}_{q}}
	}\mathrm{d}\omega F\left(t,\omega\right)\right]^2   = \Big[{{C_1}C_\omega^{\frac2\alpha} }\Big]^{-1}\frac{\mathfrak{L}_{{q}}^{\varepsilon-\beta}}{4000\mho(\mathfrak{L}_{q}/2)}.
\end{equation}

Similar with \eqref{Lemma:Growth1:E18:2}, we bound 

\begin{equation}	\label{Lemma:Growth2:E1:15}
	\begin{aligned}
		&	\partial_{t}\left( \int_{\mathbb{R}^{+}}\mathrm{d}\omega F\rho\right) \\
		\  \ge \	&
		\int_{\mathbb{R}^{+}}\mathrm{d}\omega F\partial_{t}\rho \
		+ \  C_1C_\omega^{\frac2\alpha}\iiint_{\mathbb{R}_+^{3}}\mathrm{d}\omega_1\,\mathrm{d}\omega_2\,\mathrm{d}\omega \frac{FF_1F_2}{\mathfrak L_{q}^{\frac2\alpha}} \chi_{\left\{\omega_1\le\omega;\omega_1,\omega \in \Xi_{{i}}^{\mathfrak{h}_{q},\mathfrak{L}_{{q}}}\right\}}\chi_{{A}_{{q}}}(t) \\
		&\times 	\chi_{\left\{ \omega_{2}\leq\frac{\mathfrak L_{{q}} }{10}\right\}}    \Big\{ [-\rho(\omega_{2})+\rho(\omega+\omega_{2}-\omega_{1})] \mho(\mathfrak{L}_{q}/2) \ + \rho(\omega_{2}) \mho(\mathfrak{L}_{q}/2)\Big\}\\
		\  \ge \	&
		\int_{\mathbb{R}^{+}}\mathrm{d}\omega F\partial_{t}\rho \
		+ \  C_1C_\omega^{\frac2\alpha}\iiint_{\mathbb{R}_+^{3}}\mathrm{d}\omega_1\,\mathrm{d}\omega_2\,\mathrm{d}\omega \frac{FF_1F_2\mho(\mathfrak{L}_{q}/2)}{\mathfrak L_{q}^{\frac2\alpha}}\chi_{\left\{ \omega_{2}\leq\frac{\mathfrak L_{q} }{10}\right\}} \\
		&\times 	   \chi_{\left\{\omega_1\le\omega;\omega_1,\omega \in \Xi_{{i}}^{\mathfrak{h}_{q},\mathfrak{L}_{{q}}}\right\}}\chi_{{A}_{{q}}}(t)  \Big\{ [-\rho(\omega_{2})+\rho(\omega+\omega_{2}-\omega_{1})]   \Big\}\\
		& +\   C_1C_\omega^{\frac2\alpha}\iiint_{\mathbb{R}_+^{3}}\mathrm{d}\omega_1\,\mathrm{d}\omega_2\,\mathrm{d}\omega \frac{FF_1F_2}{\mathfrak L_{q}^{\frac2\alpha}}\chi_{\left\{ \omega_{2}\leq\frac{\mathfrak L_{{q}} }{10}\right\}}  \chi_{\left\{\omega_1\le\omega;\omega_1,\omega \in \Xi_{{i}}^{\mathfrak{h}_{q},\mathfrak{L}_{{q}}}\right\}}\chi_{{A}_{{q}}}(t)  \rho(\omega_{2}) \mho(\mathfrak{L}_{q}/2),
\end{aligned}\end{equation}
for a test function $\rho$. By the same argument with \eqref{Lemma:Growth1:E25}, we deduce a contradiction when $q\to\infty$. Therefore, \eqref{Lemma:Growth2:E1:3} holds  and \eqref{Lemma:Growth:3} is a consequence of  \eqref{Lemma:Growth2:1}.

\end{proof}

\subsection{Finite time condensation: The emergence of a delta function at the origin in finite time}

\begin{proposition}
	\label{Propo:FiniteTimeCondensation} Suppose that there exist arbitrary constants $C_{ini}>0,c_{ini}\ge 0$ and a small constant $r_0>0$ such that
	\begin{equation}\label{Lemma:FiniteTimeCondensation:1}
		\begin{aligned}
			\int_{0}^r\mathrm d\omega F(0,\omega)    
			\geq  C_{ini}  {r^{c_{ini}}}, \end{aligned}
	\end{equation}
	for all $r_0>r>0$ and there exists a sufficiently large natural number $N_0$ such that 
	\begin{equation}	\label{Lemma:FiniteTimeCondensation:2}
		\begin{aligned}
			&    \int_{0}^{\mathfrak{L}_{N_0+1}}\mathrm{d}\omega F(0,\omega)\ \ge \   \mathscr{C}_F^*=2\mathscr{C}_F|\mathfrak{L}_{N_0}|^{\varsigma_*}. 
	\end{aligned},\end{equation} 
	where $\varsigma_*$ is defined in  \eqref{Lemma:Growth:3}.
Then the first condensation time $T_0$ defined in \eqref{T0} is finite and can be bounded as
	\begin{equation}\label{Sec:FiniteTimeCondensation:3}
	T_0\ \leq\  {\mathfrak{C}_\mathfrak{S}}2 ^{-N_0\gamma(\varsigma_*)}.
\end{equation} 
\end{proposition}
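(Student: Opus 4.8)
The plan is to play the two ``growth'' statements of the previous subsection off against each other. Lemma \ref{Lemma:GrowthFinal} converts the concentration hypothesis \eqref{Lemma:FiniteTimeCondensation:2} on the initial datum into the assertion that, as long as no condensation has occurred, the ``good'' time set $\mathfrak{S}_{N_0}^T$ at the dyadic scale $\mathfrak{L}_{N_0}=2^{-N_0}$ is \emph{all} of $[0,T]$; on the other hand, the terminal estimate \eqref{Lemma:Growth:3} of Lemma \ref{Lemma:Growth1} says this same set has measure at most $\mathfrak{C}_\mathfrak{S}(\mathfrak{L}_{N_0})^{\gamma(\varsigma_*)}$. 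Since $\gamma(\varsigma_*)>0$, a set cannot be simultaneously an entire interval $[0,T]$ and of measure bounded by a fixed finite number unless $T$ is itself bounded, which is precisely \eqref{Sec:FiniteTimeCondensation:3}.

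First I would fix the constants: let $\varsigma_*$ and $\mathscr{M}^*(\varsigma_*)$ be those produced in \eqref{Lemma:Growth:3}, and, using the freedom to take $N_0$ as large as needed, assume $N_0>\mathscr{M}^*(\varsigma_*)$ (as well as whatever largeness is implicitly required in Lemma \ref{Lemma:GrowthFinal}). With $\varsigma=\varsigma_*$, hypothesis \eqref{Lemma:FiniteTimeCondensation:2} is exactly the hypothesis \eqref{Lemma:GrowthFinal:1} of Lemma \ref{Lemma:GrowthFinal}, so that lemma gives
\[
\mathfrak{S}_{N_0}^T=[0,T],\qquad\text{for every }T\in[0,T_0),
\]
and in particular $\bigl|\mathfrak{S}_{N_0}^T\bigr|=T$ for every such $T$.

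Next I would apply \eqref{Lemma:Growth:3} at $n=N_0$ --- legitimate because $N_0>\mathscr{M}^*(\varsigma_*)$ and $T\in[0,T_0)$ --- to obtain
\[
T=\bigl|\mathfrak{S}_{N_0}^T\bigr|\ \le\ \mathfrak{C}_\mathfrak{S}\bigl(\mathfrak{L}_{N_0}\bigr)^{\gamma(\varsigma_*)}=\mathfrak{C}_\mathfrak{S}\,2^{-N_0\gamma(\varsigma_*)}
\]
for all $T\in[0,T_0)$. The right-hand side is a fixed finite constant depending only on $N_0$ (and, through $\mathfrak{C}_\mathfrak{S}$ and $\gamma$, on $\alpha,\beta,\varepsilon$, hence ultimately on $N_0,\mathscr M,\mathscr E,\alpha$). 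Taking the supremum over $T\in[0,T_0)$ yields $T_0\le\mathfrak{C}_\mathfrak{S}\,2^{-N_0\gamma(\varsigma_*)}<\infty$; in particular $T_0$ cannot equal $+\infty$, for otherwise the displayed inequality would be violated by arbitrarily large $T<T_0$. This is \eqref{Sec:FiniteTimeCondensation:3}.

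The substantive work has already been carried out in Lemmas \ref{Lemma:GrowthFinal} and \ref{Lemma:Growth1}, so no genuine analytic obstacle remains at this stage; the only points demanding care are bookkeeping: verifying that the parameter $\varsigma$ in the concentration hypothesis \eqref{Lemma:FiniteTimeCondensation:2} is precisely the $\varsigma_*$ for which the bound \eqref{Lemma:Growth:3} was established, that ``$N_0$ sufficiently large'' can simultaneously absorb $N_0>\mathscr{M}^*(\varsigma_*)$ and the largeness needed in Lemma \ref{Lemma:GrowthFinal}, and that $\gamma(\varsigma_*)>0$ (which follows from the admissible ranges of $\varepsilon$ and $\varsigma$ fixed before Lemma \ref{Lemma:Growth1}, making both arguments of the minimum defining $\gamma$ strictly positive). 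One should also note that the estimate holds trivially in the degenerate case $T_0=0$.
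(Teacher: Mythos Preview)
Your proposal is correct and follows essentially the same approach as the paper's proof: invoke Lemma~\ref{Lemma:GrowthFinal} to get $\mathfrak{S}_{N_0}^T=[0,T]$ for all $T\in[0,T_0)$, then apply \eqref{Lemma:Growth:3} to bound $|\mathfrak{S}_{N_0}^T|$ and conclude. Your write-up is in fact more careful than the paper's in spelling out the bookkeeping constraints (that $N_0>\mathscr{M}^*(\varsigma_*)$, that $\varsigma=\varsigma_*$, and that $\gamma(\varsigma_*)>0$).
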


\begin{proof}
	By Lemma \ref{Lemma:GrowthFinal}, we have
	\begin{equation}\label{Sec:FiniteTimeCondensation:E1}
		\mathfrak{S}_{N_0}^T= [0,T], 
	\end{equation} 
	for all $T\in[0,T_0)$. By \eqref{Lemma:Growth:3}, we bound
	\begin{equation}\label{Sec:FiniteTimeCondensation:E2}
		\left\vert 	\mathfrak{S}_{N_0}^T\right\vert  \leq{\mathfrak{C}_\mathfrak{S}}\left(
		\mathfrak{L}_{N_0}\right) ^{\gamma(\varsigma_*)},
	\end{equation} 	for all $T\in[0,T_0)$. Therefore
	\begin{equation}\label{Sec:FiniteTimeCondensation:E2}
	T_0\ \leq\ {\mathfrak{C}_\mathfrak{S}}2 ^{-N_0\gamma(\varsigma_*)}.
\end{equation} 
\end{proof}
\section{Proof of Theorem \ref{Theorem2}}
The proof of (i) follows from Proposition \ref{Propo:FiniteTimeCondensation}. 

Next, we will prove (ii). We use the same argument as \eqref{Lemma:GrowthFinal:E2} and bound
	\begin{equation}	\label{Theorem2:E1}
	\begin{aligned}
	&	m\Re \int_{[0,m\Re	)}\mathrm{d}\omega F(t,\omega)\  \ge \     \int_{\mathbb{R}^{+}}\mathrm{d}\omega F(t,\omega)\left[ \left(m\Re	-\omega\right)_+\right] \\
		\  \ge \  &   \  \int_{\mathbb{R}^{+}}\mathrm{d}\omega F(0,\omega)\left[ \left( m\Re	-\omega\right)_+ \right]\
		\  \ge \    \  \int_{0}^{m\Re	}\mathrm{d}\omega F(0,\omega)\left[  \left( m\Re	-\omega\right)_+ \right] \\
		\  \ge \   &  \  (m-1)\Re	\int_{0}^{\Re	}\mathrm{d}\omega F(0,\omega)	\ = \    (m-1) \Re	 \mathscr{M}_o,
\end{aligned}\end{equation}
yielding
\begin{equation}	\label{Theorem2:E2}
	\begin{aligned}
	 \int_{[0,m\Re	)}\mathrm{d}\omega F(t,\omega)\  \ge \  &       \frac{(m-1)   \mathscr{M}_o}{m}.
\end{aligned}\end{equation}

Applying \eqref{Propo:Collision} for $R=m\Re$, $\nu=\theta$, $h=\Re m/\mathscr N$, we find
\begin{equation}\label{Theorem2:E3}
	\begin{aligned}
	&	\frac{{C}_0(m\Re	)^{2+\frac2\alpha}(	\mathscr M +\mathscr E)}{(\Re m/\mathscr N)^{2} \mho(\Re m/\mathscr N)\theta^4}
		\ \ge\ &\int_{\digamma^*_{\mathscr N,\theta, m\Re}}\mathrm{d}t  \left( \int_{[0,m\Re)}\mathrm{d}\omega F(t,\omega)\right)^3	\\ \ge\ &   \mathcal M(\digamma^*_{\mathscr N,\theta, m\Re})  \left( \frac{(m-1)   \mathscr{M}_o}{m}\right)^3,
	\end{aligned}
\end{equation}
yielding
\begin{equation}\label{Theorem2:E4}
	\begin{aligned}
		\frac{{C}_0(m\Re	)^{2+\frac2\alpha}(	\mathscr M +\mathscr E)}{(\Re m/\mathscr N)^{2} \mho(\Re m/\mathscr N) \theta^4\left( \frac{(m-1)   \mathscr{M}_o}{m}\right)^3}
		\ \ge\ &   \mathcal M(\digamma^*_{\mathscr N,\theta, m\Re}).
	\end{aligned}
\end{equation}
\bibliographystyle{plain}

\bibliography{WaveTurbulence}

\end{document}